\newcommand{\B}{\mathbf B}
\newcommand{\PP}{\mathbf P}
\newcommand{\E}{\mathrm E}
\newcommand{\X}{\mathbf X}
\newcommand{\x}{\mathbf x}
\newcommand{\Y}{\mathbf Y}
\newcommand{\y}{\mathbf y}
\newcommand{\K}{\mathbf K}
\newcommand{\T}{^\mathsf{T}}
\newcommand{\bmid}{\mathrel{\bigg|}}
\newcommand{\indep}{\;\, \rule[0em]{.03em}{.67em} \hspace{-.25em}
	\rule[0em]{.65em}{.03em} \hspace{-.25em}
	\rule[0em]{.03em}{.67em}\;\,}
\newlength\myindent
\newtheorem{condition}{Condition}
\newtheorem{assump}{Assumption}
\newtheorem{assumpB}{Assumption}
\newtheorem{theorem}{Theorem}
\newtheorem{lemma}{Lemma}
\DeclareMathOperator*{\argmin}{arg\,min}
\DeclareMathOperator*{\argmax}{arg\,max}
\newenvironment{manualtheorem}[1]{%
  \manualtheoreminner
}{\endmanualtheoreminner}
\theoremstyle{remark}
\newenvironment{customthm}[1]
  {\innercustomthm}
  {\endinnercustomthm}
\begin{document}

\begin{frontmatter}
\title{Semiparametric Efficient Dimension Reduction in multivariate regression with an Inner Envelope}

\runtitle{Semiparametric Efficient Inner Envelope}

\begin{aug}
\author[A]{\fnms{Linquan} \snm{Ma}},
\author[A]{\fnms{Hyunseung} \snm{Kang}}
\and
\author[B]{\fnms{Lan} \snm{Liu}}
\address[A]{University of Wisconsin-Madison}
\address[B]{University of Minnesota at Twin Cities}
\end{aug}

\begin{abstract}
  Recently, Su and Cook \cite{su2012inner} proposed a dimension reduction technique called the inner envelope which can be substantially more efficient than the original envelope \cite{cook2010envelope} or existing dimension reduction techniques for multivariate regression. However, their technique relied on a linear model with normally distributed error, which may be violated in practice. In this work, we propose a semiparametric variant of the inner envelope that does not rely on the linear model nor the normality assumption. We show that our proposal leads to globally and locally efficient estimators of the inner envelope spaces. We also present a computationally tractable algorithm to estimate the inner envelope. Our simulations and real data analysis show that our method is both robust and efficient compared to existing dimension reduction methods in a diverse array of settings.
\end{abstract}

\begin{keyword}[class=MSC2020]
\kwd[Primary]{62B05}
\kwd{62H12}
\kwd[; secondary]{62G05}
\end{keyword}

\begin{keyword}
\kwd{dimension reduction}
\kwd{envelope model}
\kwd{estimating equations}
\kwd{semiparametric efficiency}
\end{keyword}
\end{frontmatter}

\section{Introduction}
\subsection{Background} \label{sec: background}

In 2010, Cook et al. \cite{cook2010envelope} proposed a new approach to dimension reduction in multivariate linear regression models called the envelope approach, with the  aim of more efficiently estimating the underlying regression coefficients. Briefly, for each individual $i=1,\ldots,n$, consider a multivariate linear regression model where we regress $r$-dimensional, multivariate responses $\Y_i \in \mathbb R^{r\times 1}$ onto $p$ regressors $\X_i \in \mathbb R^{p\times 1}$, i.e. 
\begin{equation}\label{eq: lm}
\Y_i = \bm\beta \X_i + \bm\varepsilon_i, \quad{} \bm \varepsilon_i \sim\mathcal N(\mathbf{0}, \bm\Sigma). 
\end{equation}
Here, the term $\bm \varepsilon_i \in \mathbb R^{r \times 1}$ is a random vector that is independent from $\X_i$ and follows a multivariate normal distribution with mean $\mathbf{0}$ and unknown covariance $\bm\Sigma\in\mathbb R^{r\times r}$. The parameter $\bm\beta\in\mathbb R^{r\times p}$ is the unknown matrix regression coefficients of interest. A natural estimator of $\bm \beta$ is the ordinary least square (OLS) estimator, which takes the form $\hat{\bm\beta}\T_{\rm OLS } =  (\X\T\X)^{-1}\X\T\Y$ where $\X = (\X_1,\ldots,\X_n)\T\in\mathbb R^{n\times p}$ and $\Y = (\Y_{1},\ldots,\Y_n)\T\in\mathbb R^{n\times r}$. However, when the number of responses $r$ is 
large, Cook and co-authors proposed to improve on the OLS estimator by considering an ``envelope'' subspace in the space of the responses $\mathbb R^r$. Specifically, if there is a linear combination of the responses that is an ancillary for $\bm\beta$, \cite{cook2010envelope}  proposed a new estimator of $\bm\beta$, called the envelope estimator, and showed that the new envelope estimator is at least as efficient as the OLS estimator; see Section \ref{sec: env} for details.

Since the introduction of the envelope estimator, there has been an explosion of work on using envelope-based dimension reduction methods for multivariate regression models \cite{cook2010envelope, su2011partial, su2012inner, cook2013scaled, cook2013envelopes, cook2015envelopes, cook2015simultaneous, cook2015foundations, li2017parsimonious,shi2020mixed,cook2021envelopes,rekabdarkolaee2020new,MaLiYa2021}. Our paper focuses on one popular extension and improvement of the original work, the inner envelope estimator of Su and Cook \cite{su2012inner}. Specifically, 
the inner envelope estimator supposes that there is an extra subspace that is dependent on some responses, but becomes independent after conditioning on the regressors; see Section \ref{sec: env} for the exact definitions. Critically, the inner envelope estimator of $\bm \beta$ can achieve efficiency gains even when the envelope estimator offers no gains.

Despite its superior performance, the inner envelope estimator relies on parametric assumptions (e.g. model \eqref{eq: lm}) to guarantee its efficiency gains over the envelope estimator or the OLS estimator. In general, most envelope-based approaches rely on a linear relationship between the responses and the regressors \cite{cook2010envelope,su2012inner,cook2013scaled,li2017parsimonious,shi2020mixed, cook2021envelopes} and/or the joint normality of the error terms \cite{cook2010envelope,cook2013scaled,li2017parsimonious,cook2021envelopes}, even though the envelope-based assumptions are not stated with respect to a particular parametric model; see, for example, our high-level descriptions of the envelope or the inner envelope above or the exact definitions in Section \ref{sec: env}. Often, parametric assumptions are imposed out of theoretical convenience, especially to leverage the useful properties of multivariate normal distributions concerning independence from its covariance matrix or the variational independence between the parameters for the mean and the covariance matrix.

More importantly, in practice, these parametric assumptions are often violated or infeasiable for certain data types. For example, if the responses are binary, the existing inner envelope estimator cannot be applied because, unlike the multivariate normal model in \eqref{eq: lm}, $\bm \Sigma$ and $\bm \beta$ are generally not variationally independent for binary responses; so far, there is no inner envelope estimator for logistic or multinomial multivariate regression. A similar problem occurs when some responses are continuous while others are binary. Or, if the responses form non-linear relationships with the regressors, say the regressors have higher-order polynomial terms, or the regression errors are heteroskedastic, the existing inner envelope estimator may be inconsistent and/or inefficient. In Section \ref{sec:problem}, we show the incosistency of the existing inner envelope estimator when model \eqref{eq: lm} is violated.

While some progress has been made to relax these assumptions in envelope-based methods \cite{cook2015foundations, zhang2018functional}, to the best of our knowledge, there is no work on relaxing the parametric assumptions underlying the inner envelope estimator. This paper aims to resolve this through a semiparametric generalization of the inner envelope.

\subsection{Our Contributions}
At a high level, our semiparametric generalization 
relies on efficiently estimating a more fundamental quantity discussed in the original inner envelope paper called the inner envelope space \cite{su2012inner}. In particular, in the original work, estimation of $\bm \beta$ was done in two steps where the first step involved estimating the inner envelope space and the second step involved a projection of the responses and the regressors onto the estimated inner envelope space; see Section \ref{sec: env} for details. Our main insight in the paper is to realize that (a) the inner envelope space in the original work can actually be defined without assuming a parametric model \eqref{eq: lm} or even a particular type of responses (i.e. continuous, binary, or even a mixture of them), and (b) the inner envelope space can be uniquely parametrized as finite dimensional, semi-orthogonal basis matrices. These insights serve as the basis to develop locally and globally semiparametric efficient estimators, including a locally robust estimator that remains consistent and asymptotically normal even if some of the underlying models are mis-specified.
 
In addition to being robust to parametric modelling assumptions, our semiparametric generalization has two additional byproducts. First, we show a computationally simple procedure to evaluate all regular and asymptotically linear (RAL) estimators based on the generalized method of moments (GMMs). Our procedure is dramatically simpler than the original approach in \cite{su2012inner} based on solving a non-convex objective function over a Grassmannian, a non-convex set. Second, since our semiparametric generalization does not rely on a parametric model between the responses and the regressors, we show how to use the efficiently estimated inner envelope space to improve predictions from supervised machine learning models, such as XGBoost \cite{chen2015xgboost} and random forest \cite{breiman2001random}. In particular, we show that directly using the original responses $\Y_i$ inside a supervised machine learning method may lead to poor, out-of-sample, predictive performance compared to using the dimension-reduced responses generated from our approach. 

Our work is related to some recent works on semiparametric relaxation of sufficient dimension reduction (SDR) in traditional, univariate (i.e. single response) linear regression models
\cite{ma2012semiparametric, ma2013efficient}. Typically, SDR methods such as sliced inverse regression (SIR) \cite{li1991sliced}, sliced average variance estimation (SAVE) \cite{cook1991sliced} and directional regression (DR) \cite{li2007directional} require the univariate response to be linear and have homoskedastic variance. \cite{ma2012semiparametric} relaxed these parametric assumptions by directly estimating the dimension-reduced subspace of the regressors. In contrast, our work deals with the case when both the regressors and the responses are multivariate. In particular, our semiparametric relaxation of the inner envelope  requires estimating two subspaces, one of which summarizes the relationship between $\X$ and $\Y$ and another nested subspace that contains purely relevant information; see Section \ref{Sec: semi} for the exact definitions. The former subspace, which we denote as  $\mathcal{S}_1 \cup \mathcal{S}_2$ below, can be thought of as the generalization of the dimension-reduced subspace of regressors estimated in \cite{ma2012semiparametric} when there are multivariate responses. But, the latter subspace, denote as $\mathcal{S}_3$ below, is new and arises because of the multivariate responses with a dimension-reducing envelope structure.

The outline of the paper is as follows.  Section \ref{Sec: pre} reviews the original envelope \cite{cook2010envelope} and the inner envelope \cite{su2012inner}. The section also illustrates the problem of the existing inner envelope when the parametric modeling assumption is violated. Section \ref{Sec: semi} formalizes our semiparametric generalization of the inner envelope. Sections \ref{Sec: effi} derive the semiparametric efficiency bound of the inner envelope space and show estimators that can achieve this bound. Section \ref{sec: comp} provides implementation details for our estimators. Sections \ref{Sec: simu} and \ref{Sec: real} numerically demonstrate the efficiency gains of the proposed estimators in simulation and a study on the relationship between glycemic control and various risk factors for cardiovascular disease. Section 9 contains the proofs of the key results in the paper.

\section{Preliminaries}\label{Sec: pre}
\subsection{Notation}
For any matrix $\mathbf A\in \mathbb R^{r\times p}$, let $\mathcal A = \text{span}(\mathbf A)$ denote the subspace of $\mathbb R^r$ spanned by the columns of $\mathbf A$ and let $\mathcal A^\perp$ denote the orthogonal complement of this subspace. Let $\mathbf P_{\mathcal A}$ and $\mathbf Q_{\mathcal A} = \mathbf I - \mathbf P_{\mathcal A}$ denote the orthogonal projection matrix that projects a vector $\mathbf v\in\mathbb R^{r}$ onto $\mathcal A$ and $\mathcal A^\perp$, respectively. Let dim($\mathcal{A}$) denote the dimension of the subspace $\mathcal{A}$. Let $\text{vec}(\mathbf A)$ denote the vector formed by stacking columns of the matrix $\mathbf A$ and $\|\mathbf A\|_F = \sqrt{\text{trace}(\mathbf A\T\mathbf A)}$ denote its Frobenius norm. Finally, we say a square matrix $\mathbf A \in \mathbb R^{r \times r}$ is an orthogonal matrix if $\mathbf A\T\mathbf A = \mathbf I_r$. We say a non-square matrix $\mathbf A \in \mathbb R^{r \times p}$ is a semi-orthogonal matrix if $\mathbf A\T \mathbf A = \mathbf I_{p}$. 

For two matrices $\mathbf A\in\mathbb R^{r\times p}$ and $\mathbf B\in\mathbb R^{s \times q}$, let $\mathbf A\otimes\mathbf B\in\mathbb R^{rs\times pq}$ denote their Kronecker product. Relatedly, for any vector $\mathbf{v} \in \mathbb R^r$, let $\mathbf v^{\otimes 2} = \mathbf v \mathbf v\T$. For any two subspaces $\mathcal{S}_1$ and $\mathcal{S}_2$ of $\mathbb R^r$, let $\mathcal S_1\oplus\mathcal S_2 =  \{\x_1+\x_2:\x_1\in\mathcal S_1,\x_2\in\mathcal S_2\}$ denote their direct sum. For a function $\mathbf f(\mathbf x,\y):\mathbb R^{p_1+p_2}\rightarrow\mathbb R^{q}$, let $\partial\mathbf f/\partial\x\T\in\mathbb R^{p_1\times q}$ denote the derivative of $\mathbf f = (f_1,\ldots,f_q)$ with respect of $\x$ where the $(i,j)$-th entry is denoted as $\partial f_i/\partial x_j$. 

Throughout the paper, we use capital letters $\X$, $\Y$ to denote random variables, and lower case letters $\x$, $\y$ to denote observed values. We use $\X\sim\Y$ to denote that the two random variables $\X$ and $\Y$ have the same distributions; also, with a slight abuse of notation, we use the same notation $\sim$ to denote that a random variable follows a known family of distributions, say a multivariate normal distribution. We also use $\X\indep\Y$ to denote that $\X$ and $\Y$ are independent  and $\X \indep \Y \mid \mathbf{Z}$ to denote that $\X$ and $\Y$ are independent conditional on the random variable $\mathbf{Z}$.

\subsection{Review: Parametric Envelope and Inner Envelope Estimators} \label{sec: env}
We review the parametric envelope estimator of \cite{cook2010envelope} and the parametric inner envelope estimator of \cite{su2012inner}. 
The review is not aimed to be comprehensive; rather, the review is geared towards introducing some important concepts, notably the subspaces $\mathcal{S}_1,\mathcal{S}_2, \mathcal{S}_3$, that we will use in our semiparametric generalization. Also,  without loss of generality, we  will assume both $\X$ and $\Y$ are mean $\bm0$ random vectors.  

For the parametric envelope estimator, consider a subspace $\mathcal S\subseteq \mathbb R^r$ that satisfies the following conditions in model \eqref{eq: lm}:
	\begin{condition}\label{assum: env1}
		$\mathbf Q_{\mathcal S}\Y_i\mid \X_i\sim \mathbf Q_{\mathcal S}\Y_i$, 
	\end{condition}
	\begin{condition}\label{assum: env2}
		$\mathbf Q_{\mathcal S}\Y_i\indep \mathbf P_{\mathcal S}\Y_i\mid\X_i$.
	\end{condition} 
\noindent The smallest subspace $\mathcal{S}$ that satisfies the above conditions is called the envelope subspace. 
Broadly speaking, Conditions \ref{assum: env1} and \ref{assum: env2} state that $\mathbf Q_{\mathcal S}\Y_i$ is irrelevant in estimating $\bm\beta$. Consequently, we can regress $\mathbf P_{\mathcal S}\Y_i$ on $\X_i$ to obtain a more efficient, envelope estimator of $\bm\beta$, especially compared to the naive OLS estimator that regresses the entire responses $\Y_i$ on $\X_i$; see \cite{cook2010envelope} for the exact arguments. 

	
However, if the subspace satisfying Conditions \ref{assum: env1} and \ref{assum: env2} happens to be the entire space of the multivariate responses (i.e. $\mathbb R^r$), the envelope estimator has no efficiency gains over the OLS estimator. In such settings, \cite{su2012inner} proposed the inner envelope estimator of $\bm \beta$, which is defined as follows. 
Suppose the responses  can be decomposed as the sum of projections onto three orthogonal subspaces $\mathcal S_1$, $\mathcal S_2$, $\mathcal S_3 \subseteq \mathbb R^r,\mathcal S_1 \oplus \mathcal S_2 \oplus \mathcal S_3 = \mathbb R^r$, with dimensions dim$(\mathcal S_1) = u$, dim$(\mathcal S_2) = p-u$, and dim$(\mathcal S_3) = r - p$. The three subspaces $\mathcal{S}_{1}, \mathcal{S}_2, \mathcal{S}_3$ satisfy the following conditions:
  \begin{condition}\label{assum: inner1}
    $\mathbf P_{\mathcal S_3}\Y_i\mid\X_i\sim\mathbf P_{\mathcal S_3}\Y_i,$
  \end{condition}
  \begin{condition}\label{assum: inner2}
	$\mathbf P_{\mathcal S_1} \Y_i\indep \mathbf Q_{\mathcal S_1}\Y_i\mid\X_i.$
  \end{condition}
  \noindent
The largest subspace $\mathcal S_1$ satisfying Conditions \ref{assum: inner1} and \ref{assum: inner2} is called the inner envelope subspace.
Unlike Conditions \ref{assum: env1} and \ref{assum: env2}, the inner envelope estimator presents a new subspace $\mathcal S_2$ where $\mathbf P_{\mathcal S_2}\Y_i$ is correlated with both $\mathbf P_{\mathcal S_1}\Y_i$ and $\mathbf P_{\mathcal S_3}\Y_i$, but  $\mathbf P_{\mathcal S_1}\Y_i$ is independent with $\mathbf P_{\mathcal S_2}\Y_i$ and $\mathbf P_{\mathcal S_3}\Y_i$ given $\X$. If $\mathcal S_2$ is the null space, Condition \ref{assum: inner2} reduces to Condition \ref{assum: env2} and the inner envelope subspace coincides with the envelope subspace, i.e. $\mathcal{S}_1 = \mathcal{S}$. Critically, Conditions \ref{assum: inner1} and \ref{assum: inner2} do not necessarily rely on a parametric model between $\Y_i$ and $\X_i$, a fact that we exploit in our semiparametric generalization of the inner envelope below.

Under Conditions \ref{assum: inner1}, \ref{assum: inner2}, and model \eqref{eq: lm}, \cite{su2012inner} proposed a two-step, inner envelope estimator of $\bm\beta$ where the first step involves estimating the aforementioned subspaces $\mathcal{S}_1, \mathcal{S}_2, \mathcal{S}_3$ and the second step involves estimating the relationship between the predictors and the projected response.  Specifically, in the first step, let $\bm\Gamma\in\mathbb R^{r\times u}$ and $\bm\Gamma_0\in\mathbb R^{r\times (r-u)}$ denote the semi-orthogonal basis matrices for subspaces $\mathcal S_1$ and $\mathcal S_1^\perp$, respectively (i.e. $\text{span}(\bm\Gamma) = \mathcal S_1$ and $\text{span}(\bm\Gamma_0) = \mathcal S_2\oplus\mathcal S_3$). Also, let $\B\in\mathbb R^{(r-u)\times (p-u)}$ be a semi-orthogonal matrix such that $\bm\Gamma_0\mathbf B$ is a semi-orthogonal basis matrix for $\mathcal S_2$ (i.e. $\text{span}(\bm{\Gamma_0} \B) = \mathcal S_2$). Intuitively, the matrix $\B$ divides the subspace spanned by $\bm \Gamma_0$ (i.e. $\mathcal{S}_1^\perp$) into the subspace specific to $\mathcal{S}_2$ and if $\B_0$ is the orthogonal complement of $\B$, $\mathcal{S}_3 $ can be rewritten as $\mathcal{S}_3 = \text{span}(\bm\Gamma_0\mathbf B_0)$. Based on these orthogonal relationships, estimating $\bm\Gamma$ and $\B$ is sufficient to estimate all the subspaces $\mathcal{S}_1,\mathcal{S}_2$, and $\mathcal{S}_3$. To estimate $\bm \Gamma$, \cite{su2012inner} proposed the following estimator: 
\begin{align}\label{eq: grassmanian2}
&\hat{\bm\Gamma} = \argmin_{\mathbf G}
\{\log|\mathbf G\T\mathbf S_{\text{res}}\mathbf G| + \log|\mathbf G\T\mathbf S_{\text{res}}^{-1}\mathbf G| + \sum_{i=p-u+1}^{r-u}\log (1+\lambda_{(i)})\}\\
	& \text{such that}\hspace{1mm}  \mathbf G\T\mathbf G = \mathbf I_u. \nonumber 
  \end{align}
Here, $\mathbf G_0$ is the orthogonal complement of $\mathbf G$ and $\lambda_{(i)}$, $i=1,\ldots,n$ are the ordered descending eigenvalues of $(\mathbf G_0\T\mathbf S_{\text{res}}\mathbf G_0)^{-1/2}(\mathbf G_0\T\mathbf S_{\text{fit}}\mathbf G_0)(\mathbf G_0\T\mathbf S_{\text{res}}\mathbf G_0)^{-1/2}$ with $\mathbf S_{\text{fit}}$ and $\mathbf S_{\text{res}}$ being the sample covariances of the fitted and residual vectors, respectively, from the OLS regression of $\Y$ on $\X$. Using the orthogonal relationship, we can also arrive at the estimator $\hat{\bm \Gamma}_0$, which is the orthogonal complement of the estimated $\hat{\bm \Gamma}$. The estimator of $\B$, which we denote as $\hat{\B}$, is similar and for brevity, the details are relegated to  Section A.1 of the Supplements. 
  
Once the basis matrices are estimated, \cite{su2012inner} proposed to estimate $\bm \beta$ as follows: 
$$\hat{\bm\beta}_{\rm IE} = \hat{\bm\Gamma}\hat{\bm\zeta}_1 + \hat{\bm\Gamma}_0\hat{\mathbf B}\hat{\bm\zeta}_2. $$ 
Here, $\hat{\bm\zeta_1},\hat{ \bm\zeta_2}$ are the estimated coordinates of the  projections onto the estimated subspaces $\mathcal{S}_1$ and $\mathcal{S}_2$. Under the linear model with normal errors in \eqref{eq: lm}, \cite{su2012inner} showed that the estimator $\hat{\bm\beta}_{\rm IE}$ is not only consistent and asymptotically normal, but also has equal or smaller asymptotic variance than $\hat{\bm \beta}_{\rm OLS}$.

 Finally, we make two additional remarks.  First, the original inner envelope implicitly assumed the dimension of the subspace $\mathcal S_3$ is fixed at $r-p$ and  thus, $\bm\beta$ is required to have full column rank. In our semiparametric generalization below, we relax this assumption and allow $\mathcal S_3$ to have a dimension other than $r -p$. 
  This relaxation introduces one minor condition on the maximum size of the dimensions to ensure identifability and is discussed in Section \ref{sec: targetpar}. Second, similar to other works on semiparametric generalizations of dimension-reduction approaches \cite{ma2012semiparametric,ma2013efficient,ma2014estimation}, we assume $u$ and $d$ are known for our theoretical discussions. But, Section \ref{Sec: dim} discusses how to choose this dimension from data. 

\subsection{Problem: Inconsistency of the Parametric Inner Envelope When Model \eqref{eq: lm} is Wrong} \label{sec:problem}

In this section, we explore the consequences of using the original inner envelope approach based on model \eqref{eq: lm} when that model no longer holds. This exercise intends to mimic a practitioner who may naively conduct dimension reduction using the original inner envelope, but the underlying data generating model may deviate from model \eqref{eq: lm}. Our goal in this section is to not only show that this practice can lead to misleading results, but also to provide a concrete rationale for our semiparametric dimension reduction approach we propose in Section \ref{Sec: semi}.

To start, consider the case where model \eqref{eq: lm} is correct. Specifically, let there be two regressors generated from independent standard normals and three response variables.  The responses are generated from the model $\Y = (X_1+\varepsilon_1, X_2+\varepsilon_2,2X_2+\varepsilon_3)\T$ where $\varepsilon_1, \varepsilon_2,\varepsilon_3$ are independent, standard normals. Some simple algebra reveals that the subspaces $\mathcal S_1, \mathcal S_2$ and $\mathcal S_3$ satisfying the inner envelope Conditions \ref{assum: inner1} and \ref{assum: inner2} are spanned by $(1,0,0)\T$, $(0,1,2)\T$ and $(0,2,-1)\T$, respectively; note that the envelope subspace $\mathcal{S}$ is the entire space $\mathbb R^3$ and thus, this is an example where the envelope offers no gain compared to the inner envelope. Also, if $\mathbf{S}_{\rm res}$ and $\mathbf{S}_{\rm fit}$ in \eqref{eq: grassmanian2} are replaced with the true population covariances, the resulting estimator $\hat{\bm{\Gamma}}$ is a consistent estimator of the basis matrix that spans $\mathcal{S}_1$.


Now, suppose model \eqref{eq: lm} is incorrect and the true model has non-linear regressors, say $\Y =(X_1^2+\varepsilon_1, X_2+\varepsilon_2,2X_2+\varepsilon_3)\T$. The true subspaces $\mathcal S_1$, $\mathcal S_2$ and $\mathcal S_3$ that satisfy Conditions \ref{assum: inner1} and \ref{assum: inner2} are the same as those from the previous paragraph. 
However, the estimated basis matrix from the population version of \eqref{eq: grassmanian2} 
is not $(1,0,0)\T$; that is, there is another estimate $\hat{\bm\Gamma}$ that is not equal to $(1,0,0)\T$ and that achieves the global minimum in \eqref{eq: grassmanian2}. Thus, 
$\hat{\bm\Gamma}$ is an inconsistent estimator of the true inner envelope subspace. Similarly, if the true model for $\Y$ is $\Y = (X_1^2+\varepsilon_1, X_2+\varepsilon_2,X_2\varepsilon_3)\T$ where the third response has heteroskedastic variance, the subspace $\mathcal S_1$ is the same as the previous examples, but its basis matrix is still not the global optimum of \eqref{eq: grassmanian2}.

The main, high-level, reason for the inconsistency lies in the original estimator's strong reliance on parametric modeling assumptions. Specifically, equation \eqref{eq: grassmanian2} only utilizes second order moments between $\X$ and $\Y$ because under the linear model with normal errors in \eqref{eq: lm}, second order moments are ``sufficient'' to identify the underlying subspaces. However, when $Y_1$ contains non-linear terms or when $Y_3$ has heteroskedastic variance, the responses are no longer normal and thus, higher-order moments are necessary to identify the underlying dimension-reduced subspaces that satisfy the inner envelope conditions.
Our proposed semiparametric generalization below aims to remove these dependencies on a particular parametric model and arrive at a more robust and efficient approach to inner envelope.



\section{A Semiparametric Approach to Inner Envelope}\label{Sec: semi}
	
\subsection{Target parameters and assumptions} \label{sec: targetpar}
At a high level, our semiparametric approach focuses on a ``model-free'' estimation the fundamental quantities underlying the original inner envelope estimator, the three orthogonal spaces introduced in the prior section, $\mathcal S_1$, $\mathcal S_2$, and $\mathcal S_3$ with dimensions $u$, $d$ and $r-u-d$. However, a general challenge in estimating subspaces is that the subspace requires an identifiable parametrization. For example, a basis matrix $\mathbf A\in\mathbb R^{k\times \tau}$ for a $\tau$-dimensional subspace $\mathcal A$ in $\mathbb R^k$ is not unique even if the subspace $\mathcal A$ is unique since for any full rank matrix $\mathbf U\in\mathbb R^{\tau\times\tau}$, $\mathbf A\mathbf U$ also spans the same subspace $\mathcal A$. Additionally, the elements of a basis matrix are often variationally dependent, which can complicate the estimation procedure. 

Our solution to remedy these issues are as follows. First, we propose a new, representation of the inner envelope space with the following lemma; see \cite{ma2013efficient} for a related result.
\begin{lemma}\label{lemma: repar} 
Consider a $\tau$-dimensional subspace $\mathcal{A}$ in $\mathbb R^k$ where $\tau \in \{1,\ldots,k\}$.  Then, there exists an one-to-one mapping $\psi_1$ such that
$\mathbf a = \psi_1(\mathcal A) \in\mathbb R^{\tau (k-\tau)}$ is a vector with all variational independent elements. Also, there exists an one-to-one mapping $\psi_2$ such that $\mathbf A = \psi_2(\mathcal A) \in \mathbb{R}^{k\times \tau}$ is a semi-orthogonal matrix. 
The construction of $\psi_1$ and $\psi_2$ are given in Section \ref{sec: proof}.
\end{lemma}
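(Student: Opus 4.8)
The plan is to construct the two maps explicitly and verify the one-to-one property for each. The key structural fact I will exploit is that a $\tau$-dimensional subspace $\mathcal{A} \subseteq \mathbb{R}^k$ that is in \emph{general position} relative to the coordinate axes — meaning the projection onto the first $\tau$ coordinates is an isomorphism onto $\mathbb{R}^\tau$ — has a unique basis matrix of the block form $\begin{pmatrix} \mathbf{I}_\tau \\ \mathbf{C} \end{pmatrix}$ with $\mathbf{C} \in \mathbb{R}^{(k-\tau)\times\tau}$ arbitrary. This gives the natural candidate for $\psi_1$: set $\psi_1(\mathcal{A}) = \mathrm{vec}(\mathbf{C})$, a vector in $\mathbb{R}^{\tau(k-\tau)}$ whose entries range freely over $\mathbb{R}^{\tau(k-\tau)}$ with no constraints, hence are variationally independent. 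For $\psi_2$, I would then apply the (thin) QR or Gram--Schmidt orthonormalization to $\begin{pmatrix} \mathbf{I}_\tau \\ \mathbf{C} \end{pmatrix}$ to obtain a semi-orthogonal matrix; to make this genuinely a \emph{function} of $\mathcal{A}$ (and not just of a choice of basis) I would pin down the orientation, e.g. by requiring the $\tau \times \tau$ upper block of $\mathbf{A}$ to be symmetric positive definite, which is exactly the polar-decomposition normalization and yields $\mathbf{A} = \begin{pmatrix} \mathbf{I}_\tau \\ \mathbf{C} \end{pmatrix}(\mathbf{I}_\tau + \mathbf{C}\T\mathbf{C})^{-1/2}$. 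Both maps are then manifestly injective because $\mathbf{C}$ — equivalently $\mathcal{A}$ — is recoverable: from $\mathbf{a}$ directly, and from $\mathbf{A}$ by reading off $\mathbf{A}(\text{top block})^{-1}$ minus identity, or simply $\mathrm{span}(\mathbf{A}) = \mathcal{A}$.

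The main obstacle is that the block-form parametrization above only covers the open dense chart of the Grassmannian where the leading $\tau\times\tau$ minor is invertible; an arbitrary subspace $\mathcal{A}$ need not lie in this chart. The fix is to first choose a coordinate permutation (equivalently, a permutation matrix $\mathbf{\Pi}$ depending on $\mathcal{A}$) so that $\mathbf{\Pi}\mathcal{A}$ does lie in the standard chart — such a $\mathbf{\Pi}$ always exists since some $\tau$ columns of any basis are linearly independent — and then define $\psi_1, \psi_2$ relative to that chart, carrying $\mathbf{\Pi}$ along as bookkeeping. One must then argue that the whole construction (choice of $\mathbf{\Pi}$ plus the within-chart coordinates) remains one-to-one; the cleanest way is to fix, once and for all, a rule that picks the \emph{lexicographically smallest} index set $\{j_1 < \dots < j_\tau\}$ for which the corresponding submatrix of a basis is nonsingular — this rule depends only on $\mathcal{A}$, so the pair (index set, free coordinates $\mathbf{C}$) is a well-defined function of $\mathcal{A}$, and it is invertible by the same recovery argument as before. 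I will remark that this is the standard atlas of the Grassmannian and the point of the lemma is merely to extract a \emph{single} global, measurable parametrization suitable for defining estimating equations, not a smooth one.

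Concretely, the steps in order are: (i) show that for any $\mathcal{A}$ there is a well-defined smallest index set $J$ of size $\tau$ with the submatrix on rows $J$ of any (hence every) basis nonsingular, and that $\mathcal{A}$ then has a unique basis that is the identity on rows $J$; (ii) define $\psi_1(\mathcal{A})$ to encode $J$ together with $\mathrm{vec}(\mathbf{C})$, the off-$J$ rows of that canonical basis, and observe the $\mathbf{C}$-entries are unconstrained, hence variationally independent, while $J$ ranges over a finite set; (iii) define $\psi_2(\mathcal{A})$ by polar-normalizing the canonical basis within the chart, i.e. $\mathbf{A} = \mathbf{E}_J(\mathbf{I}_\tau + \mathbf{C}\T\mathbf{C})^{-1/2}$ where $\mathbf{E}_J$ is the canonical basis, so $\mathbf{A}\T\mathbf{A} = \mathbf{I}_\tau$; (iv) verify injectivity of both maps by exhibiting the inverse, namely $\mathcal{A} = \mathrm{span}$ of the reconstructed basis, and checking the reconstruction is unambiguous. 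If the paper is content to state the lemma only for subspaces lying in the standard chart — which in the application will be arranged by construction since $\mathcal{S}_1$, $\mathcal{S}_2$, $\mathcal{S}_3$ are identified only up to the modeling setup — then step (i) collapses to the trivial chart and the proof is just steps (ii)–(iv) with $\mathbf{\Pi} = \mathbf{I}$. I expect the authors' proof in Section~\ref{sec: proof} to take this simpler route, defining $\psi_1$ via the block coordinate $\mathbf{C}$ and $\psi_2$ via orthonormalization, and I would structure mine the same way while flagging the chart issue in a sentence.
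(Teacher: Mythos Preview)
Your proposal is correct and matches the paper's approach: the paper defines $\psi_1$ via $\mathrm{vec}(\mathbf{A}_l\mathbf{A}_u^{-1})$, the lower block of the canonical basis $\begin{pmatrix}\mathbf{I}_\tau\\ \mathbf{C}\end{pmatrix}$, and $\psi_2$ by applying Gram--Schmidt to that basis (rather than your polar normalization, a cosmetic difference). As you anticipated, the paper takes the simpler route and handles the chart issue with a bare ``without loss of generality, $\mathbf{A}_u$ is invertible'' rather than your explicit lexicographic index-set rule.
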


With Lemma \ref{lemma: repar}, we can uniquely parametrize the inner envelope spaces with semi-orthogonal matrices $\bm\Gamma = \psi_2(\mathcal S_1)$ and $\bm\Gamma_0\B = \psi_2(\mathcal S_2)$; note that since $\mathcal{S}_3$ is orthogonal to $\mathcal{S}_1$ and $\mathcal{S}_2$, we can uniquely represent $\mathcal{S}_3$ as $\bm \Gamma_0 \B_0$. Relatedly, given the orthogonality conditions, we can focus on estimating semi-orthogonal matrices $\bm\Gamma$ and $\B$. 

Second, let $\bm \Gamma$ and $\mathbf B$ be the unique parametrizations of the subspaces from Lemma \ref{lemma: repar}. Using these parameters, we can restate Conditions \ref{assum: inner1} and \ref{assum: inner2} as 
\begin{manualtheorem}{$3^*$}\label{eq: semi_inner1}
	$\mathbf B_0\T\bm\Gamma_0\T\Y_i\indep \X_i$,
\end{manualtheorem} 
\begin{manualtheorem}{$4^*$}\label{eq: semi_inner2}
	$\bm\Gamma\T\Y_i\indep \bm\Gamma_0\T\Y_i\mid \X_i$
\end{manualtheorem}
\noindent We additionally require $\mathcal S_3$ to be the largest subspace that satisfies Condition \ref{eq: semi_inner1} and $\mathcal S_1$ to be the largest space that satisfies Condition \ref{eq: semi_inner2} given $\mathcal S_3$. As mentioned in Section \ref{sec: env}, this additional requirement generalizes the implicitly assumed constraint in the original inner envelope where the subspace $\mathcal{S}_3$ has a pre-determined, fixed dimension of $r-p$.

Third, the elements in $\bm{\Gamma}$ and $\B$ are not variationally independent, which makes them difficult to work with for theory and  for some well-known optimization algorithms. Therefore, we collapse the matrices $\bm{\Gamma}$ and $\B$ into variationally independent vectors by using Lemma \ref{lemma: repar} again.
Specifically, we can reparametrized the two matrices as $\bm\gamma = \psi_1(\psi_2^{-1}(\bm{\Gamma}))$, $\mathbf b = \psi_1(\psi_2^{-1}(\mathbf B))$ and $\bm\theta = (\bm\gamma\T,\mathbf b\T)\T\in\mathbb R^{q\times 1}$ with $q = (r-u)\times u + (r-u-d)\times d$.  Also, $\bm \theta$ uniquely parametrizes the subspaces span($\bm\Gamma$) and span($\B$), which in turn uniquely parametrizes the subspaces $\mathcal S_1$, $\mathcal S_2$, and $\mathcal S_3$. Given the uniqueness of the vector representation $\bm \theta$ and the matrix representation $\bm \Gamma, \B$, we use the two representations interchangeability throughout the text.

\subsection{Generalized method of moments estimators}
In this section, we propose a simple estimator of the subspaces parametrized by $\bm \theta$ based on the generalized method of moments (GMM).
To do this, consider the function $\mathbf f(\Y_i,\X_i ; \bm \theta)$ that satisfies the following condition:
\begin{align}\label{eq: GMM1}
	&\E\{ \mathbf f(\Y_i, \X_i;\bm\theta) \} = \bm0, \ \text{ where} \\
	& \mathbf f(\Y_i, \X_i;\bm\theta) = \{\mathbf g(\bm\B_0\T\bm\Gamma_0\T\Y_i) - \E \{\mathbf g(\bm\B_0\T\bm\Gamma_0\T\Y_i)\}\}\{\mathbf h(\bm\Gamma\T\Y_i,\X_i) - \E\{ \mathbf h(\bm\Gamma\T\Y_i,\X_i)\}\}\nonumber.
\end{align}
Here, $\mathbf g:\mathbb R^{r-u-d}\rightarrow \mathbb R^{d_g\times 1}$ and $\mathbf h:\mathbb R^{u+p}\rightarrow\mathbb R^{1\times d_h}$, $d_g d_h \geq q$, are continuously differentiable functions where the covariance matrix $\text{Var}\{ \mathbf f(\Y_i, \X_i;\bm\theta) \}$ exists. 
There are many functions $\mathbf g$ and $\mathbf h$ that satisfy \eqref{eq: GMM1}. For example, if $\mathbf g$, and $\mathbf h$ are identity functions, i.e. $\mathbf g(\bm\B_0\T\bm\Gamma_0\T\Y_i) = \bm\B_0\T\bm\Gamma_0\T\Y_i$ and $\mathbf h(\bm\Gamma\T\Y_i,\X_i) = \{(\bm\Gamma\T\Y_i)\T , \X_i\T\}$, then $\E(\mathbf f) = \bm0$ because
\begin{equation*}
	\begin{aligned}
		 &\E[\{\bm\B_0\T\bm\Gamma_0\T\Y_i - \E(\bm\B_0\T\bm\Gamma_0\T\Y_i)\}\{\bm\Gamma\T\Y_i - \E(\bm\Gamma\T\Y_i)\}\T] \\ =& \text{Cov}(\bm\B_0\T\bm\Gamma_0\T\Y_i,\bm\Gamma\T\Y_i) =\bm 0,\\
		 &\E[\{\bm\B_0\T\bm\Gamma_0\T\Y_i - \E (\bm\B_0\T\bm\Gamma_0\T\Y_i)\}\{\X_i - \E (\X_i)\}\T ]\\ =& \text{Cov}(\bm\B_0\T\bm\Gamma_0\T\Y_i, \X_i) = \bm 0,
	\end{aligned}
\end{equation*}
where the latter equalities use the fact that Conditions \ref{eq: semi_inner1} and \ref{eq: semi_inner2} imply $\B_0\T\bm\Gamma_0\T\Y_i\indep (\bm\Gamma\T\Y_i,\X_i)$. 

Theorem \ref{thm: GMM} shows that solving the empirical counterparts to \eqref{eq: GMM1} will lead to $\sqrt n$-consistent and asymptotically normal estimators of $\bm\theta$.
\begin{theorem}\label{thm: GMM}
Suppose $\mathbf f(\Y_i,\X_i;\bm\theta)$ is a continuously differentiable function that satisfy equation \eqref{eq: GMM1}. Consider the following GMM estimator:
	\[
	\hat{\bm\theta} = \argmin_{\bm\theta}\|\sum_{i=1}^{N}\mathbf f(\Y_i, \X_i;\bm\theta)\|_2^2.
	\]
	Under the regularity conditions (S1)--(S6) in Section A of the Supplements, we have
	
		$$\sqrt n (\hat{\bm\theta} - \bm\theta)\xrightarrow{d} \mathcal N\big\{\bm 0, (\mathbf C_1\T\mathbf C_1)^{-1}\mathbf C_1\T\mathbf D_1\mathbf C_1(\mathbf C_1\T\mathbf C_1)^{-1}\big\},$$
	where $\mathbf C_1 = \E \{\partial\rm{vec}(\mathbf f)/\partial\bm\theta\T\}$ and $\mathbf D_1 = {\rm Var}\{ \mathbf f(\Y_i, \X_i;\bm\theta) \}$.
\end{theorem}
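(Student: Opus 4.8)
The plan is to recognize $\hat{\bm\theta}$ as a standard just-identified-or-over-identified $Z$-estimator (GMM estimator) and apply the classical asymptotic theory for such estimators, specializing the sandwich formula to the case where the moment condition holds at the true parameter with identity weight matrix. Write $\bar{\mathbf F}_n(\bm\theta) = n^{-1}\sum_{i=1}^n \mathbf f(\Y_i,\X_i;\bm\theta)$ and let $\mathbf M_n(\bm\theta) = \|\sum_{i=1}^n \mathbf f(\Y_i,\X_i;\bm\theta)\|_2^2 = n^2 \|\bar{\mathbf F}_n(\bm\theta)\|_2^2$. First I would establish consistency of $\hat{\bm\theta}$: by a uniform law of large numbers (justified by the regularity conditions (S1)--(S6), which presumably supply compactness of the parameter space, continuity, and a dominating integrable envelope), $\bar{\mathbf F}_n(\bm\theta) \to \E\{\mathbf f(\Y_i,\X_i;\bm\theta)\}$ uniformly, and the identification requirement (that $\bm\theta$ is the unique zero of the population moment, which follows from $\mathcal S_3$ and $\mathcal S_1$ being the \emph{largest} subspaces satisfying Conditions \ref{eq: semi_inner1} and \ref{eq: semi_inner2}, together with the uniqueness of the $\psi_1,\psi_2$ parametrization in Lemma \ref{lemma: repar}) forces $\hat{\bm\theta}\xrightarrow{p}\bm\theta$.

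Next I would carry out the usual one-step Taylor (mean-value) expansion. Since $\hat{\bm\theta}$ minimizes $\mathbf M_n$ and lies in the interior eventually, the first-order condition $\partial \mathbf M_n(\hat{\bm\theta})/\partial\bm\theta = \bm0$ gives, after dividing by $2n^2$,
\[
\Big\{\frac{\partial \bar{\mathbf F}_n(\hat{\bm\theta})}{\partial\bm\theta\T}\Big\}\T \bar{\mathbf F}_n(\hat{\bm\theta}) = \bm0.
\]
Expanding $\bar{\mathbf F}_n(\hat{\bm\theta}) = \bar{\mathbf F}_n(\bm\theta) + \{\partial\bar{\mathbf F}_n(\tilde{\bm\theta})/\partial\bm\theta\T\}(\hat{\bm\theta}-\bm\theta)$ for some $\tilde{\bm\theta}$ on the segment between $\hat{\bm\theta}$ and $\bm\theta$, substituting, and solving yields
\[
\sqrt n(\hat{\bm\theta}-\bm\theta) = -\Big[\hat{\mathbf C}\T \hat{\mathbf C}^{*}\Big]^{-1}\hat{\mathbf C}\T \sqrt n\,\bar{\mathbf F}_n(\bm\theta),
\]
where $\hat{\mathbf C} = \partial\bar{\mathbf F}_n(\hat{\bm\theta})/\partial\bm\theta\T$ and $\hat{\mathbf C}^{*} = \partial\bar{\mathbf F}_n(\tilde{\bm\theta})/\partial\bm\theta\T$. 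By continuity of $\partial\mathbf f/\partial\bm\theta\T$, the dominating-envelope condition, another uniform LLN, and the consistency of $\hat{\bm\theta}$ and $\tilde{\bm\theta}$, both $\hat{\mathbf C}$ and $\hat{\mathbf C}^{*}$ converge in probability to $\mathbf C_1 = \E\{\partial\mathrm{vec}(\mathbf f)/\partial\bm\theta\T\}$. Meanwhile, since $\E\{\mathbf f(\Y_i,\X_i;\bm\theta)\} = \bm0$ and $\mathrm{Var}\{\mathbf f\} = \mathbf D_1$ exists, the ordinary multivariate CLT gives $\sqrt n\,\bar{\mathbf F}_n(\bm\theta)\xrightarrow{d}\mathcal N(\bm0,\mathbf D_1)$. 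Combining via Slutsky's theorem delivers
\[
\sqrt n(\hat{\bm\theta}-\bm\theta)\xrightarrow{d}\mathcal N\big\{\bm0,\ (\mathbf C_1\T\mathbf C_1)^{-1}\mathbf C_1\T\mathbf D_1\mathbf C_1(\mathbf C_1\T\mathbf C_1)^{-1}\big\},
\]
which is exactly the claimed limit.

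The main obstacle I anticipate is not in the CLT/Slutsky machinery, which is routine, but in two more delicate points. First, justifying that $\hat{\bm\theta}$ is eventually interior and that the minimization can be replaced by the first-order condition requires care, because the parameter lives in a (possibly constrained) manifold and $\psi_1$ only yields local, variationally independent coordinates; I would lean on Lemma \ref{lemma: repar} to work in the $\bm\gamma,\mathbf b$ chart and on regularity conditions (S1)--(S6) to guarantee the true value is interior to the chart and that $\mathbf C_1$ has full column rank $q$ (so $\mathbf C_1\T\mathbf C_1$ is invertible). Second, verifying identification — that the population objective $\|\E\{\mathbf f(\cdot;\bm\theta')\}\|_2^2$ has a \emph{unique} minimizer at the true $\bm\theta$ — is the real substance: one must argue that if $\E\{\mathbf f(\Y_i,\X_i;\bm\theta')\} = \bm0$ then the subspaces induced by $\bm\theta'$ also satisfy Conditions \ref{eq: semi_inner1}--\ref{eq: semi_inner2}, and then invoke the ``largest subspace'' maximality together with the one-to-one property of the parametrization to conclude $\bm\theta' = \bm\theta$; the regularity conditions likely encode the precise sense in which $\mathbf g,\mathbf h$ are rich enough for this implication to hold (e.g., that the chosen moments separate the relevant distributions).
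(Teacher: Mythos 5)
Your proposal is correct and follows essentially the same route as the paper, whose proof simply states regularity conditions (S1)--(S6) and invokes the standard GMM asymptotic normality result (Theorem 3.2 of Hall, 2004) --- i.e., exactly the consistency-plus-first-order-condition-plus-mean-value-expansion argument you spell out. The one point you flag as ``the real substance,'' identification, is in fact assumed directly as condition (S1) (the population moment is nonzero for every $\bm\theta\neq\bm\theta_0$ in a compact $\bm\Theta$, with $\bm\theta_0$ interior and $\mathbf C_1$ of full column rank by (S5)), so no separate argument from the maximality of the subspaces is required.
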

 The regularity conditions (S1)-(S6) concern bounded moments, differentiability of $\mathbf f$, and identifiability of the population GMM equation. These are standard conditions for GMM estimators to achieve consistency and asymptotic normality; see Chapter 2.2.3 in \cite{newey1994large} for a textbook exposition. 
 
Compared to existing methods to estimate the inner envelope (or the envelope) \cite{cook2010envelope,cook2016note,cook2015foundations,cook2015simultaneous}, the GMM approach does not explicitly rely a parametric model between $\mathbf Y$ and $\mathbf X$ nor a parametric, distributional assumption on $\bm\varepsilon$ to achieve consistency and asymptotic normality. Also, solving the estimating equations in Theorem \ref{thm: GMM} is far simpler than solving a non-convex problem, say in \eqref{eq: grassmanian2}, and can be done with existing statistical packages for GMMs \cite{chausse2010computing}.

One limitation of Theorem \ref{thm: GMM} is that it does not specify which functions $\mathbf g$ and $\mathbf h$ to use. That is, while all functions satisfying the conditions in Theorem \ref{thm: GMM} will lead to an esitmator that is consistent and asymptotically normal, some functions may lead to smaller asymptotic variance than others.
The next section explores this question formally by deriving the semiparametric efficient lower bound for $\bm \theta$ and showing an estimator that can achieve this lower bound.


\section{Semiparametric Efficiency}\label{Sec: effi}
While the GMM estimator is simple and computationally tractable compared to existing methods, it may not lead to the most efficient estimator of $\bm \theta$ under Conditions \ref{eq: semi_inner1} and \ref{eq: semi_inner2}. 
In this section, we propose a semiparametric efficient estimator of $\bm\theta$ by directly deriving the semiparametric efficient score under Conditions \ref{eq: semi_inner1} and \ref{eq: semi_inner2}.

\subsection{Efficient Score}\label{Sec: effi_score} 
We start by characterizing the orthogonal nuisance tangent space; the derivations are detailed in the Supplements. 
Under Conditions \ref{eq: semi_inner1} and \ref{eq: semi_inner2},
the joint density of $(\Y, \X)$ can be decomposed as
\begin{equation}\label{likelihood}
\begin{aligned}
\eta(\Y, \X;\bm\theta) 
&=\eta_1(\bm\Gamma\T\Y\mid \X;\bm\theta)\eta_2(\mathbf B\T\bm\Gamma_0\T\Y \mid \mathbf B_0\T\bm\Gamma_0\T\Y, \X;\bm\theta)\eta_3(\mathbf B_0\T\bm\Gamma_0\T\Y;\bm\theta)\eta_4(\X),
\end{aligned}
\end{equation} 
where $\eta_1$ is the conditional probability density function (pdf) of $\bm\Gamma\T\Y$ given $\X$, 
$\eta_2$ is the conditional pdf of $\mathbf B\T\bm\Gamma_0\T\Y$ given $(\mathbf B_0\T\bm\Gamma_0\T\Y, \X)$, $\eta_3$ and $\eta_4$ are the pdfs of $\mathbf B_0\T\bm\Gamma_0\T\Y$ and $\X$, respectively. For simplicity, we use $\eta_{1,2,3}$ to denote the collection of density functions $\eta_1$, $\eta_2$ and $\eta_3$. 

The orthogonal nuisance tangent space of $\bm\theta$ in \eqref{likelihood} is 
\begin{equation*}
	\begin{aligned}
	\Lambda^\perp = &\{\mathbf f(\Y, \X)\in\mathbb R^{q}: \E(\mathbf f\mid \bm\Gamma\T\Y,\X) \text{ is a function of }\X \text{; }\E(\mathbf f\mid\bm\Gamma_0\T\Y, \X)\\
	&\qquad\text{ is a function of } \mathbf B_0\T\bm\Gamma_0\T\Y, \X\text{; }\E(\mathbf f\mid \mathbf B_0\T\bm\Gamma_0\T\Y) = \bm 0\text{; }\E(\mathbf f\mid\X) = \bm 0\}.
	\end{aligned}
\end{equation*}
We remark that the function $\mathbf f(\Y_i,\X_i;\bm\theta)$ inside the GMM estimator and defined in \eqref{eq: GMM1} satisfies $\mathbf M\mathbf f(\Y_i,\X_i;\bm\theta)\in\Lambda^\perp$ for all $\mathbf M\in\mathbb R^{q\times d_gd_h}$.

Let $\bm\Delta_1(\mathbf Y, \mathbf X) = \Y - \E(\Y\mid\X)$, and $\bm\Delta_2( \B_0\T\bm\Gamma_0\T\Y, \X) = \E(\mathbf P_{\bm\Gamma_0\B}\Y\mid \B_0\T\bm\Gamma_0\T\Y,\X) - \E(\mathbf P_{\bm\Gamma_0\B}\Y\mid \B_0\T\bm\Gamma_0\T\Y)$.  Given the orthogonal nuisance tangent space $\Lambda^\perp$, the efficient score of $\bm \theta$ has the form $S_{\rm eff} = (S_{\rm eff, \bm\gamma}, S_{\rm eff, \mathbf b})$ with
{\small\begin{equation*}
	\begin{aligned}
		S_{\rm eff,\bm\gamma} &= \text{vec}\T\left\{\mathbf Q_{\bm\Gamma}\bm\Delta_1\dfrac{\partial \log\eta_1}{\partial (\bm\Gamma\T\Y)\T}\right\}\dfrac{\partial \text{vec}(\bm\Gamma)}{\partial\bm\gamma\T} + \text{vec}\T\bigg\{(\mathbf P_{\bm\Gamma}\Y + \bm\Delta_2)\dfrac{\partial \log\eta_3}{\partial (\mathbf B_0\T\bm\Gamma_0\T\Y)\T}\B_0\T\bigg\}\dfrac{\partial \text{vec}(\bm\Gamma_0)}{\partial\bm\gamma\T}\\
		&\quad+ \text{vec}\T\bigg[\mathbf P_{\bm\Gamma}\bm\Delta_1\bigg\{\dfrac{\partial \log\eta_2}{\partial (\mathbf B\T\bm\Gamma_0\T\Y)\T}\B\T + \dfrac{\partial \log\eta_2}{\partial (\mathbf B_0\T\bm\Gamma_0\T\Y)\T}\B_0\T\bigg\}\bigg]\dfrac{\partial \text{vec}(\bm\Gamma_0)}{\partial\bm\gamma\T},\\
		S_{\rm eff,\mathbf b} &= \text{vec}\T\bigg\{\bm\Gamma_0\T\bm\Delta_{2}\dfrac{\partial \log\eta_3}{\partial (\mathbf B_0\T\bm\Gamma_0\T\Y)\T}\bigg\}\dfrac{\partial \text{vec}(\B_0)}{\partial \mathbf b\T}.
	\end{aligned}
\end{equation*}
}
Interestingly, the efficient score of $S_{\rm eff,\mathbf b}$ only relies on the density of $\eta_3$. This implies that once  $\bm \gamma$ is known, we only need the marginal density $\eta_3$ to estimate $\mathbf b$ and the additional information contained in $\eta_2$ does not improve the efficiency of $\B$. In contrast, the efficient score of $\bm \gamma$ involves all three densities $\eta_{1,2,3}$.

If all the nuisance functions, specifically the partial derivatives of the log likelihoods, i.e. ${\partial\log\eta_1}/{\partial (\bm\Gamma\T\Y)\T}$, ${\partial\log\eta_2}/{\partial(\B\T\bm\Gamma_0\T\Y)\T}$, ${\partial\log\eta_2}/{\partial(\B_0\T\bm\Gamma_0\T\Y)\T}$, ${\partial\log\eta_3}/{\partial(\B_0\T\bm\Gamma_0\T\Y)\T}$, and $\bm\Delta_1, \bm\Delta_2$ are known, we can obtain an efficient estimator of $\bm\theta$ by solving  the score equation $\E\{S_{\rm eff}(\Y, \X;\bm\theta)\} = \bm 0$ for $\bm\theta$. 
However, in practice, they are unknown and must be estimated. The next two subsections discuss different approaches of estimating these nuisance functions, specifically the aforementioned partial derivatives,
which will lead to either a globally or locally efficient estimator of $\bm \theta$. 

    \subsection{Globally efficient estimator} \label{sec: global}
    The globally efficient estimator uses kernel-based, nonparametric estimators of $\eta_{1,2,3}$, $\bm \Delta_1$, and $\bm \Delta_2$ as plug-ins estimators and solves for $\bm \theta$. Specifically, let $K_h$ be a kernel function that satisfies Assumption \ref{assump: kernel}, say a uniform kernel or an Epanechnikov kernel. The density $\eta_1$ is estimated by a kernel density estimator of the form 
    \begin{equation*}
    	\hat\eta_1(\bm\Gamma^{\mathsf T}\y,\x) =  \dfrac{\sum_{i=1}^nK_h(\bm\Gamma^{\mathsf T}\Y_i-\bm\Gamma\T\y)K_h(\X_i-\x)}{\sum_{i=1}^nK_h(\X_i-\x)}.
    \end{equation*}
    Based on the estimator $\hat\eta_1(\bm\Gamma^{\mathsf T}\y,\x)$, ${\partial\log\hat\eta_1}/{\partial (\bm\Gamma\T\Y)\T}$ can be calculated in closed form, i.e.
    \begin{equation}\label{kdde}
    	\frac{\partial \log\hat\eta_1}{\partial (\bm\Gamma\T\y)\T} = \dfrac{\hat\eta_1'(\bm\Gamma^{\mathsf T}\y,\x)}{\hat\eta_1(\bm\Gamma^{\mathsf T}\y,\x)} =  \dfrac{\sum_{i=1}^nK_h'(\bm\Gamma^{\mathsf T}\Y_i-\bm\Gamma\T\y)K_h(\X_i-\x)}{\sum_{i=1}^nK_h(\bm\Gamma^{\mathsf T}\Y_i-\bm\Gamma\T\y)K_h(\X_i-\x)},
    \end{equation}
    where $K_h'$ is the first order derivative of $K_h$. Estimation of the other two densities $\eta_2$ and $\eta_3$ and their partial derivates are similar to $\eta_1$ and stated in Section B.1 of the Supplements. 
    
    Similarly, to estimate $\bm\Delta_1$,
    we use nonparametric kernel regressions of the form
    \begin{align}\label{eq: kernel_regression}
    	\hat{\bm\Delta}_1(\y,\x) =& \y - \frac{\sum_{i=1}^N \Y_iK_{h}(\X_i-\x)}{\sum_{i=1}^NK_{h}(\X_i-\x)}.
    \end{align} The estimation of $\bm\Delta_2$ is similar and given in Section B.1 of the Supplements. For additional details on kernel-based nonparametric estimators, see \cite{hayfield2008nonparametric}.

Let $\hat{S}_{\rm eff}(\Y_i, \X_i, \hat\eta_{1,2,3}, \hat{\bm \Delta}_1, \hat{\bm \Delta}_2; \bm\theta)$ denote the sample version of the efficient score where we replace the unknown nuisance functions with their nonparametrically estimated counterparts. The globally efficient estimator is  $\hat{\bm \theta}$ that solves the following estimating equation:
\begin{equation} \label{eq:Shat_global}
	\frac{1}{n}\sum_{i = 1}^n \hat{S}_{\rm eff}(\Y_i, \X_i, \hat\eta_{1,2,3}, \hat{\bm \Delta}_1, \hat{\bm \Delta}_2; \hat{\bm\theta}) = \bm 0.
	\end{equation}
Theorem \ref{thm: global} shows that under Assumptions \ref{assump: true density}-\ref{assump: kernel}, the estimator $\hat{\bm\theta}$ obtained from equation \eqref{eq:Shat_global} is $\sqrt{n}$-consistent and efficient. 
\begin{assump}\label{assump: true density}
	(The true conditional densities $\eta_{1,2,3}$) The true conditional densities $\eta_{1,2,3}$ are bounded away from 0 and $\infty$. The third order derivatives of $\log\eta_{1,2,3}$ around $\bm\theta$ are locally Lipschitz-continuous. Also, there exists a compact set $\bm\Theta$ which contains the true parameter value $\bm\theta$.
\end{assump}
\begin{assump}\label{assump: global_iden}
    (Identifiability of $S_{\rm eff}$) The solution to the estimating equation $\E\{{S}_{\rm eff}(\Y, \X, \eta_{1,2,3}, {\bm \Delta}_1, {\bm \Delta}_2; {\bm\theta})\}=\bm 0$ is unique.
\end{assump}
\begin{assump}\label{assump: rate}
	(Estimation of $\hat\eta_{1,2,3}$) The estimators $\hat\eta_{1,2,3}$ are obtained through a kernel density estimator with a common bandwidth $h$. The bandwidth satisfies $nh^8\rightarrow0$ and $nh^{2(1+s)}\rightarrow\infty$ as $n\rightarrow\infty$, where $s = \max(r+p-u, p+u)$.
\end{assump}
\begin{assump} \label{assump: finite}
	(\textit{Smoothness of $\bm\Delta_1$ and $\bm\Delta_2$}) The second order derivatives of $\E(\Y\mid\X)$ and $\E(\B\T\bm\Gamma_0\T\Y\mid\B_0\T\bm\Gamma_0\T\Y,\X)$ with respect to $\X$ and $(\B_0\T\bm\Gamma_0\T\Y,\X)$ are continuous for all $\bm\theta\in\bm\Theta$. 
\end{assump}
\begin{assump}\label{assump: kernel}
	(\textit{Kernel function}) Consider a univariate kernel $K_h(x)$, $0 \neq \int x^2K_h(x) < \infty$, which is bounded, symmetric, has a compact support on $[-h, h]$ and has a bounded second derivative. The multivariate kernel $K_h(\mathbf x)$ for a $d$-dimensional $\mathbf x$ has the form $K_h(\mathbf x) = \prod_{j=1}^d K_h(x_j)$ for $\mathbf x = (x_1,\ldots,x_d)\T$. 
\end{assump}
\begin{theorem}\label{thm: global}
	Under Assumptions \ref{assump: true density}--\ref{assump: kernel} and Conditions \ref{eq: semi_inner1}--\ref{eq: semi_inner2}, the estimator $\hat{\bm\theta}$ obtained from solving \eqref{eq:Shat_global} achieves the semiparametric efficiency bound $
		\mathcal V_{\bm\theta} = \E(S_{\rm eff}S_{\rm eff}\T)^{-1}
	$, i.e. as $n \rightarrow \infty$, $\hat{\bm\theta}$ satisfies
	$$\sqrt n(\hat{\bm\theta} - \bm\theta)\xrightarrow{d}\mathcal N\big(\bm 0, \mathcal V_{\bm\theta}). $$
\end{theorem}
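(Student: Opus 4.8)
The plan is to show that the plug-in estimator $\hat{\bm\theta}$ solving \eqref{eq:Shat_global} is asymptotically equivalent to the infeasible estimator that solves the true efficient-score equation $\frac1n\sum_i S_{\rm eff}(\Y_i,\X_i,\eta_{1,2,3},\bm\Delta_1,\bm\Delta_2;\bm\theta)=\bm0$, and then invoke the standard theory of asymptotically linear estimators: since $S_{\rm eff}$ is the efficient score, the infeasible estimator has influence function $\mathcal V_{\bm\theta}\,S_{\rm eff}$ and is therefore regular and efficient with asymptotic variance $\mathcal V_{\bm\theta}=\E(S_{\rm eff}S_{\rm eff}\T)^{-1}$. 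So the entire content of the theorem reduces to controlling the error introduced by replacing $\eta_{1,2,3},\bm\Delta_1,\bm\Delta_2$ by their kernel estimates.

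First I would set up a uniform expansion. Write $\hat S_{\rm eff}(\cdot;\bm\theta)=S_{\rm eff}(\cdot;\bm\theta)+[\hat S_{\rm eff}-S_{\rm eff}](\cdot;\bm\theta)$ and do a first-order Taylor expansion of $\frac1n\sum_i \hat S_{\rm eff}(\Y_i,\X_i;\hat{\bm\theta})=\bm0$ around the truth $\bm\theta$. Using Assumption \ref{assump: global_iden} (identifiability) to get consistency of $\hat{\bm\theta}$ via a standard argmin/Z-estimator argument, and Assumption \ref{assump: true density} (smoothness, compactness, Lipschitz third derivatives) to justify the expansion and the convergence of the derivative-of-score matrix to $-\E(S_{\rm eff}S_{\rm eff}\T)$, the problem becomes showing
\begin{equation*}
\frac{1}{\sqrt n}\sum_{i=1}^n \big\{\hat S_{\rm eff}(\Y_i,\X_i,\hat\eta_{1,2,3},\hat{\bm\Delta}_1,\hat{\bm\Delta}_2;\bm\theta) - S_{\rm eff}(\Y_i,\X_i,\eta_{1,2,3},\bm\Delta_1,\bm\Delta_2;\bm\theta)\big\} = o_p(1).
\end{equation*}
I would decompose this difference into separate pieces, one for each nuisance function ($\partial\log\eta_1$, $\partial\log\eta_2$ in its two arguments, $\partial\log\eta_3$, and the two conditional-mean corrections $\bm\Delta_1,\bm\Delta_2$), and handle each by the same template: a linearization in the nuisance estimation error, plus a $U$-statistic / empirical-process bound on the leave-in-sample bias and variance of the kernel estimators. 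The key structural fact that makes the leading terms vanish is that $S_{\rm eff}$ lies in the orthogonal nuisance tangent space $\Lambda^\perp$ — this is precisely the Neyman orthogonality / double-robustness property, so the first-order (drift) term of each nuisance perturbation has conditional mean zero against the relevant sigma-field (this is why $\bm\Delta_1,\bm\Delta_2$ and the centered log-derivatives enter the way they do in the displayed formulas for $S_{\rm eff,\bm\gamma}$ and $S_{\rm eff,\mathbf b}$). Consequently only second-order remainder terms survive, and these are bounded by products of kernel estimation rates.

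The main obstacle is the rate bookkeeping in the preceding paragraph, and this is exactly what Assumption \ref{assump: rate} is calibrated for. The kernel density estimates of $\eta_1,\eta_2,\eta_3$ involve conditioning sets of dimension up to $s=\max(r+p-u,p+u)$; a standard bias/variance analysis gives a bias of order $h^2$ and a stochastic fluctuation of order $(nh^{s})^{-1/2}$ (and one extra derivative for the score, i.e. $\partial\log\eta$, costs one more power of $h$ in the denominator), while the second-order remainder in the Neyman-orthogonal expansion is of order (bias${}^2$ + variance) $\asymp h^4 + (nh^{2(1+s)})^{-1}$. The condition $nh^8\to0$ kills the $\sqrt n$-scaled squared bias ($\sqrt n\, h^4\to0$) and $nh^{2(1+s)}\to\infty$ kills the $\sqrt n$-scaled variance term; Assumptions \ref{assump: finite} (smoothness of $\bm\Delta_1,\bm\Delta_2$) and \ref{assump: kernel} (compactly supported, symmetric, $C^2$, second-moment kernel) supply the Taylor expansions and the $\int xK_h=0$, $\int x^2K_h<\infty$ moment identities needed to name those orders. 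I would also need to establish uniform-in-$\bm\theta$ versions of these bounds over the compact $\bm\Theta$ (so the expansion around $\hat{\bm\theta}$ is legitimate), which is routine empirical-process machinery under the stated Lipschitz and boundedness conditions but is where most of the technical care goes. Once the $o_p(1)$ display is in hand, combining it with the Taylor expansion and the central limit theorem for $\frac1{\sqrt n}\sum_i S_{\rm eff}$ yields $\sqrt n(\hat{\bm\theta}-\bm\theta)\xrightarrow{d}\mathcal N(\bm0,\mathcal V_{\bm\theta})$, and efficiency is immediate because the influence function is $\mathcal V_{\bm\theta}S_{\rm eff}$ with $S_{\rm eff}$ the efficient score.
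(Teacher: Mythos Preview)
Your proposal is correct and follows essentially the same route as the paper: reduce to showing $n^{-1/2}\sum_i\{\hat S_{\rm eff}-S_{\rm eff}\}(\Y_i,\X_i;\bm\theta_0)=o_p(1)$, decompose term by term, and control each piece by U-statistic and kernel-bias bounds (packaged in the paper as Lemma~5), with the bandwidth conditions in Assumption~\ref{assump: rate} playing exactly the role you describe. Your Neyman-orthogonality framing is a slightly more conceptual packaging of the same calculation the paper carries out via an explicit Hoeffding decomposition, but the underlying mechanics and rate bookkeeping are identical.
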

 \noindent Assumptions \ref{assump: true density}--\ref{assump: kernel} are used to achieve consistency of kernel-based nonparametric estimators at sufficiently fast rates. These assumptions are not new and have been used in past works; see Chapter 1 in \cite{li2007nonparametric} for one textbook example.  Also, under the original inner envelope framework where the linear model in \eqref{eq: lm} is the true model, Assumptions \ref{assump: true density}, \ref{assump: global_iden}, and \ref{assump: kernel} concerning the data generating model would hold \cite{su2012inner}.

\subsection{Local efficiency and a robust score $S_{\rm eff}^{*}$} \label{sec:local_eff}
In some settings, investigators may have working models of the densities $\eta_{1,2,3}$, denoted as  $\eta_{1,2,3}^*$. 
For example, suppose the investigator imposes the working models for $\eta_{1,2,3}^*$ based on the multivariate normal model in \eqref{eq: lm}. 
Then, under \eqref{eq: lm}, the densities of $\eta_{1,2,3}^*$ follow multivariate normal distributions
 \begin{align*}
\eta_1^*(\bm\Gamma\T\Y_i,\X_i) &\sim\mathcal N(\bm\Gamma\T\Y_i - \bm\zeta_1\T\X_i, \bm\Omega), \\
\eta_2^*(\bm\Gamma_0\T\Y_i,\X_i) &\sim\mathcal N(\B\T\bm\Gamma_0\T\Y_i - \bm\zeta_2\T\X_i-\bm\mu_2\T\B_0\T\bm\Gamma_0\T\Y_i, \bm\Sigma_2), \\
\eta_3^*(\B_0\T\bm\Gamma_0\T\Y_i) &\sim\mathcal N(\B_0\T\bm\Gamma_0\T\Y_i, \B_0\T\bm\Omega_0\B_0),
\end{align*}
where $\bm\mu_2 = (\B_0\T\bm\Omega_0\B_0)^{-1}\B_0\T\bm\Omega_0\B$ and $\bm\Sigma_2 = \B\T\bm\Omega_0^{\frac{1}{2}}\mathbf P_{\bm\Omega_0^{{1}/{2}}\B}\bm\Omega_0^{\frac{1}{2}}\B$; see \cite{su2012inner} for details. Note that the terms $\bm\zeta_1\in\mathbb R^{u\times p}$ and $\bm\zeta_2\in\mathbb R^{(p-u)\times p}$ are the coordinates of $\bm\beta$ under the basis matrices $\bm\Gamma$ and $\bm\Gamma_0\B$, respectively, and the terms $\bm\Omega$ and $\bm\Omega_0$ are the covariance matrices of $\bm\Gamma\T\Y_i$ and $\bm\Gamma\T_0\Y_i$ respectively.  Also, based on the form of $\eta_{1,2,3}^*$, their partial derivatives in $S_{\rm eff}$ can be written in closed-form as
{\small\begin{align*}
    &\frac{\partial \log\eta_1^*}{\partial (\bm\Gamma\T\Y_i)\T} = -(\bm\Gamma\T\Y_i - \bm\zeta_1\T\X_i)\T\bm\Omega^{-1},\\
    &\frac{\partial \log\eta_2^*}{\partial (\B\T\bm\Gamma_0\T\Y_i)\T} = -(\B\T\bm\Gamma_0\T\Y_i - \bm\zeta_2\T\X_i-\bm\mu_2\T\B_0\T\bm\Gamma_0\T\Y_i)\T\bm\Sigma_2^{-1},\\
    &\frac{\partial \log\eta_2^*}{\partial (\B_0\T\bm\Gamma_0\T\Y_i)\T} = (\B\T\bm\Gamma_0\T\Y_i - \bm\zeta_2\T\X_i-\bm\mu_2\T\B_0\T\bm\Gamma_0\T\Y_i)\T\bm\Sigma_2^{-1}\bm\mu_2\T,\\
    &\frac{\partial \log\eta_3^*}{\partial (\B_0\T\bm\Gamma_0\T\Y_i)\T} = -(\B_0\T\bm\Gamma_0\T\Y_i)\T(\B_0\T\bm\Omega_0\B_0)^{-1}.
\end{align*}}
Then, under the multivariate normal working model, we only have to estimate finite-dimensional nuisance parameters $\bm\zeta_1$, $\bm\zeta_2$, $\bm\Omega$, and $\bm\Omega_0$ to characterize the densities $\eta_{1,2,3}$ and these nuisance parameters can be estimated at $\sqrt{n}$- rates. Also, if the working models for $\eta_{1,2,3}^*$ are correct, $S_{\rm eff}(\Y, \X, \eta_{1,2,3}^*, \bm \Delta_1, \bm \Delta_2;\bm\theta)\in\Lambda^\perp$  and the estimator based on solving for $\bm\theta$ in the estimating equation $\E\{S_{\rm eff}(\Y,\X,\eta_{1,2,3}^*,\bm \Delta_1, \bm \Delta_2 ;\bm\theta)\} = \bm 0$ is semiparametrically efficient. 

However, more often than not, the working models $\eta_{1,2,3}^*$ are likely incorrect, which implies $S_{\rm eff}( \Y,\X, \eta_{1,2,3}^*, \bm \Delta_1, \bm \Delta_2;\bm\theta)\notin\Lambda^\perp$ and the solution to the estimating equation $\E\{S_{\rm eff}(\Y,\X,\eta_{1,2,3}^*, \bm \Delta_1, \bm \Delta_2;\bm\theta)\} = \bm 0$ will lead to an inconsistent estimator of $\bm\theta$. To overcome the sensitivity of the efficient score to potential mis-specifications of the working models, we propose an alternative, robust score in $\Lambda^\perp$, denoted as $S_{\rm eff}^* = (S_{\rm eff,\gamma}^{*\mathsf T}, S_{\rm eff,\mathbf b}^{*\mathsf T})\T$ and stated below:

{\small\begin{equation*}
	\begin{aligned}
		S_{\rm eff,\bm\gamma}^* &= \text{vec}\T\bigg[\mathbf Q_{\bm\Gamma}\bm\Delta_1\bigg\{\dfrac{\partial \log\eta_1^*}{\partial (\bm\Gamma\T\Y)\T} - \E\bigg(\dfrac{\partial \log\eta_1^*}{\partial (\bm\Gamma\T\Y)\T}\bmid\X\bigg)\bigg\}\bigg]\dfrac{\partial \text{vec}(\bm\Gamma)}{\partial\bm\gamma\T} \\
		&\quad+\text{vec}\T\bigg[(\mathbf P_{\bm\Gamma}\Y + \bm\Delta_2)\bigg\{\dfrac{\partial \log\eta_3^*}{\partial (\mathbf B_0\T\bm\Gamma_0\T\Y)\T} - \E\bigg(\dfrac{\partial \log\eta_3^*}{\partial (\mathbf B_0\T\bm\Gamma_0\T\Y)\T}\bigg)\bigg\}\B_0\T\bigg]\dfrac{\partial \text{vec}(\bm\Gamma_0)}{\partial\bm\gamma\T}\\
		&\quad+ \text{vec}\T\bigg[\mathbf P_{\bm\Gamma}\bm\Delta_1\bigg\{\dfrac{\partial \log\eta_2^*}{\partial (\mathbf B\T\bm\Gamma_0\T\Y)\T}-\E\bigg(\dfrac{\partial \log\eta_2^*}{\partial (\mathbf B\T\bm\Gamma_0\T\Y)\T}\bmid \B_0\T\bm\Gamma_0\T\Y,\X\bigg) \bigg\}\B\T\bigg]\dfrac{\partial \text{vec}(\bm\Gamma_0)}{\partial\bm\gamma\T},\\
        &\quad+ \text{vec}\T\bigg[\mathbf P_{\bm\Gamma}\bm\Delta_1\bigg\{\dfrac{\partial \log\eta_2^*}{\partial (\mathbf B_0\T\bm\Gamma_0\T\Y)\T}-\E\bigg(\dfrac{\partial \log\eta_2^*}{\partial (\mathbf B_0\T\bm\Gamma_0\T\Y)\T}\bmid \B_0\T\bm\Gamma_0\T\Y,\X\bigg) \bigg\}\B_0\T\bigg]\dfrac{\partial \text{vec}(\bm\Gamma_0)}{\partial\bm\gamma\T},\\
        S_{\rm eff,\mathbf b}^* &= \text{vec}\T\bigg[\bm\Gamma_0\T\bm\varepsilon_{2}\bigg\{\dfrac{\partial \log\eta_3^*}{\partial (\mathbf B_0\T\bm\Gamma_0\T\Y)\T} - \E\bigg(\dfrac{\partial \log\eta_3^*}{\partial (\mathbf B_0\T\bm\Gamma_0\T\Y)\T}\bigg) \bigg\}\bigg]\dfrac{\partial \text{vec}(\B_0)}{\partial \mathbf b\T}.
	\end{aligned}
\end{equation*}}
An appealing feature of the new score $S_{\rm eff}^*(\Y, \X, \eta_{1,2,3}^*, \bm\Delta_1, \bm\Delta_2;\bm\theta)$ is that when the working models of $\eta_{1,2,3}^*$ are correctly specified, we have $S_{\rm eff}^* = S_{\rm eff}$ and the resulting estimator of $\bm\theta$ based on solving $\E\{S_{\rm eff}^*(\Y,\X,\eta_{1,2,3}^*, \bm\Delta_1, \bm\Delta_2;\bm\theta)\} = \bm 0$ is semiparametrically efficient. But, if any of the working models are misspecified, $S_{\rm eff}^*$ is still an element in $\Lambda^\perp$ and thus, the estimator will still be consistent and asymptotically normal. In other words, $S_{\rm eff}^{*}$ will always guarantee a consistent estimator of $\bm\theta$ irrespective of the choice of the working models and the estimator will be efficient if the working models are correct. 

The new, robust score $S_{\rm eff}^{*}$ contains new nuisance parameters in the form of conditional expectations of the partial log of $\eta_{1,2,3}^*$. Depending on the choice of the working models, these conditional expectations may or may not have to be estimated. For example, under the aforementioned working models based on the multivariate normal distribution, $S_{\rm eff}^{*}$ simplifies to
\begin{align*}
    S_{\rm eff,\gamma}^* =& -\text{vec}\T(\mathbf Q_{\bm\Gamma}\bm\Delta_1\bm\Delta_1\T\bm\Gamma\bm\Omega^{-1})\dfrac{\partial \text{vec}(\bm\Gamma)}{\partial\bm\gamma\T} - \text{vec}\T\big\{\mathbf P_{\bm\Gamma}\bm\Delta_1\bm\Delta_1\T\bm\Gamma_0\B\bm\Sigma_2^{-1}(\B\T-\bm\mu_2\T\B_0\T)\\
    &\quad+(\mathbf P_{\bm\Gamma}\bm\Delta_1 + \bm\Delta_2)\Y\bm\Gamma_0\B_0(\B_0\T\bm\Omega\B_0)^{-1}\B_0\T\big\}\dfrac{\partial \text{vec}(\bm\Gamma_0)}{\partial\bm\gamma\T},\\
    S_{\rm eff,b}^* =& -\text{vec}\T\{\bm\Gamma_0\T\bm\Delta_2\Y\bm\Gamma_0\B_0(\B_0\T\bm\Omega\B_0)^{-1}\}\dfrac{\partial \text{vec}(\bm\B_0)}{\partial\mathbf b\T}. 
\end{align*}
Notice that the conditional expectations of the partial logs of $\eta_{1,2,3}^*$ are not present and the only unknown quantities in $S_{\rm eff}^*$ are $\bm\Delta_1$ and $\bm\Delta_2$, which can be estimated using the same nonparametric regression estimators from the globally efficient estimator in Section \ref{sec: global}.

However, if the working models are more complex, investigators may have to estimate these conditional expectations. Here, we propose to estimate them in the same manner as estimating $\bm \Delta_1$ adn $\bm \Delta_2$ with a kernel regression estimator. For example, for $\E\{\partial \log \eta_1^*/\partial (\bm\Gamma\T\Y)\T\mid\X=\x\}$, we propose the following nonparametric estimator
\begin{equation*}
    \hat{\E}\left(\dfrac{\partial \log\eta_1^*}{\partial (\bm\Gamma^{\mathsf T}\Y_i)\T}\bmid\x\right) = \frac{\sum_{i=1}^N {\partial \log\eta_1^*}/{\partial (\bm\Gamma^{\mathsf T}\Y_i)\T} K_{h}(\X_i-\x)}{\sum_{i=1}^NK_{h}(\X_i-\x)}.
\end{equation*}
For additional details on the nonparametric estimators of ${\E}\{{\partial \log\eta_2^*}/{\partial (\B^{\mathsf T}\bm\Gamma_0^{\mathsf T}\Y)\T}\mid\B_0^{\mathsf T}\bm\Gamma_0^{\mathsf T}\y,\X\}$, ${\E}\{{\partial \log\eta_2^*}/{\partial (\B_0^{\mathsf T}\bm\Gamma_0^{\mathsf T}\Y)\T}\mid\B_0^{\mathsf T}\bm\Gamma_0^{\mathsf T}\Y,\X\}$, see Section B.1 of the Supplements. Practically speaking, we recommend investigators start with the multivariate normal working model as it not only mirrors the original, parametric inner envelope estimator, but also it leads to a simpler $S_{\rm eff}^{*}$.


 Let $\hat{S}_{\rm eff}^*(\Y, \X, \hat\eta_{1,2,3}^*, \hat{\bm \Delta}_1, \hat{\bm \Delta}_2; \bm\theta)$ denote the sample version of the score with the posited densities $\eta_{1,2,3}^*$ and the estimated $\bm \Delta_1, \bm \Delta_2$ from the globally efficient estimator in \eqref{eq: kernel_regression}. We define $\hat{\bm \theta}$ to be the solution to the following estimating equation:
\begin{equation} \label{eq:Shat_local}
\frac{1}{n}\sum_{i = 1}^n \hat{S}_{\rm eff}^*(\Y_i, \X_i, \hat\eta_{1,2,3}^*, \hat{\bm \Delta}_1, \hat{\bm \Delta}_2; \hat{\bm\theta}) = \bm 0.
\end{equation}
Theorem \ref{thm: local} shows that under Assumptions \ref{assump: Conditional density}-\ref{assump: data density},  the solution to \eqref{eq:Shat_local} will always be asymptotically normal and be locally efficient if all the working densities are correctly specified.
\begin{assumpB}\label{assump: Conditional density}
	(\textit{Working models $\eta_{1,2,3}^*$}) The working models $\eta_{1,2,3}^*$ are bounded away from 0 and infinity. The Hessians of $\log\eta_1^*$, $\log\eta_2^*$, $\log\eta_3^*$ are positive definite and bounded on a compact set $\bm\Theta$ that contains $\bm\theta$. The third order derivatives of $\log\eta_{1,2,3}^*$ around $\bm\theta$ are locally Lipschitz-continuous.
\end{assumpB}
\begin{assumpB}\label{assump: local_iden}
    (Identifiability of $S_{\rm eff}^*$) The solution to the estimating equation $\E\{{S}_{\rm eff}^*(\Y, \X, \eta_{1,2,3}^*, {\bm \Delta}_1, {\bm \Delta}_2; {\bm\theta})\}=\bm 0$ is unique.
\end{assumpB}
\begin{assumpB}\label{assump: bandwidth}
    (\textit{Smoothness of conditional expectations}) The second order derivatives of ${\E}\{{\partial \log\eta_1^*}/{\partial (\bm\Gamma\T\Y)\T}\mid \mathbf{X} \}$, ${\E}\{{\partial \log\eta_2^*}/{\partial (\B\T\bm\Gamma_0\T\Y)\T}\mid\B_0\T\bm\Gamma_0\T \mathbf{Y},\mathbf{X}\}$ and ${\E}\{{\partial \log\eta_2^*}\\/{\partial(\B_0\T\bm\Gamma_0\T\Y)\T}\mid\B_0\T\bm\Gamma_0\T \mathbf{Y},\mathbf{X} \}$ are uniformly continuous for all $\bm\theta\in\bm\Theta$. If they are estimated via kernel regressions in Section B.1 of the Supplements, the bandwidth $h$ satisfies $nh^8\rightarrow 0$ and $nh^{2(p + r-u-d)}\rightarrow\infty$.
\end{assumpB}
\begin{assumpB}\label{assump: data density}
    (\textit{Data density}) The data density for $(\X, \Y)$ is bounded away from 0 and infinity and has twice continuously differentiable derivatives.
\end{assumpB}
\begin{theorem}\label{thm: local}
	Suppose Assumptions \ref{assump: Conditional density}--\ref{assump: data density}, \ref{assump: finite}--\ref{assump: kernel} and Conditions \ref{eq: semi_inner1}--\ref{eq: semi_inner2} hold. Then, the estimator in \eqref{eq:Shat_local} is consistent and asymptotically normal, i.e.
		$$\sqrt n(\hat{\bm\theta} - \bm\theta)\xrightarrow{d}\mathcal N\{\bm 0, \mathbf C_2^{-1}\mathbf D_2(\mathbf C_2^{-1})\T\},$$ where
	$$\mathbf C_2 = \E\left\{\dfrac{\partial S_{\rm eff}^*(\Y_i,\X_i, \eta_{1,2,3}^*, \bm\Delta_1, \bm\Delta_2; \bm\theta)}{\partial\bm\theta}\right\}, \hskip .5cm  \mathbf D_2 = \E\left\{S_{\rm eff}^*(\Y_i,\X_i, \eta_{1,2,3}^*, \bm\Delta_1, \bm\Delta_2; \bm\theta)^{\otimes2}\right\}. $$
Additionally, if the working models are correctly specified, i.e. $\eta_1^*=\eta_1$, $\eta_2^*=\eta_2$ and $\eta_3^*=\eta_3$, the estimator is locally efficient.
\end{theorem}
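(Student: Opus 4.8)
The plan is to treat \eqref{eq:Shat_local} as a $Z$-estimator (estimating equation) whose estimating function $\hat{S}_{\rm eff}^*$ depends on three sets of estimated nuisance functions: the posited densities $\eta_{1,2,3}^*$ together with their finite-dimensional parameters, the conditional-expectation corrections $\E(\partial\log\eta^*_j/\partial(\cdot)^{\mathsf T}\mid\cdot)$, and the nonparametric regressions $\bm\Delta_1,\bm\Delta_2$. The first step is to establish that the population estimating function $\bm\theta\mapsto\E\{S_{\rm eff}^*(\Y,\X,\eta_{1,2,3}^*,\bm\Delta_1,\bm\Delta_2;\bm\theta)\}$ has a unique zero at the true $\bm\theta$, \emph{regardless of whether the working models are correct}. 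This is where the construction of $S_{\rm eff}^*$ does the real work: by design each of its four additive blocks has the centering built in so that $S_{\rm eff}^*\in\Lambda^\perp$ whenever $\bm\Delta_1,\bm\Delta_2$ are the true conditional-mean residuals, and an element of $\Lambda^\perp$ has mean zero at the truth by the orthogonality properties recorded just after the display defining $\Lambda^\perp$. Identifiability (uniqueness of the zero) is then exactly Assumption \ref{assump: local_iden}. Combined with continuity/compactness from Assumptions \ref{assump: Conditional density} and \ref{assump: data density} and a uniform law of large numbers over the compact $\bm\Theta$, this yields consistency $\hat{\bm\theta}\xrightarrow{p}\bm\theta$ by the standard argument for $Z$-estimators.

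The second step is the asymptotic linearization. Here I would expand $\tfrac1n\sum_i\hat S_{\rm eff}^*$ around $\bm\theta$ and around the true nuisance functions. The term $\mathbf C_2^{-1}$ comes from differentiating in $\bm\theta$ (invertibility of $\mathbf C_2$ follows from Assumption \ref{assump: local_iden} plus the Hessian positivity in Assumption \ref{assump: Conditional density}). The crux is controlling the drift coming from replacing the true $\bm\Delta_1,\bm\Delta_2$ and the true conditional expectations by their kernel estimators. The key structural fact to exploit is that $S_{\rm eff}^*$ is a member of the orthogonal nuisance tangent space $\Lambda^\perp$, so that its Gateaux derivative with respect to each nuisance direction, evaluated at the truth, vanishes (Neyman orthogonality / rate-double-robustness with respect to $\bm\Delta_1,\bm\Delta_2$ and the conditional-expectation corrections). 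Consequently the plug-in bias is second order: it is a product of the $L^2$ estimation errors of the relevant nuisance estimators. Under Assumptions \ref{assump: finite}, \ref{assump: kernel}, and \ref{assump: bandwidth} (second-order smoothness of the regression functions, compactly supported bounded kernel with a bounded second derivative, and bandwidth range $nh^8\to0$, $nh^{2(p+r-u-d)}\to\infty$, mirroring the rate condition in Assumption \ref{assump: rate} for the globally efficient estimator), the squared bias is $o(n^{-1/2})$ and the stochastic-equicontinuity/remainder terms are $o_p(1)$, so the estimator is asymptotically linear with influence function $-\mathbf C_2^{-1}S_{\rm eff}^*(\Y,\X,\eta_{1,2,3}^*,\bm\Delta_1,\bm\Delta_2;\bm\theta)$. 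A central limit theorem then gives the stated sandwich variance $\mathbf C_2^{-1}\mathbf D_2(\mathbf C_2^{-1})\T$ with $\mathbf D_2=\E\{(S_{\rm eff}^*)^{\otimes2}\}$.

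The third step handles the ``additionally'' clause. If $\eta_j^*=\eta_j$ for $j=1,2,3$, then the closed-form partial derivatives plugged into $S_{\rm eff}^*$ coincide with the true scores appearing in $S_{\rm eff}$, and the centering terms in $S_{\rm eff}^*$ (the conditional expectations of those partials) are precisely the projections that already appear in $S_{\rm eff}$; hence $S_{\rm eff}^*=S_{\rm eff}$ pointwise. Then $\mathbf C_2=\E\{\partial S_{\rm eff}/\partial\bm\theta\}=-\E(S_{\rm eff}S_{\rm eff}\T)=-\mathcal V_{\bm\theta}^{-1}$ by the usual information identity for the efficient score, and $\mathbf D_2=\E(S_{\rm eff}S_{\rm eff}\T)=\mathcal V_{\bm\theta}^{-1}$, so the sandwich collapses to $\mathcal V_{\bm\theta}$, which is the semiparametric efficiency bound from Theorem \ref{thm: global}; this is local efficiency.

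I expect the main obstacle to be the second step: verifying rigorously that the plug-in kernel estimators for $\bm\Delta_1,\bm\Delta_2$ \emph{and} for the conditional-expectation corrections in $S_{\rm eff}^*$ enter only at second order. This requires (i) the Neyman-orthogonality computation for each of the four blocks of $S_{\rm eff,\bm\gamma}^*$ and the block $S_{\rm eff,\mathbf b}^*$ — showing the first-order path derivative in every nuisance direction is zero at the truth, which is really a restatement of $S_{\rm eff}^*\in\Lambda^\perp$ but must be checked term by term — and (ii) uniform rates for the kernel regressions over $\bm\Theta$ so that the product of two $o_p(n^{-1/4})$-type errors is $o_p(n^{-1/2})$, using the bandwidth window in Assumption \ref{assump: bandwidth}. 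The bookkeeping is heavier than for the globally efficient estimator in Theorem \ref{thm: global} because $S_{\rm eff}^*$ carries the extra conditional-expectation nuisances; the dependence of those nuisances on $\bm\theta$ through $\bm\Gamma,\bm\Gamma_0,\B,\B_0$ also has to be tracked when differentiating, though by the same orthogonality this $\bm\theta$-dependence contributes only to $\mathbf C_2$ and not to the leading variance. I would organize the detailed argument along the lines of the proof of Theorem \ref{thm: global}, flagging only the additional terms.
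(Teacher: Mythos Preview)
Your three-step plan matches the paper's structure, which frames the problem in Newey--McFadden's extremum-estimator setup (minimizing $\hat Q_n(\bm\theta)=\tfrac12\|n^{-1}\sum_i\hat S_{\rm eff}^*\|^2$ and verifying their Theorems~2.1 and~3.1), and your local-efficiency argument in the third step is exactly the paper's.

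The one imprecision is the mechanism you invoke in the second step. The claim that $S_{\rm eff}^*\in\Lambda^\perp$ implies the Gateaux derivative with respect to each plugged-in nuisance vanishes is not correct for $\bm\Delta_2$: in $S_{\rm eff,\mathbf b}^*$, perturbing $\bm\Delta_2$ by an arbitrary direction $h(\B_0\T\bm\Gamma_0\T\Y,\X)$ gives first-order contribution $\E[\bm\Gamma_0\T h\{g-\E g\}]$ (with $g=\partial\log\eta_3^*/\partial(\B_0\T\bm\Gamma_0\T\Y)\T$), which is nonzero unless $\E[h\mid\B_0\T\bm\Gamma_0\T\Y]=0$, a constraint the kernel error $\hat{\bm\Delta}_2-\bm\Delta_2$ does not satisfy. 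Membership in $\Lambda^\perp$ is orthogonality to the nuisance \emph{scores} of $\eta_{1,2,3,4}$, not to perturbations of the derived functionals $\bm\Delta_1,\bm\Delta_2$. The paper instead controls the term $n^{-1/2}\sum_i(\hat{\bm\Delta}_{i2}-\bm\Delta_{i2})\{g_i-\E g\}$ by a U-statistic argument (its Lemma~5): writing the kernel error as a double sum, the unconditional mean-zero property of $g-\E g$ makes the cross terms degenerate, and this together with uniform-convergence bounds delivers the needed $o_p(1)$ rate under Assumption~\ref{assump: bandwidth}. Your ``product of two $o_p(n^{-1/4})$ errors'' is the cross-fitting/DML picture, which would require sample splitting; without it the U-statistic route is what is actually needed. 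Since you already plan to follow the proof of Theorem~\ref{thm: global} in the details you will meet exactly this machinery---just do not expect the orthogonality heuristic alone to dispose of the $\bm\Delta_2$ drift.
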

We make some remarks about the assumptions underlying Theorem \ref{thm: local}. First,
the aforementioned multivariate normal model automatically satisfy Assumptions \ref{assump: Conditional density}--\ref{assump: bandwidth} and there is no need to nonparametrically estimate the conditional expectations of the partial log likelihoods. Second, the consistency and asymptotic normality of the locally efficient estimator requires weaker regularity condition for the bandwidth rate than the globally efficient estimator. Specifically, the bandwidth rate $nh^{2(1+s)}\rightarrow \infty$ from Assumption \ref{assump: rate} implies $nh^{2(p+r-u-d)}\rightarrow\infty$ in Assumption \ref{assump: bandwidth}.


\section{Computational and Other Considerations} \label{sec: comp}
\subsection{An Alternating Algorithm to Solve Estimating Equations}  \label{sec:algo}
Both the globally and locally efficient estimators in equations \eqref{eq:Shat_global} and \eqref{eq:Shat_local}, respectively, require solving potentially non-linear estimating equations. While there are many general-purpose, optimization procedures to solve such equations (e.g., Chapter 11 in \cite{nocedal2006numerical}), we propose a procedure based on an alternating algorithm The optimization algorithm for the globally efficient estimator is detailed in Algorithm \ref{algo: global} and the optimization algorithm for the locally efficient estimator is detailed in Algorithm \ref{algo: local}.

\begin{algorithm}
    \caption{An alternating algorithm to solve equation \eqref{eq:Shat_global}}

    \label{algo: global}
    \begin{algorithmic}
		\State Inputs: (1) data $(\Y_i,\X_i)$, $i=1,\ldots,n$; (2) kernel function $K_h$; (3) convergence threshold $\delta > 0$.
		\State 1. Initialize $\bm\theta$ using the GMM estimator in Section \ref{Sec: semi} and denote it as $\bm\theta^{(0)}$. 
		\State 2. Estimate $\bm\Delta_1$ using equation \eqref{eq: kernel_regression}.
		\While{$|\bm\theta^{(t+1)} - \bm\theta^{(t)}| > \delta$}
		\State 3. Obtain $\bm\Gamma^{(t)}$, $\bm\Gamma^{(t)}_0$, $\B^{(t)}$, $\B^{(t)}_0$ from $\bm\theta^{(t)}$ using Lemma \ref{lemma: repar}.
		\State 4. Using the estimates in step 3, use the kernel estimators to nonparametrically estimate $\eta_{1,2,3}$ and $\bm \Delta_2$. 
		\State 5. Using the estimates in step 4, solve the estimating equation in \eqref{eq:Shat_global} for $\bm \theta$ and denote it as $\bm{\theta}^{(t+1)}$.
		\EndWhile
		\State \textbf{Output:} Final estimator $\hat{\bm\theta} = \bm\theta^{(t+1)}$.
    \end{algorithmic}
\end{algorithm}
\begin{algorithm}
	\caption{An alternating algorithm to solve equation \eqref{eq:Shat_local}}
	\label{algo: local}
    \begin{algorithmic}
		\State Inputs:  (1) data $(\Y_i,\X_i)$, $i=1,\ldots,n$; (2) working models of $\eta_{1,2,3}^*$; (3) kernel function $K_h$; (4) convergence threshold $\delta > 0$.
		\State 1. Initialize $\bm\theta$ using the GMM estimator in Section \ref{Sec: semi} and denote it as $\bm\theta^{(0)}$. 
		\State 2. Estimate $\bm\Delta_1$ using equation \eqref{eq: kernel_regression}.
		\While{$|\bm\theta^{(t+1)} - \bm\theta^{(t)}| > \delta$}
		\State 2. Obtain $\bm\Gamma^{(t)}$, $\bm\Gamma^{(t)}_0$, $\B^{(t)}$, $\B^{(t)}_0$ from $\bm\theta^{(t)}$ using Lemma \ref{lemma: repar}.
		\State 3. Using the estimates in step 2, use the kernel estimators to estimate $\bm\Delta_2$.
		\State 4. Using the estimates in step 2, estimate relevant parameters in the working models $\eta_{1,2,3}^*$. 
		\State 5. If necessary, estimate ${\E}\{{\partial \log\eta_1^*}/{\partial (\bm\Gamma^{(t)\mathsf T}\Y)\T}\mid\X\}$, ${\E}\{{\partial \log\eta_2^*}/{\partial (\B^{(t)\mathsf T}\bm\Gamma_0^{(t)\mathsf T}\Y)\T}\mid\B_0^{(t)\mathsf T}\bm\Gamma_0^{(t)\mathsf T}\y,\X\}$, ${\E}\{{\partial \log\eta_2^*}/{\partial (\B_0^{(t)\mathsf T}\bm\Gamma_0^{(t)\mathsf T}\Y)\T}\mid\B_0^{(t)\mathsf T}\bm\Gamma_0^{(t)\mathsf T}\Y,\X\}$ via equations (S7)--(S10).
		\State 6. Using the estimates from steps 3 to steps 5, solve the estimating equation in \eqref{eq:Shat_local} and denote it as $\bm \theta^{(t+1)}$.
		\EndWhile
		\State \textbf{Output:} The final estimator $\hat{\bm\theta} = \bm\theta^{(t+1)}$.
		
	\end{algorithmic}
\end{algorithm}
We make some remarks about both algorithms. First, both algorithms alternate between estimating $\bm\theta$ and the relevant nuisance parameters in the scores $S_{\rm eff}$ and $S_{\rm eff}^*$. The rationale behind alternating between these two estimation steps is that investigators can directly use popular, off-the-shelf software for  nonparametric kernel-based estimation. For example, for nonparametric, kernel regression estimators of $\bm \Delta_1$ and $\bm \Delta_2$, investigators can use the R function ``npreg" in the R package ``np" \cite{hayfield2008nonparametric}. For the densities $\eta_{1,2,3}$ and their derivatives, investigators can use the R function ``kde" and ``kdde" in the R package ``ks" \cite{duong2007ks}. For choosing the bandwidth parameter, investigators can directly use the R functions ``npregbw" and ``Hpi" in the above two packages, which use cross validation to choose the bandwidth parameters; note that while there is a rate-driven choice of the bandwidth parameter $h$ from our theory (see Theorem \ref{thm: global} and Theorem \ref{thm: local}), in practice, cross validation is used. Second, to evaluate the partial derivatives of the basis matrices with respect to their vector representations (e.g. $\partial \text{vec}(\bm\Gamma_0) /\partial\bm\gamma\T)$, we can use Lemma \ref{lemma: repar}, which provides a continuous mapping between $(\bm$ $\bm\Gamma,\B)$ and $\bm\gamma, \mathbf b$, and finite differencing \cite{nocedal2006numerical}. Third, both algorithms initialize with the GMM estimator, but other initializations are possible. In practice, we recommend using the GMM estimator with the identity functions $\mathbf g$ and $\mathbf h$ for simplicity.

\subsection{Dimension Selection of the Inner Envelope}\label{Sec: dim}
An important step in using any dimension-reduction procedure is selecting the reduced dimension subspace. For the inner envelope, we need to estimate the dimensions of 
$\mathcal S_1$, $\mathcal S_2$ and $\mathcal S_3$ and a common approach to estimating the dimensions involves using Bayesian information criterion. But, this approach relies on a parametic model between the responses and the outcome whereas our semiparametric generalization does not impose such a model. Instead, we use a nonparametric, bootstrapped-based procedure in \cite{dong2010dimension, ye2003using} adapted to our inner envelope setting to estimate the dimension of the subspaces.

Formally, for each possible dimension of $\mathcal S_i$,  we calculate the GMM estimators of the corresponding $\bm \theta$ with Theorem \ref{thm: GMM}. We also take nonparametric bootstrapped samples of $(\mathbf{Y}, \mathbf{X})$ and obtain $B$ bootstrapped estimates of $\bm \theta$. 
For each subspace $i=1,2,3$, let $\hat{\mathcal S}_{i,k}$ denote the estimated space $\mathcal S_i$ under dimension $k$ and $\hat{\mathcal S}_{i,k}^b$ denote the estimate based on the $b$-th bootstrap sample for $b = 1,\ldots,B$. We then jointly estimate the dimensions of $\mathcal{S}_1, \mathcal{S}_2$, and $\mathcal{S}_3$ by solving 
\begin{equation} \label{eq: dim_select}
	(\hat u, \hat d) = \argmax_{u,d } \dfrac{1}{B}\sum_{b = 1}^B\{q^2(\hat{\mathcal S}_{1,u}, \hat{\mathcal S}_{1,u}^b)+q^2(\hat{\mathcal S}_{2,d}, \hat{\mathcal S}_{2,d}^b)+q^2(\hat{\mathcal S}_{3,r-u-d}, \hat{\mathcal S}_{3,r-u-d}^b)\},
\end{equation}where we assume the dimensions of $\mathcal S_1$, $\mathcal S_2$ and $\mathcal S_3$ are greater or equal to 1. Here, $q^2$ is the Hotelling's vector correlation coefficient  \cite{harold1936relations} between two spaces. That is, for any  subspaces $\mathcal A$ and $\mathcal B\subset\mathbb R^{r}$, $q^2(\mathcal{A},\mathcal{B})$ is defined as
\begin{equation*}
	q^2(\mathcal A,\mathcal B) = \det (\mathbf B\T\mathbf A\mathbf A\T\mathbf B),
\end{equation*}
where $\mathbf A$ and $\mathbf B$ are any basis matrices for the spaces $\mathcal A$ and $\mathcal B$.
The correlation coefficient $q^2$ is bounded between $0$ and $1$ where higher value of $q^2$ indicates higher correlation between the two subspaces. In the extreme case where $q^2 (\mathbf A,\mathbf B)= 1$, $\text{span}(\mathbf A) = \text{span}(\mathbf B)$. 

Roughly speaking, equation \eqref{eq: dim_select} selects the dimensions where the intra-correlation between the bootstrapped estimates of the subspaces and the estimated subspace are small. If the true dimension of a subspace is $k^*$, but we estimated the subspace assuming it is of dimension $k \neq k^*$, the intra-correlation between the bootstrapped estimates assuming $k$ will be larger than that from assuming $k = k^*$ and thus, the selected dimension from \eqref{eq: dim_select} will be closer to the true dimension. For additional details behind  using intra-correlation of bootstrapped estimates of dimension-reduced subspaces to estimate dimensions, see \cite{dong2010dimension, ye2003using}.

\section{Simulations Study}\label{Sec: simu}
We conduct three simulation studies to numerically assess our proposed methods. The first two simulation studies concern linear and non-linear models with $r = 4$ responses and $p = 2$ regressors. 
We set the true dimension of  $\mathcal S_1$,$\mathcal S_2$ and $\mathcal S_3$ to be 1, 1, 2 respectively, and consider five different sample sizes,  $n=100, 300, 500, 750, 1000$. We compare our methods with existing approaches 
and compare the performance of each method by computing ${\rm dist}(\mathcal S_j, \hat{\mathcal S}_j) = \|\mathbf P_{\mathcal S_j} - \mathbf P_{\hat{\mathcal S}_j}\|_F$, with a smaller distance implying a better method. Additionally, for the linear simulation model in Section \ref{sec: simu_1}, we compute $\|\hat{\bm\beta} - \bm\beta\|^2_F$ and for the non-linear simulation model in Section \ref{sec: simu_2}, we compare the out-of-sample predictive root mean squared error. Finally, for the last simulation study, we generate our simulation model based on the well-known iris data \cite{fisher1936use}.

\subsection{Linear model with normal errors}\label{sec: simu_1}
Consider the following model 
\begin{align} \label{eq:sim_model}
\Y_i &= \bm\Gamma f_1(\X_i) + \bm\Gamma_0\B f_2(\X_i) + \bm\varepsilon_i, \quad{} X_{1i}, X_{2i} \sim U[-5, 5] 
\end{align}
Let $\bm\Gamma = (0.50,0.50,0.50,0.50)\T \in\mathbb R^{4\times 1}$, $\B = (1, 2, 3)\T/\sqrt{14}\in\mathbb R^{3\times 1}$, 
$\bm\Gamma_0 = (\bm\Gamma_{01},\bm\Gamma_{02}, \bm\Gamma_{03}) \in\mathbb R^{4\times 3}$ with $\bm\Gamma_{01} = (0.50,-0.83,0.17,0.17)\T$, $\bm\Gamma_{02} = (0.50, 0.17, -0.83, 0.17)\T$, $\bm\Gamma_{03} = (0.50, \\0.17,0.17,-0.83)\T$. For this simulation study, we set $f_1$, $f_2$, and $\bm \varepsilon_i$ as follows 
\begin{equation*}
	\begin{aligned}
	&f_1(\X_i) = X_{1i}, \quad f_{2}(\X_i) = X_{1i}+X_{2i}, \quad\bm\varepsilon_i = \bm\Gamma\varepsilon_{1i} + \bm\Gamma_0\B(0.2,0.2)\bm\varepsilon_{2i}  + \bm\Gamma_0\B_0\bm\varepsilon_{2i}, \\
	&\varepsilon_{1i}\sim\mathcal N(0, 1),\quad \bm\varepsilon_{2i}\sim\mathcal N\left(\bm 0, 100\mathbf I_2\right), \quad i = 1,\ldots, n.
	\end{aligned}
\end{equation*}
With the above specifications, the simulation model is equivalent to the linear model in equation \eqref{eq: lm} where $\bm\beta = \bm\Gamma\bm\eta_1\T+\bm\Gamma_0\B\bm\eta_2\T$, $\bm\eta_1\T = (1, 0)$, $\bm\eta_2\T = (1, 1)$, $\bm\Omega_1 = 72.5$, $\bm\Omega_0 = (\bm\omega_{01}, \bm\omega_{02},\bm\omega_{03})$, $\bm\omega_{01} = (0.25, 0.25, 0.25)\T$, $\bm\omega_{02} = (0.25,68.25, -31.75)\T$, and $\bm\omega_{03} = (0.25,\\-31.75, 68.25)\T$.  
Also, Conditions \ref{eq: semi_inner1} and \ref{eq: semi_inner2} hold with the subspace dimensions $u=1$ and $d = 1$. We apply our proposed methods in Sections \ref{Sec: semi} and \ref{Sec: effi} as well as the original inner envelope estimator developed under a normal linear model. We remark that for the GMM estimator, the functions $\mathbf g$ and $\mathbf h$ are chosen to be identity functions. 

Figure \ref{fig: dist_lm} shows the results for ${\rm dist}(\mathcal S_j, \hat{\mathcal S}_{j})$ across different methods; for detailed numerical results, see Table \ref{tb: simu2} of the Supplements. We see that the locally efficient estimator with correctly specified density performs better than the globally efficient estimator and the GMM estimator performs the worst. Also, as expected, the original inner envelope estimator performed the best since the simulation model is linear and the errors are normally distributed. Finally, we remark that across different sample sizes, the correct inner envelope dimension (i.e. $u=1$, $d = 1$) was selected 97\% of the time.

\begin{figure}[!h]
		\centering
		\includegraphics[width=.7\linewidth]{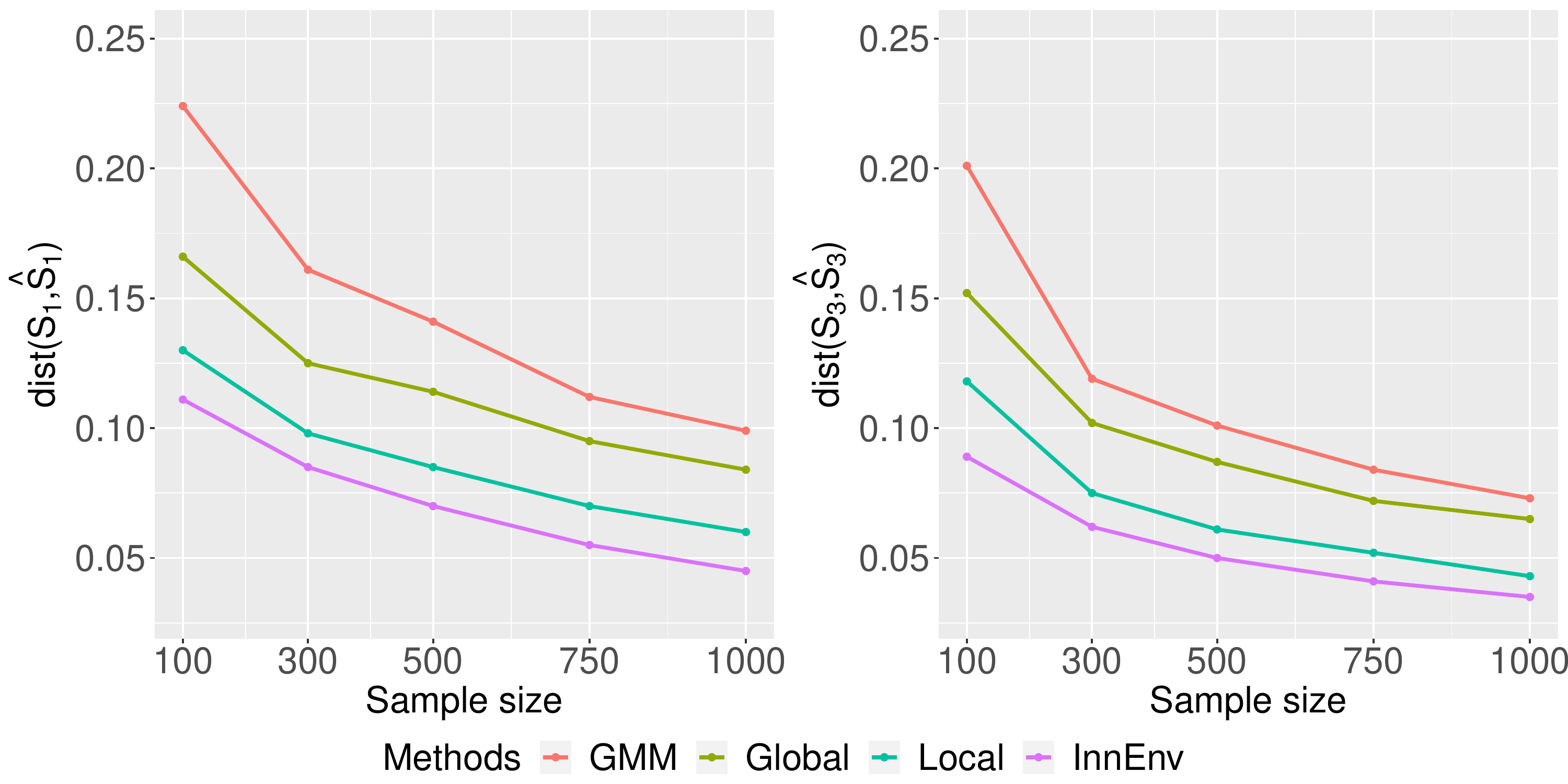}  
	\caption{Distance between the true space $\mathcal S_j$ and the estimated space $\hat{\mathcal S}_{j}$, denoted as ${\rm dist}(\mathcal S_j, \hat{\mathcal S}_{j})$, from the linear simulation model in Section \ref{sec: simu_1}. GMM represents the GMM estimator, global represents the globally efficient estimator, local represents the locally efficient estimator, and InnEnv represents the original inner envelope estimator.}
	\label{fig: dist_lm}
\end{figure}

Next, we compare the estimates of the regression parameter $\bm\beta$. As discussed in Section \ref{Sec: pre}, once we have an estimate of the basis matrices $\hat{\bm\Gamma}$ and $\hat\B$, we can estimate $\hat{\bm\beta}$ via projections; see 
 Section A.1 of the Supplements for details.
Also, since the model is linear, we include four additional estimators for $\bm\beta$: the OLS estimator, the original envelope estimator, the response partial least squares (PLS) estimator \cite{cook2018introduction} and the oracle estimator that assumes that the inner envelope spaces are known a priori. 
Figure \ref{fig: lm_mse} shows $\|\hat{\bm\beta} - \bm\beta\|_F^2$ across different methods; see Table \ref{tb: simu2} in the Supplements for additional details. Compared to the OLS estimator, there are gains from using either the original inner envelope estimator or our proposed estimators. Specifically, the GMM estimator has a smaller $\|\hat{\bm\beta} - \bm\beta\|_F^2$ than the OLS estimator, the PLS estimator, and the original envelope estimator. But, the GMM estimator is worse than the original inner envelope estimator, the locally efficient estimator and the globally efficient estimator. These observations agree with what's expected from theory as the GMM estimator utilizes the inner envelope space compared to the OLS estimator, the PLS estimator, and the original envelope estimator, leading to better performance. But, the GMM estimator is not designed to be semiparametrically efficient compared to the local, global, and the original inner envelope estimator. 

Finally, in Section \ref{sec: numerical_results} of the supplements, we calculate the out-of-sample predictive root mean squared error (RMSE) when $n = 500$. In short, because the underlying data generating model is linear, the predictive performance of the different methods mirror the performance of estimating $\bm\beta$. 

\begin{figure}[!h]
	\centering
	\includegraphics[width=.55\linewidth]{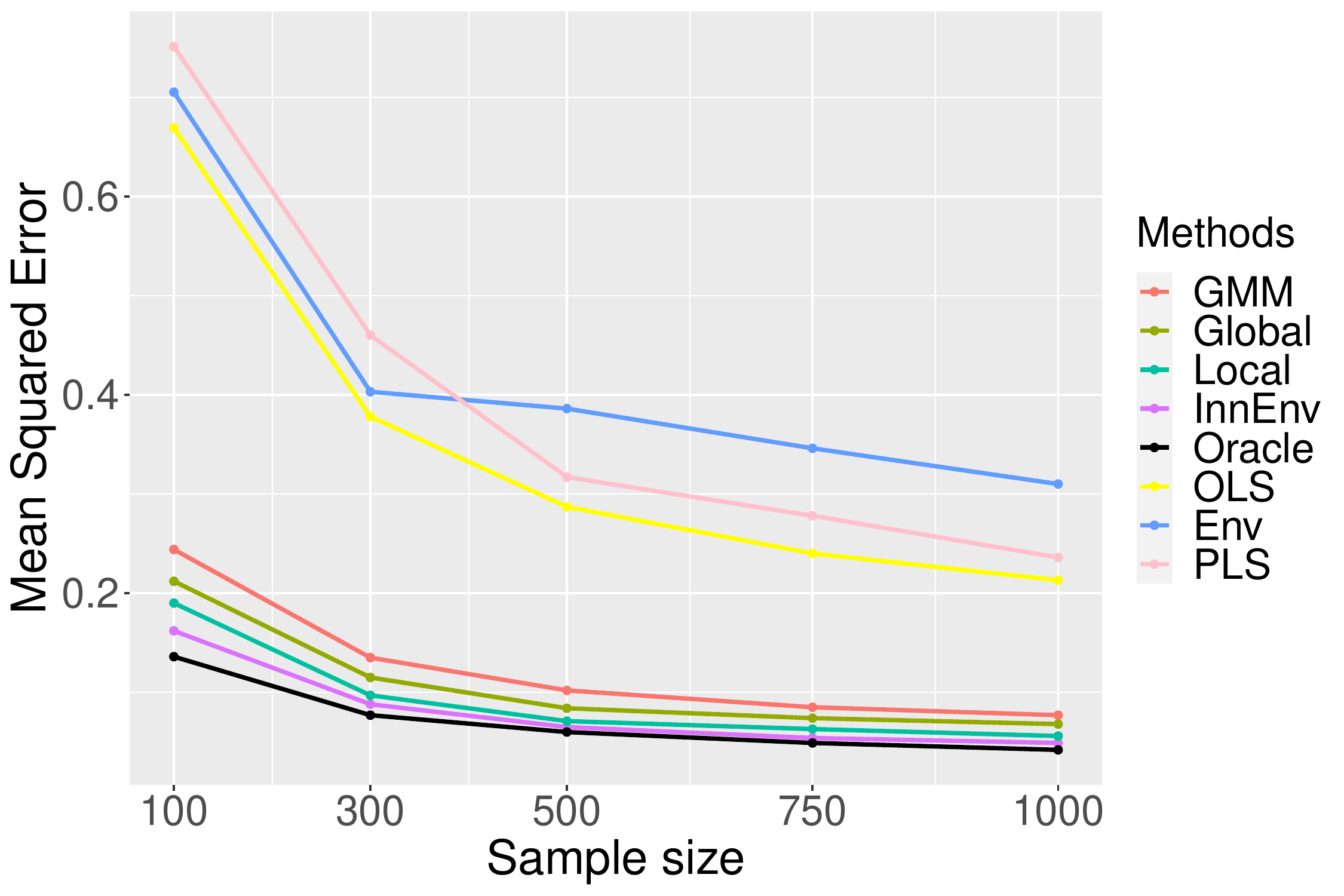}  
\caption{Mean squared error of estimating $\bm\beta$ (i.e. $\|\hat{\bm\beta} - \bm\beta\|_F^2$) in Section \ref{sec: simu_1}. GMM represents the GMM estimator, global represents the globally efficient estimator, local represents the locally efficient estimator, InnEnv represents the original inner envelope estimator, oracle represents the oracle OLS estimator where the inner envelope structure is known a priori, OLS represents the naive ordinary least squares estimator, Env represents the original envelope estimator, and PLS represents the partial least squares estimator.}
\label{fig: lm_mse}
\end{figure}

\subsection{Non-linear model with non-normal errors}\label{sec: simu_2}
For this simulation study, we consider the same general model in \eqref{eq:sim_model} and the subspace matrices $\bm \Gamma, \mathbf{B}$, except we consider the following $f_1$, $f_2$, and $\bm \varepsilon_i$:
\begin{equation*}
\begin{aligned}
&f_1(\X_i) = X_{1i}^2\text{sgn}(X_{2i}), \qquad f_{2}(\X_i) = 20\sin\{0.5(X_{1i}+X_{2i})\},\\
&\bm\varepsilon_{i} = \bm\Gamma\varepsilon_{1i} + \bm\Gamma_0\B\cdot\bm{0.1}\T\bm\varepsilon_{2i} + \bm\Gamma_0\B_0\bm\varepsilon_{2i}, \\
&\varepsilon_{1i}\sim t_5(0,1), \quad\bm\varepsilon_{2i}\sim t_5(\bm 0, 100\mathbf I_2)
\end{aligned}
\end{equation*}
Here, $t_\nu(\bm\mu,\bm\Sigma)$ denotes a multivariate $t$ distribution with mean $\bm\mu$, covariance matrix $\bm\Sigma$ and degrees of freedom $\nu$, and $\bm{0.1}$ denotes the vector $\bm 1 \times 0.1$. Overall, the above specification creates a non-linear, non-normal, and heteroskedatsic model between the responses and the regressors. We also remark that Conditions \ref{eq: semi_inner1} and \ref{eq: semi_inner2} are satisfied for the above non-linear model.

Figure \ref{fig: dist_nlm} shows the results of ${\rm dist}(\mathcal S_j,\hat {\mathcal S}_j)$ under different methods; for additional details, see Table \ref{tb: simu2} in the Supplements. Because the underlying model is non-linear and has non-normal errors, the original inner envelope estimator has a high {\rm dist} even as $n$ increases. In contrast, our proposed estimators which do not rely on linearity or normality show that the underlying subspaces are being estimated correctly with ${\rm dist}$ shrinking towards zero as $n$ increases. Also, between the locally efficient estimator and the globally efficient estimator, the locally efficient estimator uses wrong normal working models for the densities $\eta_1$, $\eta_2$, and $\eta_3$ leading to worse performance than the globally efficient estimator, which estimate these densities nonparametrically via kernel regression. Finally, we remark that the correct inner envelope dimension $u = 1$, $d = 1$ was selected $95\%$ of the time.

\begin{figure}[!h]
	  \centering
	  \includegraphics[width=.7\linewidth]{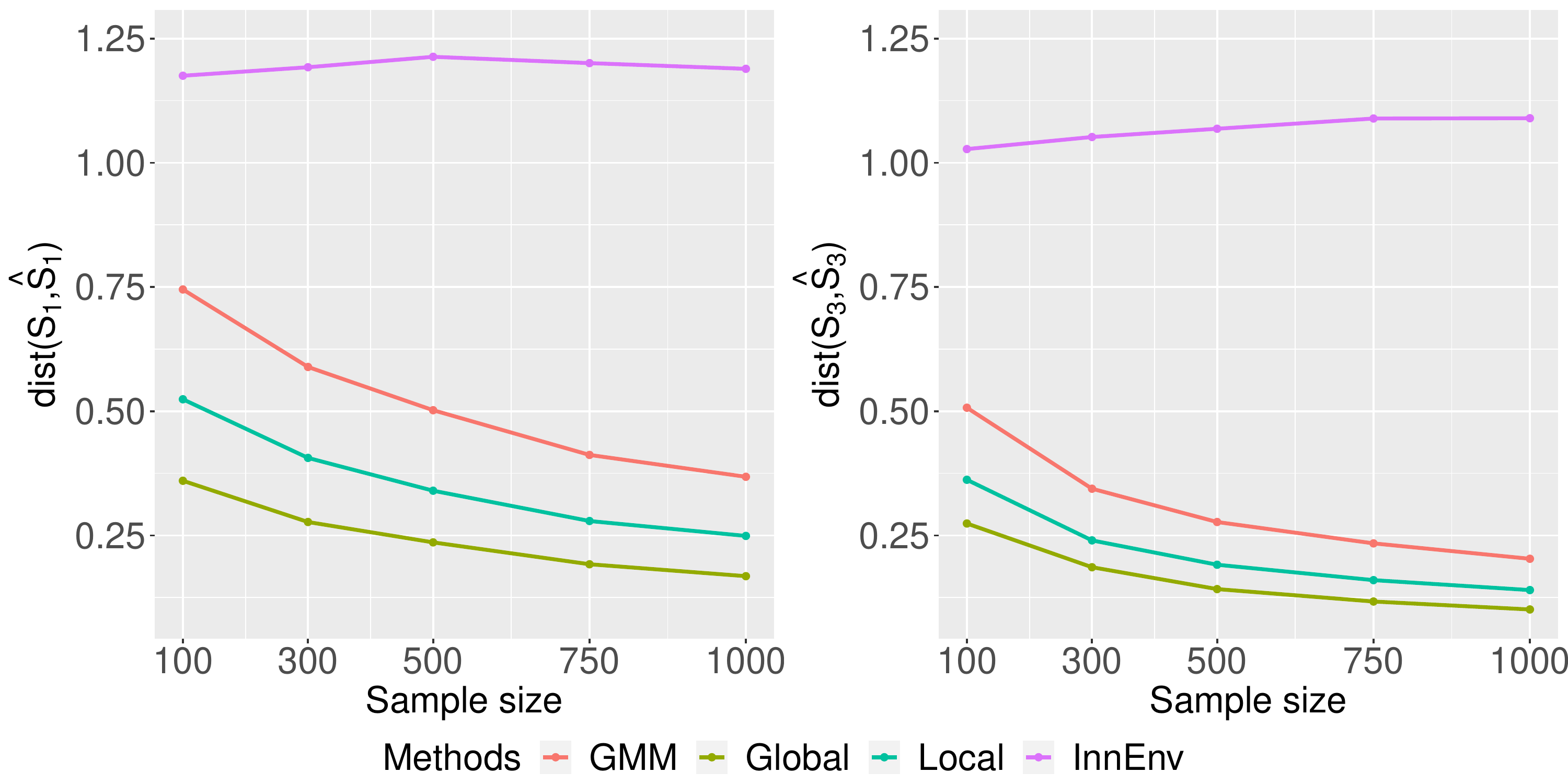}  
  \caption{Distance between the true space $\mathcal{S}_j$ and the estimated space $\hat{\mathcal{S}}_j$, denoted as ${\rm dist}(\mathcal{S}_j, \hat{\mathcal{S}}_j)$ for the non-linear model in Section \ref{sec: simu_2}. GMM represents the GMM estimator, global represents the globally efficient estimator, local represents the locally efficient estimator, and InnEnv represents the original inner envelope estimator.}
  \label{fig: dist_nlm}
\end{figure}

Next, we evaluate the out-of-sample predictive RMSE when $n = 500$.  To do this, we create pseudo-outcomes $\tilde \Y_i = (\hat{\bm\Gamma}\T\Y_i, \hat{\B}\T\hat{\bm\Gamma}_0\T\Y_i)$ based on the estimated subspaces from above and run a supervised learning algorithm with $\tilde \Y_i$ as the outcome and $\X_i$ as the predictor. 
Note that after training the supervised learning model and obtaining the predictions for the pseudo-outcome, we can transform the pseudo-outcome back to the original outcome via $\hat\Y_i = (\hat{\bm\Gamma}, \hat{\bm\Gamma}_0\hat\B)\tilde\Y_i$. We randomly split 80\% of the data into training data $(\Y_{train},\X_{train})$ ($i =1,\ldots,m$) and the rest into test data $(\Y_{test}, \X_{test})$ ($i = m+1,\ldots,n$) and the predictive RMSE is evaluated on the test data. Finally, for our supervised learning algorithms, we use XGBoost \cite{chen2016xgboost} with the default hyperparameters in the R package \cite{chen2015xgboost}.

The predictive RMSE for ``Local", ``Global", and ``GMM" methods in Figure \ref{fig: dist_nlm} are 18.94, 18.83, and 19.21, respectively. Also, the oracle preditive RMSE, which is the RMSE from predictions that use the true subspaces, is 18.64 and the naive predictive RMSe, which is the RMSE from predictions that use the original responses $\Y$ is 22.75. 
We see that incorporating the inner envelope structure into supervised learning methods reduces out-of-sample predictive RMSE compared to the naive method that directly use the original outcome. Also, the locally efficient and globally efficient methods have a predictive RMSE that is close to the oracle method, and broadly speaking, the predictive performance roughly follows the performance of estimating the subspaces in Figure \ref{fig: dist_nlm}. 
\subsection{Synthetic dataset based on the iris data}\label{sec: iris}
Our third simulation study mirrors is based on the 
the classic \textit{iris} dataset by Fisher \cite{fisher1936use}, which has been used by other envelope-based method \cite{su2012inner}. Briefly, the data contains 150 samples of iris species (setosa, versicolor, and virginica) along with their flower characteristics (sepal length, sepal width, and petal length). We take the species, dichotomized to two dummy variables, as predictors We also standardize the flower characteristics to mean zero and one standard deviation. 
Also, as a sanity check, we added two, random artificial responses $( Z_1, Z_2)\overset{i.i.d}{\sim}\mathcal N(\bm0,\mathbf I_2)$ to the original set of responses. Our algorithms should identify these two responses as part of the subspace $\mathcal S_3$. In total, we have six responses $\Y\in\mathbb R^6$ and two predictors $\mathbf X\in\mathbb R^2$. 


We first fit a multivariate linear regression of $\Y$ against $\X$. We also conduct a Shapiro-Wilk normality test for the error terms based on the linear model. The test suggests that the responses are non-normal, with  $p$-value less than $10^{-12}$, and the original inner envelope method may be inappropriate in this setting. 

Next, we run our globally efficient estimator using the selected dimension of $u = \text{dim}(\mathcal S_1) = 1$ and $d = \text{dim}(\mathcal S_3) = 4$ from the nonparametric bootstrap method;  note that the original envelope method selected $u=4$ as the dimension, suggesting that the envelope does not discover structure beyond the two noise responses we added artifically.  
Also, we verified whether the estimated $\mathcal S_3$ contains the subspace spanned by the two artificially added random responses. At a high level, this verification involves checking the distances between the estimated $S_3$ and the space spanned by the two responses and we found that our estimated $\mathcal S_3$ contains the subspace; see Section B.4 in the Supplement for details.  

Next, we compare the estimated regression parameter $\bm\beta$ under different methods. We also conduct nonparametric bootstrap 100 times to obtain estimates of the standard errors of $\hat{\bm\beta}_{ij,\rm ols}$, $\hat{\bm\beta}_{ij,\rm env}$, $\hat{\bm\beta}_{ij,\rm InnEnv}$, $\hat{\bm\beta}_{ij,\rm global}$ and $\hat{\bm\beta}_{ij,\rm local}$, where we posit the normal densities for the locally efficient estimator. The point estimates, bootstrap standard errors and $p$-values are given in Table \ref{tb: iris} in the Supplements. On average, the mean ratio of the standard error of the globally efficient estimator over that of the OLS estimator is 1.83. The mean ratios of the standard errors for the locally efficient estimator and the original inner envelope estimator compared to that of the OLS estimator are 1.70 and 1.76. Finally, the mean ratio of the standard error of the envelope estimator over that of the OLS estimator is 1.12. Figure \ref{fig: ecdf1} visualizes these results by comparing the empirical cumulative distribution functions (ECDF) of ${\rm sd}(\hat{\bm\beta}_{ij,\rm ols})/{\rm sd}(\hat{\bm\beta}_{ij,\rm env})$ and ${\rm sd}(\hat{\bm\beta}_{ij,\rm ols})/{\rm sd}(\hat{\bm\beta}_{ij,\rm global})$ for each element of $\bm \beta$ matrix. Roughly speaking, these results imply that to achieve the same power to test the null hypothesis that $\beta_{ij}$ is zero versus a fixed, alternative hypothesis, a Wald test based on the globally efficient estimator only requires about half of the original sample size (54.6\%) compared to the Wald test based on the OLS estimator. Section B.4 of the supplements contains additional results from the analysis.


\begin{figure}[!h]
	\centering
	\begin{subfigure}{.35\textwidth}
	  \centering
	  \includegraphics[width=\linewidth]{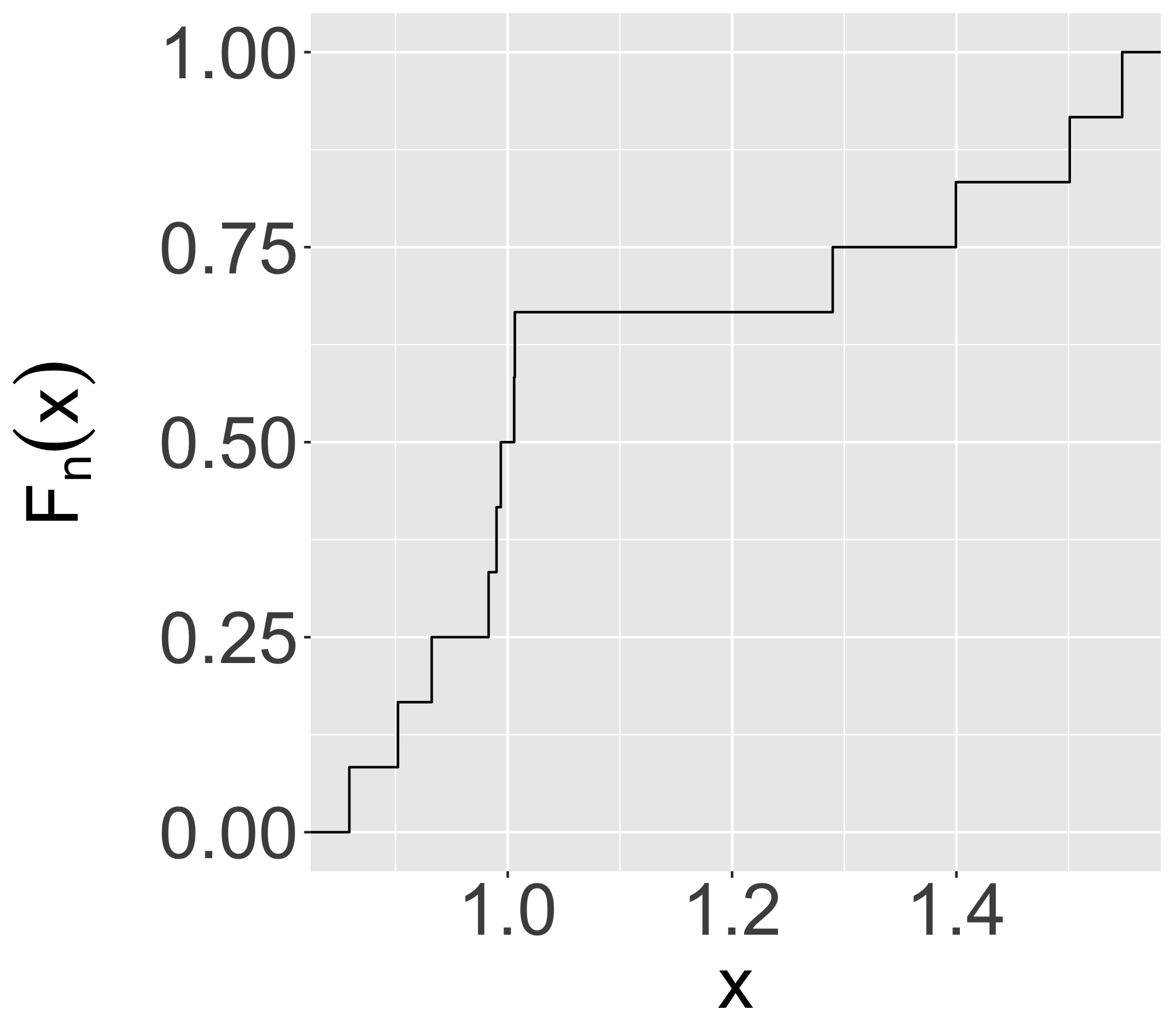}  
	  \caption{Envelope}
	\end{subfigure}
	\hspace{2mm}
	\begin{subfigure}{.35\textwidth}
	  \centering
	  \includegraphics[width=\linewidth]{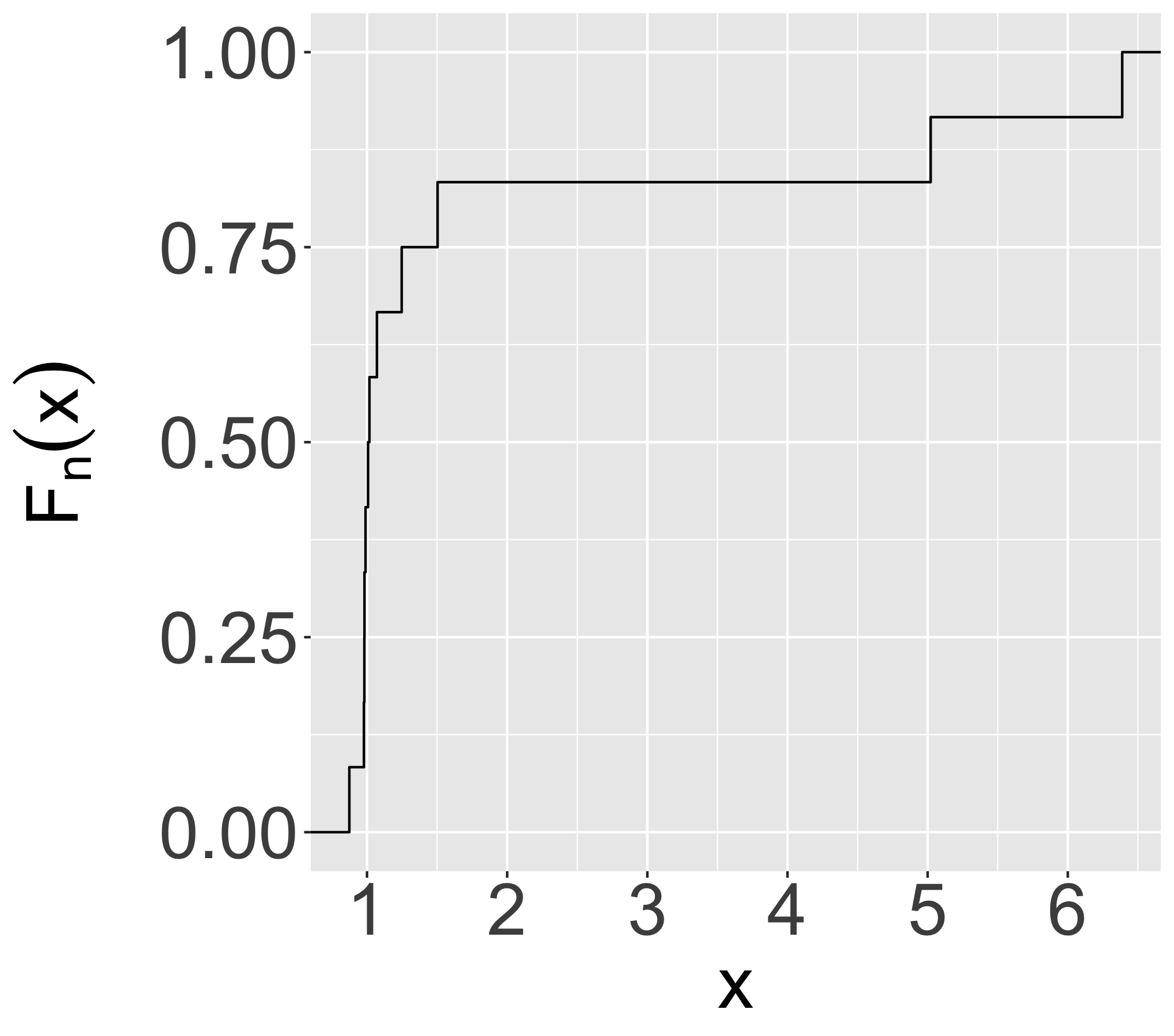}  
	  \caption{Inner envelope}
	\end{subfigure}
	\caption{The empirical cumulative distribution function of ${\rm sd}(\hat{\bm\beta}_{ij,\rm ols})/{\rm sd}(\hat{\bm\beta}_{ij,\rm env})$ (left plot) and ${\rm sd}(\hat{\bm\beta}_{ij,\rm ols})/{\rm sd}(\hat{\bm\beta}_{ij,\rm global})$ (right plot) for each element of $\bm \beta$ matrix from the \textit{iris} dataset.}
	\label{fig: ecdf1}
\end{figure}



\section{Real Data Analysis}\label{Sec: real}

Cardiovascular disease is a major cause of morbidity and mortality in patients with type 2 diabetes. The Action to Control Cardiovascular Risk in Diabetes (ACCORD) study in 2008 aimed to determine if the rate of cardiovascular disease (CVD) can be reduced in people with type 2 diabetes using intensive glycemic control, intensive blood pressure control, and multiple lipid management. Specifically, the ACCORD study randomized patients to either intensive glycemic control (HbA1c <6\%) or standard glycemic control (HbA1c 7-7.9\%). Participant were between the age of 40 to 82 who have been involved in the study for 2 to 7 years. Apart from diabetes, they also had a high risk of heart attack and stroke where each participant either has at least two risk factors for CVD and diabetes or has been diagnosed with CVD before the start of the study.

For our analysis, we are interested in investigating whether intensive glycemic control is associated with better outcomes after adjusting for baseline covariates. There are in total 6766 observations, 9 responses $\Y\in\mathbb R^9$ measuring the efficacy of intensive glycemic contorl, and 3 regressors $\X\in\mathbb R^3$. The response variables are treatment satisfaction, depression scale, aggregate physical activity score, aggregate mental score, symptom and distress score, systolic blood pressure (SBP), diastolic blood pressure, and heart rate. Similar to the simulation studies, we standardized the responses. The predictors are body weight, age and the treatment indicator. 

We first build a multivariate linear model to check if the residuals follows a normal distribution using the Shapiro–Wilk normality test. The $p$-values for each of the responses are significant under both tests, which suggests the normality assumption is violated. Next, we apply our proposed methods and the original envelope method. We remark that the envelope method chose a dimension of $u=9$, which equals the total number of responses and the envelope method may have limited practical value. In contrast, our method found a non-trivial inner envelope structure with dimensions $u = \text{dim}(\mathcal S_1) = 2$, and $d = \text{dim}(\mathcal S_3) = 3$. 

Once the dimension are selected, we assess the out-of-sample RMSE where we split 80\% of the dataset into training data and 20\% into test data, with 5416 and 1350 observations. Like Section \ref{Sec: simu}, We calculate the predictive RMSE of the test data using different machine learning algorithms where the responses are either the original responses or the responses utilizing the inner envelope structure. The results are shown in Table \ref{tb: real2}.

\begin{table}[!htb]
	\centering
	\begin{tabular}{||c c  c c ||} 
	\hline
	 & XGBoost & Random Forest & Linear \\ [0.5ex] 
	\hline\hline
	Using inner envelope structure  &0.9918 & 0.9854 &  0.9834  \\ 
	\hline
	Using original response & 0.9949 & 0.9856 & 0.9835 \\
	\hline
   \end{tabular}
   \caption{Predictive root-mean squared error (RMSE) for the test data in the ACCORD study}
   \label{tb: real2}
\end{table}

Comparing the rows of Table \ref{tb: real2}, the predictive RMSE is slightly smaller for all the methods if we use the inner envelope structure. 
Also, comparing the columns of Table \ref{tb: real2}, the linear model seems to have a better predictive performance compared to more complex methods. Given this, we compare the estimation of the regression parameter $\bm\beta$ supposing that the underlying model is linear. Specifically, we use the nonparametric bootstrap to obtain estimates of the standard errors of the OLS estimator and our globally efficient semiparametric inner envelope estimator of $\bm \beta$. The point estimates, bootstrap standard errors and $p$-values for the parameters in a linear model corresponding to the treatment are given in Table \ref{real_data1}.

\begin{table}[!htb]
	\centering
	\caption{The point estimates, bootstrap standard errors and $p$-values for the regression parameter corresponding treatment for the ACCORD study. Asterisks correspond to p-values that are less than or equal to 0.05.}
		\begin{tabular}{c|ccc|ccc}
			\hline
			&\multicolumn{3}{c}{Our Method}&\multicolumn{3}{c}{OLS} \\
			\hline
			Corresponding to Treatment& $\bm{\hat\beta}$ & $\hat {\mathrm{SE}}$   & $p$-value & $\bm{\hat\beta}$ & $\hat {\mathrm{SE}}$  &  $p$-value \\
			\hline
			Treatment Satisfaction &  0.029 & 0.023  & 0.22 & 0.027 & 0.023 & 0.24 \\ 
			Depression Scale & 0.054 & 0.026 &  0.04$^*$ & 0.054 & 0.026 & 0.04$^*$ \\  
      		Physical Score & 0.016 & 0.024  &  0.50 & 0.014 & 0.024 & 0.55 \\ 
			Mental Score &  -0.045 & 0.020  &  0.03$^*$ & -0.043 & 0.026 & 0.10 \\ 
			Interference Score & -0.011 & 0.015 & 0.46 & -0.009 & 0.027 & 0.72 \\ 
			Symptom \& Distress Score & 0.039 & 0.024 & 0.11 & 0.039 & 0.025 & 0.13 \\ 
			SBP & -0.012 & 0.005  & 0.02$^*$ & -0.034 & 0.022 & 0.12 \\ 
			DBP & 0.004 & 0.024 & 0.86 & 0.002 & 0.025 & 0.94 \\ 
			Heart Rate & 0.039 & 0.020  &  0.05$^*$ & 0.038 & 0.023 & 0.11 \\ 
			\hline	
			
		\end{tabular}
	\label{real_data1}
\end{table}

Our method found that the treatment had a significant effect on mental score, SBP and heart rate. In contrast, OLS found that the treatment was not significantly associated with these responses. This may be because our method is more efficient than the OLS, leading to more power to reject the null hypothesis of no effect. Indeed,  the estimated standard errors of our method are generally smaller compared to that from the OLS estimator. For example, the mean ratio of the standard error using our method over that using the OLS estimator is 1.55. 


\section{Summary and Discussion}
In this paper, we proposed a semiparametric approach to the inner envelope, a dimension reduction method proposed by \cite{su2012inner} for the linear multivariate regression models. We derived the orthogonal nuisance tangent space, score function and efficient scores to estimate the inner envelope, all without having to make parametric modeling assumptions between the response, the covariates, and or the error terms. 
We also prooposed a simple GMM estimator from a set of estimating equations based on moment conditions. 


We briefly take a moment to highlight some key limitations of our work. First, we assume throughout the theoretical results that the dimension of the inner envelope is fixed, even though in practice it is estimated from data. While we have shown that our bootstrap-based approach can reliably select the dimension numerically, we leave it as future work to derive the statistical properties of the estimated inner envelope with an estimated dimension. Second, we have not explored joint dimension reductions of both the responses and the predictors, which may bring further efficiency gains on the parameters of interest.

\section{Proofs}\label{sec: proof}
\subsection{Proof of Lemma \ref{lemma: repar}}
Take any matrix $\mathbf A_1\in\mathbb R^{k\times \tau}$ that is full column rank and spans a $\tau$-dimensional subspace $\mathcal A$ in a $k$-dimensional space. We can represent $\mathbf A_1$ by $\mathbf A_1=\left(\mathbf A_{u}^{T}, \mathbf A_{l}^{T}\right)^{T},$ where $\mathbf A_{u} \in \mathbb{R}^{\tau \times \tau}$ and ${\mathbf A}_{l} \in \mathbb{R}^{(k-\tau) \times \tau}$.
Without loss of generality, we can assume that $\mathbf A_{u}$ is invertible. Then, $\mathcal A$ is uniquely represented by $\mathbf A_{\rm{r e p r}}=\mathbf A_1 \mathbf A_{u}^{-1}=\big\{\mathbf{I}_{u},\left(\mathbf A_{l} \mathbf A_{u}^{-1}\right)^{T}\big\}^{T}$, and the lower $(p-u) \times u$ submatrix $\mathbf A_{l} \mathbf A_{u}^{-1}$ uniquely parameterizes $\mathcal A$. Let $\mathbf a=\text{vec}(\mathbf A_{l} \mathbf A_{u}^{-1})$ denote the vector concatenation of the lower part of $\mathbf A_{\rm repr}$. Since the mapping between $\mathbf A_{repr}$ and $\mathcal A$ is one-to-one, there exists a one-to-one mapping $\psi_1$ such that $\mathbf a = \psi_1(\mathcal A)$. 

Because ${\mathbf A}_{\rm{r e p r}}$ is not a semi-orthogonal matrix, we use Gram-Schmidt procedure to obtain a unique semi-orthogonal matrix $\mathbf A$ from $\mathbf{A}_{\rm r e p r}$. Hence, there exists $\tilde\psi$ such that $\mathbf A = \psi_2(\mathcal A)$. One can show that $\psi_2$ is also a one-to-one mapping. The relationship between $\mathbf a$, $\mathbf A_{\rm repr}$ and $\mathbf A$ are shown in the Figure \ref{fig: relationship}. Notice that if we decompose ${\mathbf A}$ as $\mathbf {A} = ({\mathbf A}_{\text{up}}\T, \mathbf{A}_{\text{low}}\T)\T$ as before, we also have $\mathbf A_{\text{repr}} = \mathbf{A}{\mathbf A}_{\text{up}}^{-1}$. Therefore, we have unique representations of the space $\mathcal A$ by a variational independent parameter $\mathbf a$ and by an orthogonal matrix $\mathbf A$.
\begin{figure}[!htb]
 \small{\[\mathbf a = \begin{pmatrix}
 a_{\tau+1,1}\\
 \vdots\\
 a_{k,1}\\
 \vdots\\
 a_{\tau+1,\tau}\\
 \vdots\\
 a_{k,\tau}
\end{pmatrix}\xrightleftharpoons[\text{Vectorize lower submatrix}]{\text{Stack }\mathbf a\text{ and concatenate } \mathbf I_\tau }\mathbf A_{\rm{repr}}=\begin{pmatrix}
 1 & \cdots  & 0 \\
 \vdots &  \ddots & \vdots \\
 0 & \cdots & 1\\

 a_{\tau+1,1} &  \cdots &  a_{\tau+1,\tau}\\
 \vdots & \ddots & \vdots\\
 a_{k,1} &  \cdots &  a_{k,\tau}
\end{pmatrix}\xrightleftharpoons[\mathbf{A}_{{\rm r e p r}}=\mathbf{A} \mathbf{A}_{\text{up}}^{-1}]{\text{Gram-Schmidt}}{\mathbf A} = \begin{pmatrix}
 {\mathbf A}_{\text{up}}\\
{\mathbf A}_{\text{low}}
\end{pmatrix}\]} 
\caption{Unique parameterization of any space $\mathcal A$.}\label{fig: relationship}
\end{figure}

\subsection{Proof of Theorem \ref{thm: local}}
\begin{proof}
	We prove the consistency and asymptotic normality by checking conditions in Theorem 2.1 (Lemma 6 in the Supplements) and 3.1 (Lemma 7 in the Supplements) from \cite{newey1994large}, and the efficiency by checking the asymptotic variance achieves the semiparametric efficiency bound. We write $X_n = o_p(1)$ if $X_n\overset{p}{\rightarrow}0$ and $X_n = O_p(1)$ if for all $\epsilon > 0$ there exists $M$ such that $\sup_n \mathrm P(\|X_n >M\|)<\epsilon$. Throughout the proof of this theorem, we let $\bm\theta_0$ denote the true value of $\bm\theta$.
	
	
	We firstly prove the $\hat{\bm\theta}$ obtained by solving equation \eqref{eq:Shat_local} satisfies $\hat{\bm\theta}\overset{p}{\rightarrow}\bm\theta_0$. 
	
	We prove by checking conditions of Theorem 2.1 in \cite{newey1994large}. Consider the following two functions: 
	{\small $$Q_0(\bm\theta) = \dfrac{1}{2}\left\{\frac{1}{n}\sum_{i = 1}^n {S}_{\rm eff}^*(\Y_i, \X_i, \eta_{1,2,3}^*, {\bm \Delta}_1, {\bm \Delta}_2; {\bm\theta})\right\}^2$$
	$$\hat Q_n(\bm\theta) = \dfrac{1}{2}\left\{\frac{1}{n}\sum_{i = 1}^n \hat{S}_{\rm eff}^*(\Y_i, \X_i, \hat\eta_{1,2,3}^*, \hat{\bm \Delta}_1, \hat{\bm \Delta}_2; {\bm\theta})\right\}^2.$$}
	Let $\hat{\bm\theta}_0$ be a minimizer of $Q_0(\bm\theta)$. By regularity condition (B2), $\hat{\bm\theta}_0$ is unique and $\hat{\bm\theta}_0\overset{p}{\rightarrow}\bm\theta_0$. Also, $\hat{\bm\theta}$ is the minimizer of $\hat Q_n(\bm\theta)$. Because the parameter space $\bm\Theta$ is compact and $Q_0(\bm\theta)$ is continuous, in order to apply Theorem 2.1 in \cite{newey1994large}, we only need to show condition (iv) holds. That is, $\hat Q_n(\bm\theta)$ converges uniformly in probability to $Q_0(\bm\theta)$ for $\bm\theta\in\bm\Theta$. By Lemma 2.8 in \cite{newey1994large}, we only need to show (1) $\hat Q_n(\bm\theta)\overset{p}{\rightarrow}Q_0(\bm\theta)$ for any $\bm\theta\in\bm\Theta$; and (2) $\hat Q_n(\bm\theta)$ is stochastic equicontinuous. By the continuous mapping theorem, we only need to show $$\hat{S}_{\rm eff}^*(\Y_i, \X_i, \eta_{1,2,3}^*, \hat{\bm \Delta}_1, \hat{\bm \Delta}_2; {\bm\theta})\overset{p}{\rightarrow}{S}_{\rm eff}^*(\Y_i, \X_i, \eta_{1,2,3}^*, {\bm \Delta}_1, {\bm \Delta}_2; {\bm\theta}).$$ 
	Since $\hat{S}_{\rm eff}^*(\Y_i, \X_i, \eta_{1,2,3}^*, \hat{\bm \Delta}_1, \hat{\bm \Delta}_2)$ is a continuous function of $\hat{\bm \Delta}_1, \hat{\bm \Delta}_2, \hat{\E}\{{\partial \log{\eta}_1^*}/{\partial ({\bm\Gamma}\T\Y_i)\T}\mid \X_i\}, \hat{\E}\{{\partial \log{\eta}_2^*}/{\partial ({\bm\Gamma_0\T}\Y_i)\T}\mid \B_0\T\bm{\Gamma}_0\T\Y_i, \X_i\}$ and $\hat{\E}\{{\partial \log{\eta}_3^*}/{\partial ({\B_0\T\bm\Gamma_0\T}\Y_i)\T}\}$; and by the properties of nonparametric regression, $\hat{\bm\Delta}_1\overset{p}{\rightarrow}\bm\Delta_1,$ $\hat{\bm\Delta}_2(\bm\theta)\overset{p}{\rightarrow}\bm\Delta_2(\bm\theta),$ where $\hat{\bm\Delta}_{i1} = \Y_i - \hat\E(\Y_i\mid\X_i)$, $\hat{\bm\Delta}_{i2}({\bm\theta}) = \mathbf P_{{\bm\Gamma}_0\B}\{\hat\E(\Y_i\mid {\B}_0{\bm\Gamma}_0\T\Y_i,\X_i) - \hat\E(\Y_i\mid {\B}_0{\bm\Gamma}_0\T\Y_i)\}$, and 
	{\small\begin{align*}
		&\hat{\E}\left\{\frac{\partial \log{\eta}_1^*}{\partial ({\bm\Gamma}\T\Y_i)\T}\bmid \X_i\right\}\overset{p}{\rightarrow} {\E}\left\{\frac{\partial \log{\eta}_1^*}{\partial ({\bm\Gamma}\T\Y_i)\T}\bmid \X_i\right\},\\
		&\hat{\E}\left\{\frac{\partial \log{\eta}_2^*}{\partial ({\bm\Gamma_0\T}\Y_i)\T}\bmid \B_0\T\bm{\Gamma}_0\T\Y_i, \X_i\right\}\overset{p}{\rightarrow}\E\left\{\frac{\partial \log{\eta}_2^*}{\partial ({\bm\Gamma_0\T}\Y_i)\T}\bmid \B_0\T\bm{\Gamma}_0\T\Y_i, \X_i\right\},\\
		&\hat{\E}\left\{\frac{\partial \log{\eta}_3^*}{\partial ({\B_0\T\bm\Gamma_0\T}\Y_i)\T}\right\}\overset{p}{\rightarrow}{\E}\left\{\frac{\partial \log{\eta}_3^*}{\partial ({\B_0\T\bm\Gamma_0\T}\Y_i)\T}\right\};
	\end{align*}}
	by continuous mapping theorem, $$\hat{S}_{\rm eff}^*(\Y_i, \X_i, \eta_{1,2,3}^*, \hat{\bm \Delta}_1, \hat{\bm \Delta}_2; {\bm\theta})\overset{p}{\rightarrow}{S}_{\rm eff}^*(\Y_i, \X_i, \eta_{1,2,3}^*, {\bm \Delta}_1, {\bm \Delta}_2; {\bm\theta})$$ 
	for any $\bm\theta\in\bm\Theta$. To prove the stochastic equicontinuity of $\hat Q_n(\bm\theta)$, by Lemma 2.9 of  \cite{newey1994large}, we only need to show $\forall \bm\theta_1,\bm\theta_2\in\bm\Theta$, there exists $B_n = O_p(1)$ such that $|\hat Q_n(\bm\theta_1) - \hat Q_n(\bm\theta_2)|\leq B_n\|\bm\theta_1-\bm\theta_2\|$. By regularity conditions (A5) and (B1), $\hat Q_n(\bm\theta)$ is differentiable. By mean value theorem, there exists $\bar{\bm\theta}\in\bm\Theta$ such that $|\hat Q_n(\bm\theta_1) - \hat Q_n(\bm\theta_2)|\leq \|\hat Q_n'(\bar{\bm\theta})\|\|\bm\theta_1-\bm\theta_2\|$. By Theorem 2.9 in \cite{li2007nonparametric} and regularity condition (B3), $\partial \hat{\bm\Delta}_2(\bm\theta)/\partial \bm\theta - \partial{\bm\Delta}_2(\bm\theta)/\partial \bm\theta = O_p(n^{1/2}h^{1+(p+r-u-d)/2}) = o_p(1)$. Because $\log \eta_{1,2,3}^*$ are twice differentiable, we have  
	{\small\begin{align*}
		&\hat{\E}\left\{\frac{\partial^2 \log{\eta}_1^*}{\partial\bm\theta\partial ({\bm\Gamma}\T\Y_i)\T}\bmid \X_i\right\}\overset{p}{\rightarrow} {\E}\left\{\frac{\partial^2 \log{\eta}_1^*}{\partial\bm\theta\partial ({\bm\Gamma}\T\Y_i)\T}\bmid \X_i\right\},\\
		&\hat{\E}\left\{\frac{\partial^2 \log{\eta}_2^*}{\partial\bm\theta\partial ({\bm\Gamma_0\T}\Y_i)\T}\bmid \B_0\T\bm{\Gamma}_0\T\Y_i, \X_i\right\}\overset{p}{\rightarrow}\E\left\{\frac{\partial^2 \log{\eta}_2^*}{\partial\bm\theta\partial ({\bm\Gamma_0\T}\Y_i)\T}\bmid \B_0\T\bm{\Gamma}_0\T\Y_i, \X_i\right\},\\
		&\hat{\E}\left\{\frac{\partial^2 \log{\eta}_3^*}{\partial\bm\theta\partial ({\B_0\T\bm\Gamma_0\T}\Y_i)\T}\right\}\overset{p}{\rightarrow}{\E}\left\{\frac{\partial^2 \log{\eta}_3^*}{\partial\bm\theta\partial ({\B_0\T\bm\Gamma_0\T}\Y_i)\T}\right\}.
	\end{align*}}
	Therefore, by continuous mapping theorem, $\hat{Q}'_n(\bm\theta)\overset{p}{\rightarrow}{Q}'_0(\bm\theta)$. Hence, $B_n = \sup_{\bm\theta\in\bm\Theta}\hat Q_n'({\bm\theta}) \leq \sup_{\bm\theta\in\bm\Theta} Q_0'({\bm\theta})+ \sup_{\bm\theta\in\bm\Theta}|Q_0'({\bm\theta})-\hat Q_n'({\bm\theta})|$. Since $Q_0'({\bm\theta})$ is a continuous function in a compact set, $\sup_{\bm\theta\in\bm\Theta} Q_0'({\bm\theta})$ is $O_p(1)$. Suppose $\sup_{\bm\theta\in\bm\Theta}|Q_0'({\bm\theta})-\hat Q_n'({\bm\theta})| = \infty$, then there exists a sequence $\{\bm\theta_{k}\}$ such that $|Q_0'({\bm\theta_k})-\hat Q_n'({\bm\theta_k})|\rightarrow\infty$. That is, for any $C > 0$, there exists $N\in\mathbb N^{+}$ such that for any $n\geq N$, $|Q_0'({\bm\theta_n})-\hat Q_n'({\bm\theta_n})|> C$, which contradicts with the fact that $\hat Q_n'({\bm\theta_n})\overset{p}{\rightarrow}Q_0'({\bm\theta_n})$. Hence, $\sup_{\bm\theta\in\bm\Theta}|Q_0'({\bm\theta})-\hat Q_n'({\bm\theta})| = o_p(1)$. Therefore, all the conditions in Theorem 2.1 from \cite{newey1994large} hold, and we have $\hat{\bm\theta}\overset{p}{\rightarrow}\bm\theta_0$.
	
	Then, in order to show the asymptotic normality of $\hat{\bm\theta}$, we only need to verify conditions (i)--(v) in Theorem 3.1 from \cite{newey1994large}. Conditions (i)--(ii) are already satisfied. We then prove $\sqrt n{\partial \hat Q_n(\bm\theta_0)}/{\partial \bm\theta}$ is asymptotically normal.
	
	Because
	{\small\begin{align*}
		&\sqrt n\frac{\partial \hat Q_n(\bm\theta_0)}{\partial \bm\theta} = \frac{1}{\sqrt n}\sum_{i = 1}^n \hat{S}_{\rm eff}^*(\Y_i, \X_i, \hat\eta_{1,2,3}^*, \hat{\bm \Delta}_1, \hat{\bm \Delta}_2; {\bm\theta_0})\cdot \frac{1}{n}\sum_{i=1}^n\dfrac{\partial}{\partial\bm\theta}\hat S_{\rm eff}^*(\Y_i,\X_i, \hat\eta_{1,2,3}^*, \hat{\bm \Delta}_1, \hat{\bm \Delta}_2;\bm\theta_0),
	\end{align*}}
	and 
	{\small$$\frac{1}{n}\sum_{i=1}^n\dfrac{\partial}{\partial\bm\theta}\hat S_{\rm eff}^*(\Y_i,\X_i, \hat\eta_{1,2,3}^*, \hat{\bm \Delta}_1, \hat{\bm \Delta}_2;\bm\theta_0)\overset{p}{\rightarrow}\E\left\{\dfrac{\partial}{\partial\bm\theta}S_{\rm eff}^*(\Y,\X, \eta_1^*, \eta_2^*, \eta_3^*;\bm\theta_0)\right\},$$}
	by Slutsky's Theorem, we only need to show $n^{-1/2}\sum_{i = 1}^n \hat{S}_{\rm eff}^*(\Y_i, \X_i, \hat\eta_{1,2,3}^*, \hat{\bm \Delta}_1, \hat{\bm \Delta}_2; {\bm\theta_0})$ converges to a normal distribution. Also, because 
	{\small$$\frac{1}{\sqrt n}\sum_{i = 1}^n {S}_{\rm eff}^*(\Y_i, \X_i, \eta_{1,2,3}^*, {\bm \Delta}_1, {\bm \Delta}_2; {\bm\theta_0}) \overset{d}{\rightarrow}\mathcal N\bigg[\bm 0, \E\big\{S_{\rm eff}^*(\Y_i,\X_i, \eta_1^*, \eta_2^*, \eta_3^*;\bm\theta_0)^{\otimes2}\big\}\bigg],$$}
    we only need to show 
	{\small$$\frac{1}{\sqrt n}\sum_{i = 1}^n \hat{S}_{\rm eff}^*(\Y_i, \X_i, \hat\eta_{1,2,3}^*, \hat{\bm \Delta}_1, \hat{\bm \Delta}_2; {\bm\theta_0})-\frac{1}{\sqrt n}\sum_{i = 1}^n {S}_{\rm eff}^*(\Y_i, \X_i, \eta_{1,2,3}^*, {\bm \Delta}_1, {\bm \Delta}_2; {\bm\theta_0})\overset{p}{\rightarrow}0.$$}
	The score function $S_{\text{eff}}^* = (S_{\text{eff}, \bm\gamma}^*, S_{\text{eff}, \mathbf b}^*)$ has two components. For simplicity, we only show the convergence of the second component $S_{\text{eff}, \mathbf b}^*$. The convergence of the first component can be proved using the same technique. Let $\bm\Gamma$ and $\mathbf B$ denote the orthogonal basis derived from $\bm\theta_0$. Then,
	{\small\begin{align*}
	    &\frac{1}{\sqrt n}\sum_{i = 1}^n \hat{S}_{\rm{eff},\mathbf b}^*(\Y_i, \X_i, \hat\eta_{1,2,3}^*, \hat{\bm \Delta}_1, \hat{\bm \Delta}_2; {\bm\theta_0}) - \frac{1}{\sqrt n}\sum_{i = 1}^n {S}_{\rm{eff},\mathbf b}^*(\Y_i, \X_i, \eta_{1,2,3}^*, {\bm \Delta}_1, {\bm \Delta}_2; {\bm\theta_0})\\
	    =&\dfrac{1}{\sqrt n}\sum_{i=1}^{n}\text{vec}\T\bigg[{\bm\Gamma}_0\T\hat{\bm\Delta}_{i2}({\bm\theta_0})\bigg\{\dfrac{\partial \log{\eta}_3^*}{\partial (\B_0\T{\bm\Gamma}_0\T\Y_i)\T} - \hat\E\bigg(\dfrac{\partial \log\eta_3^*}{\partial (\B_0\T{\bm\Gamma}_0\T\Y_i)\T}\bigg) \bigg\}\bigg] \dfrac{\partial \text{vec}({\B}_0)}{\partial {\mathbf b}\T}\\
	    &\quad -\dfrac{1}{\sqrt n}\sum_{i=1}^{n}\text{vec}\T\bigg[{\bm\Gamma}_0\T{\bm\Delta}_{i2}({\bm\theta_0})\bigg\{\dfrac{\partial \log{\eta}_3^*}{\partial (\B_0\T{\bm\Gamma}_0\T\Y_i)\T} - \E\bigg(\dfrac{\partial \log\eta_3^*}{\partial (\B_0\T{\bm\Gamma}_0\T\Y_i)\T}\bigg) \bigg\}\bigg] \dfrac{\partial \text{vec}({\B}_0)}{\partial {\mathbf b}\T}\\
	    =& \dfrac{1}{\sqrt n}\sum_{i=1}^{n}\text{vec}\T\bigg[{\bm\Gamma}_0\T\left\{\hat{\bm\Delta}_{i2}({\bm\theta_0}) - {\bm\Delta}_{i2}({\bm\theta_0})\right\}\bigg\{\dfrac{\partial \log\eta_3^*}{\partial (\B_0\T{\bm\Gamma}_0\T\Y_i)\T} - \E\bigg(\dfrac{\partial \log\eta_3^*}{\partial (\B_0\T{\bm\Gamma}_0\T\Y_i)\T}\bigg) \bigg\}\bigg] \dfrac{\partial \text{vec}({\B}_0)}{\partial {\mathbf b}\T}\\
	&\quad+ \dfrac{1}{\sqrt n}\sum_{i=1}^{n}\text{vec}\T\bigg[{\bm\Gamma}_0\T\left\{\hat{\bm\Delta}_{i2}({\bm\theta_0}) - {\bm\Delta}_{i2}({\bm\theta_0})\right\}\bigg\{\E\bigg(\dfrac{\partial \log\eta_3^*}{\partial (\B_0\T{\bm\Gamma}_0\T\Y_i)\T}\bigg) - \hat\E\bigg(\dfrac{\partial \log\eta_3^*}{\partial (\B_0\T{\bm\Gamma}_0\T\Y_i)\T}\bigg) \bigg\}\bigg] \dfrac{\partial \text{vec}({\B}_0)}{\partial{\mathbf b}\T}\\
	&\quad + \dfrac{1}{\sqrt n}\sum_{i=1}^{n}\text{vec}\T\bigg[{\bm\Gamma}_0\T{\bm\Delta}_{i2}({\bm\theta_0})\bigg\{\E\bigg(\dfrac{\partial \log\eta_3^*}{\partial (\B_0\T{\bm\Gamma}_0\T\Y_i)\T}\bigg) - \hat\E\bigg(\dfrac{\partial \log\eta_3^*}{\partial (\B_0\T{\bm\Gamma}_0\T\Y_i)\T}\bigg) \bigg\}\bigg] \dfrac{\partial \text{vec}({\B}_0)}{\partial{\mathbf b}\T} 
	\end{align*}}
By Lemma 5, the first term can be bounded by $O_p(h^2 + n^{-1/2}h^{p + r-u-d}\log n + n^{1/2}h^4)$. Under regularity condition \eqref{assump: bandwidth}, the second term is $o_p(1)$.

By Theorem 2.6 in \cite{li2007nonparametric}, under regularity condition \eqref{assump: finite},  
{\small $$\sup_{\X_i,\Y_i}|\hat{\bm\Delta}_{i2}({\bm\theta_0}) - {\bm\Delta}_{i2}({\bm\theta_0})| = O_p\bigg\{\left(\frac{\log n}{nh^{p+r-u-d}}\right)^{1/2}+h^2\bigg\}$$}
is $o_p(1)$ under regularity condition \eqref{assump: bandwidth}.

By the central limit theorem,
{\small \begin{align*}
	&\dfrac{1}{\sqrt n}\sum_{i=1}^{n}\text{vec}\T\bigg[\left\{\hat{\bm\Delta}_{i2}({\bm\theta_0}) - {\bm\Delta}_{i2}({\bm\theta_0})\right\}\bigg\{\E\bigg(\dfrac{\partial \log\eta_3^*}{\partial (\B_0\T{\bm\Gamma}_0\T\Y_i)\T}\bigg) - \hat\E\bigg(\dfrac{\partial \log\eta_3^*}{\partial (\B_0\T{\bm\Gamma}_0\T\Y_i)\T}\bigg) \bigg\}\bigg]\dfrac{\partial \text{vec}({\B}_0)}{\partial{\mathbf b}\T}\\
	=&\text{vec}\T\bigg[\frac{1}{ n}\sum_{i=1}^{ n}\big\{\hat{\bm\Delta}_{i2}({\bm\theta_0}) - {\bm\Delta}_{i2}({\bm\theta_0})\big\}\cdot\sqrt n \bigg\{\E\bigg(\dfrac{\partial \log\eta_3^*}{\partial (\B_0\T{\bm\Gamma}_0\T\Y_i)\T}\bigg) - \hat\E\bigg(\dfrac{\partial \log\eta_3^*}{\partial (\B_0\T{\bm\Gamma}_0\T\Y_i)\T}\bigg) \bigg\}\bigg]\dfrac{\partial \text{vec}({\B}_0)}{\partial{\mathbf b}\T}\\
	=&o_p(1)\cdot O_p(1) = o_p(1).
\end{align*}}
Hence, the second term is also $o_p(1)$.

Also,
{\small \begin{align*}
	&\frac{1}{\sqrt n}\sum_{i=1}^{n}{\bm\Delta}_{i2}({\bm\theta}_0)\bigg\{\E\bigg(\dfrac{\partial \log\eta_3^*}{\partial (\B_0\T{\bm\Gamma}_0\T\Y_i)\T}\bigg) - \hat\E\bigg(\dfrac{\partial \log\eta_3^*}{\partial (\B_0\T{\bm\Gamma}_0\T\Y_i)\T}\bigg) \bigg\}\\
	=&o_p(1)\cdot n^{-{1}/{2}}\sum_{i=1}^{n}{\bm\Delta}_{i2}({\bm\theta_0})\\
	=&o_p(1),
\end{align*}}
where the last equation is because ${\bm\Delta}_{i2}$ are i.i.d. mean $\bm 0$ random variables. Hence, the third term is also $o_p(1)$.

Therefore, {\small$$\frac{1}{\sqrt n}\sum_{i = 1}^n \hat{S}_{\rm{eff},\mathbf b}^*(\Y_i, \X_i, \hat\eta_{1,2,3}^*, \hat{\bm \Delta}_1, \hat{\bm \Delta}_2; {\bm\theta_0})-\frac{1}{\sqrt n}\sum_{i = 1}^n {S}_{\rm{eff},\mathbf b}^*(\Y_i, \X_i, \eta_{1,2,3}^*, {\bm \Delta}_1, {\bm \Delta}_2; {\bm\theta_0}) \overset{p}{\rightarrow}0 .$$}
Hence, by Slutsky's Theorem, {\small$$\sqrt n\frac{\partial \hat Q_n(\bm\theta_0)}{\partial \bm\theta}\overset{d}{\rightarrow}\mathcal N\{\bm 0, \mathbf C_2\T\mathbf D_2\mathbf C_2\},$$
	where $$\mathbf C_2 = \E\left\{\dfrac{\partial}{\partial\bm\theta}S_{\rm eff}^*(\Y,\X, \eta_1^*, \eta_2^*, \eta_3^*; \bm\theta_0)\right\}, \hskip .5cm  \mathbf D_2 = \E\left\{S_{\rm eff}^*(\Y_i,\X_i, \eta_1^*, \eta_2^*, \eta_3^*;\bm\theta_0)^{\otimes2}\right\}.$$}
Next, we verify the conditions (iv)--(v) that are related to $\nabla_{\bm\theta\bm\theta}\hat{Q}_n(\bm\theta)$. Notice that 
	\begin{align*}
	   &\nabla_{\bm\theta\bm\theta}\hat{Q}_n(\bm\theta) 
	   \overset{p}{\rightarrow}\E\left\{\dfrac{\partial}{\partial\bm\theta}S_{\rm eff}^*(\Y_i,\X_i, \eta_{1,2,3}^*, {\bm \Delta}_1, {\bm \Delta}_2;\bm\theta)\right\}\cdot\E\left\{\dfrac{\partial}{\partial\bm\theta\T}S_{\rm eff}^*(\Y_i,\X_i, \eta_{1,2,3}^*, {\bm \Delta}_1, {\bm \Delta}_2;\bm\theta)\right\}\\
	   &+\E\left\{S_{\rm eff}^*(\Y_i,\X_i, \eta_{1,2,3}^*, {\bm \Delta}_1, {\bm \Delta}_2;\bm\theta)\right\}\cdot\E\left\{\dfrac{\partial^2}{\partial\bm\theta\partial\bm\theta\T}S_{\rm eff}^*(\Y_i,\X_i, \eta_{1,2,3}^*, {\bm \Delta}_1, {\bm \Delta}_2;\bm\theta)\right\}
	   = H(\bm\theta).
	\end{align*}
	Following the same argument as proving the uniform convergence in probability for $\partial\hat{Q}_n(\bm\theta)/\partial\bm\theta$, because the third order derivative of $\log\eta_{1,2,3}^*$ exists, we have $\nabla_{\bm\theta\bm\theta}\hat{Q}_n(\bm\theta)\overset{p}{\rightarrow}H(\bm\theta)$ uniformly for $\bm\theta\in\bm\Theta$.
	Hence, condition (iv) holds. Because $$\dfrac{1}{n}\sum_{i=1}^n S_{\rm eff}^*(\Y_i,\X_i, \eta_{1,2,3}^*, {\bm \Delta}_1, {\bm \Delta}_2;\bm\theta_0)\overset{p}{\rightarrow}\bm 0,$$we have $H(\bm\theta_0) = \mathbf C_2\mathbf C_2\T$. Since $\mathbf C_2$ is nonsingular, $H(\bm\theta_0)$ is nonsingular. Hence, condition (v) holds. Therefore, by Theorem 3.1 in \cite{newey1994large},
	$$\sqrt n (\bm\theta_0-\bm\theta)\overset{p}{\rightarrow}\mathcal N\{\bm 0, H(\bm\theta_0)^{-1}\mathbf C_2\T\mathbf D_2\mathbf C_2H(\bm\theta_0)^{-1}\} = \mathcal N\{\bm0, \mathbf C_2^{-1}\mathbf D_2(\mathbf C_2\T)^{-1}\}.$$

\end{proof}

\section*{Acknowledgements}
We would like to thank Professor Yanyuan Ma for the helpful discussion for the proofs. The authors were supported by NSF-DMS 1916013 and NIH U24-DK-060990.

\begin{appendix}

	\section{Proof of Lemmas and Theorems}\label{appB}
	From equation \eqref{likelihood}, the nuisance tangent space can be written as
	\begin{equation*}
		\begin{aligned}
		\Lambda &= \{s_1(\bm\Gamma\T\Y\mid \X) + s_2(\mathbf B\T\bm\Gamma_0\T\Y \mid \mathbf B_0\T\bm\Gamma_0\T\Y, \X) + s_3(\mathbf B_0\T\bm\Gamma_0\T\Y) + s_4(\X)\}\\
		&=\Lambda_1 \cup\Lambda_2\cup\Lambda_3\cup\Lambda_4,
		\end{aligned}
		\end{equation*} 
		 where $s_1$, $s_2$, $s_3$ and $s_4$ are measurable functions satisfying 
		 \[\int s_1(\bm\Gamma\T\y\mid \x)\eta_1(\bm\Gamma\T\y\mid\x)d\y = \bm 0 \text{, \hspace{2mm} for any }\x,\]
		 \[\int s_2(\B\T\bm\Gamma_0\T\y\mid \B_0\T\bm\Gamma_0\T\y,\x)\eta_2(\B\T\bm\Gamma_0\T\y\mid\B_0\T\bm\Gamma_0\T\y,\x)d\y = \bm 0, \quad\text{for any }(\B_0\T\bm\Gamma_0\T\y,\x),\]
		\[\int s_3(\B_0\T\bm\Gamma_0\T\y)d\y = \bm 0,\hspace{1cm} \int s_4(\x)d\x = \bm 0,\]
		and $\Lambda_i$ is the Hilbert space spanned by $s_i$ for $i = 1,\ldots,4$. Therefore, the structure of $\Lambda_i$ has the following representation
		\begin{equation*}
	\begin{aligned}
	&\Lambda_1 = \{\mathbf f(\bm\Gamma\T\Y, \X): \forall \mathbf f\text{ such that } \E(\mathbf f\mid \X) = \bm 0 \},\\
	&\Lambda_2 = \{\mathbf f(\mathbf B\T\bm\Gamma_0\T\Y, \mathbf B_0\T\bm\Gamma_0\T\Y, \X): \forall \mathbf f\text{ such that } \E(\mathbf f\mid \mathbf B_0\T\bm\Gamma_0\T\Y, \X) = \bm 0 \},\\
	&\Lambda_3 = \{\mathbf f(\mathbf B_0\T\bm\Gamma_0\T\Y): \forall \mathbf f\text{ such that } \E(\mathbf f) = \bm 0 \},\\
	&\Lambda_4 = \{\mathbf f(\X): \forall \mathbf f\text{ such that } \E (\mathbf f) = \bm 0 \},
	\end{aligned}
	\end{equation*}
	where $\mathbf f$ is a square integrable function.
	The following lemmas provide an orthogonal decomposition of the nuisance tangent space and the form of the orthogonal nuisance tangent space.
	\begin{lemma}\label{Prop: structure tangent space}
		Under Condition (\ref{eq: semi_inner1}) and (\ref{eq: semi_inner2}), the nuisance tangent space can be written as $$\Lambda = \Lambda_1\oplus\Lambda_2\oplus\Lambda_3\oplus\Lambda_4.$$
	\end{lemma}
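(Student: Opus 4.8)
The plan is to show that the four subspaces $\Lambda_1,\Lambda_2,\Lambda_3,\Lambda_4$ are mutually orthogonal in $L^2$. Since each is a closed linear subspace and there are only finitely many of them, their sum is already closed and equals the nuisance tangent space $\Lambda$; mutual orthogonality then upgrades $\Lambda_1+\Lambda_2+\Lambda_3+\Lambda_4$ to the orthogonal direct sum $\Lambda_1\oplus\Lambda_2\oplus\Lambda_3\oplus\Lambda_4$. So the entire argument reduces to verifying the six pairwise orthogonality relations, each by the tower property of conditional expectation together with the conditional-independence statements in Conditions \ref{eq: semi_inner1} and \ref{eq: semi_inner2}, using the explicit representations of the $\Lambda_i$ recorded above.

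First I would dispatch the pairs requiring nothing beyond the defining mean-zero constraints. For $\mathbf f_1\in\Lambda_1$ and $\mathbf f_4\in\Lambda_4$, conditioning on $\X$ gives $\E(\mathbf f_1\mathbf f_4)=\E\{\mathbf f_4\,\E(\mathbf f_1\mid\X)\}=\bm 0$. For $\mathbf f_2\in\Lambda_2$ paired with either $\mathbf f_3\in\Lambda_3$ or $\mathbf f_4\in\Lambda_4$, conditioning on $(\mathbf B_0\T\bm\Gamma_0\T\Y,\X)$ and using $\E(\mathbf f_2\mid\mathbf B_0\T\bm\Gamma_0\T\Y,\X)=\bm 0$ annihilates the product, since $\mathbf f_3$ and $\mathbf f_4$ are both measurable with respect to $(\mathbf B_0\T\bm\Gamma_0\T\Y,\X)$. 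Finally $\mathbf f_3\perp\mathbf f_4$ follows from Condition \ref{eq: semi_inner1}: because $\mathbf B_0\T\bm\Gamma_0\T\Y\indep\X$, we have $\E(\mathbf f_3\mathbf f_4)=\E(\mathbf f_3)\E(\mathbf f_4)=\bm 0$.

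The two pairs that genuinely invoke Condition \ref{eq: semi_inner2} are $\Lambda_1\perp\Lambda_2$ and $\Lambda_1\perp\Lambda_3$. For the former, condition on $(\bm\Gamma_0\T\Y,\X)$: since $\mathbf f_2$ is a function of $(\bm\Gamma_0\T\Y,\X)$ it factors out, and by $\bm\Gamma\T\Y\indep\bm\Gamma_0\T\Y\mid\X$ we get $\E(\mathbf f_1\mid\bm\Gamma_0\T\Y,\X)=\E(\mathbf f_1\mid\X)=\bm 0$, so $\E(\mathbf f_1\mathbf f_2)=\bm 0$. For the latter, condition on $\X$: Condition \ref{eq: semi_inner2} makes $\bm\Gamma\T\Y$ conditionally independent of $\mathbf B_0\T\bm\Gamma_0\T\Y$ given $\X$, hence $\E(\mathbf f_1\mathbf f_3\mid\X)=\E(\mathbf f_1\mid\X)\,\E(\mathbf f_3\mid\X)=\bm 0$, and the unconditional expectation vanishes.

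I expect the main obstacle to be bookkeeping rather than anything deep: one must be careful that the representation of each $\Lambda_i$ recorded above is indeed the correct tangent space of the corresponding (conditional) density factor in \eqref{likelihood}, and that the conditional-independence reductions — $\E(\,\cdot\mid\bm\Gamma_0\T\Y,\X)=\E(\,\cdot\mid\X)$ for functions of $(\bm\Gamma\T\Y,\X)$, and the conditional independence of $\bm\Gamma\T\Y$ from functions of $\bm\Gamma_0\T\Y$ given $\X$ — are applied to the right conditioning sigma-fields. Once these are pinned down, the six computations are short and essentially identical in flavor, and the claimed decomposition follows.
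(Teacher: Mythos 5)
Your proposal is correct and follows essentially the same route as the paper: establish the six pairwise orthogonality relations by iterated conditioning, using Condition \ref{eq: semi_inner2} for the $\Lambda_1$--$\Lambda_2$ and $\Lambda_1$--$\Lambda_3$ pairs and the defining mean-zero constraints for the rest (the paper dismisses the $\Lambda_4$ pairs as obvious, whereas you correctly note that $\Lambda_3\perp\Lambda_4$ in fact uses Condition \ref{eq: semi_inner1}). The only nit is that the closedness of the sum follows from the mutual orthogonality rather than from finiteness alone, but since you establish orthogonality this does not affect the argument.
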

	\begin{proof}[Proof of Lemma \ref{Prop: structure tangent space}]
		We prove that $\Lambda_1\perp\Lambda_2\perp\Lambda_3\perp\Lambda_4$. Obviously, $\Lambda_4 \perp (\Lambda_1\cup\Lambda_2\cup\Lambda_3)$. We only prove that $\Lambda_1$, $\Lambda_2$ and $\Lambda_3$ are orthogonal to each other.
	For any $f_1\in \Lambda_1$ and $f_2\in\Lambda_2$,
	\begin{equation*}
	\begin{aligned}
	&\E\{f_1\T(\bm\Gamma\T\Y,\X)f_2(\mathbf B\T\bm\Gamma_0\T\Y, \mathbf B_0\T\bm\Gamma_0\T\Y, \X)\}\\
	=&\E[ \E\{f_1\T(\bm\Gamma\T\Y,\X)f_2(\mathbf B\T\bm\Gamma_0\T\Y, \mathbf B_0\T\bm\Gamma_0\T\Y, \X)\mid \X\}]\\
	=&\E[ \E\{f_1\T(\bm\Gamma\T\Y,\X)\mid \X \}\E\{f_2(\mathbf B\T\bm\Gamma_0\T\Y, \mathbf B_0\T\bm\Gamma_0\T\Y, \X)\mid \X\}]=\bm 0.\\
	\end{aligned}
	\end{equation*} Hence, $\Lambda_1\perp \Lambda_2$. The second equation to the third equation is because $\bm\Gamma\T\Y\indep \bm\Gamma_0\T\Y\mid \X$. The third equation equals to $\bm0$ because $\E(f_1\mid\X) = \bm 0$. For any $f_1\in \Lambda_1$ and $f_3\in \Lambda_3$, \begin{equation*}
	  \begin{aligned}
	  &\E\{f_1\T(\bm\Gamma\T\Y,\X)f_3(\mathbf B_0\T\bm\Gamma_0\T\Y)\}\\
	  =&\E[\E\{f_1\T(\bm\Gamma\T\Y,\X)f_3(\mathbf B_0\T\bm\Gamma_0\T\Y)\mid \X\}]\\
	  =&\E[\E\{f_1\T(\bm\Gamma\T\Y,\X)\mid \X\}\E\{f_3(\mathbf B_0\T\bm\Gamma_0\T\Y)\mid \X\}=\bm 0.
	  \end{aligned}
	  \end{equation*} 
	  Thus, $\Lambda_1\perp \Lambda_3$. The second to the third equation is again because of Condition \ref{eq: semi_inner1}. For any $f_2\in \Lambda_2$ and$f_3\in\Lambda_3$, 
	  \begin{equation*}
	  \begin{aligned}
	  &\E\{f_2\T(\mathbf B\T\bm\Gamma_0\T\Y, \mathbf B_0\T\bm\Gamma_0\T\Y, \X)f_3(\mathbf B_0\T\bm\Gamma_0\T\Y)\}\\
	  =&\E[ \E\{f_2\T(\mathbf B\T\bm\Gamma_0\T\Y, \mathbf B_0\T\bm\Gamma_0\T\Y, \X)f_3(\mathbf B_0\T\bm\Gamma_0\T\Y)\mid\mathbf B_0\T\bm\Gamma_0\T\Y, \X\}]\\
	  =&\E[ \E\{f_2\T(\mathbf B\T\bm\Gamma_0\T\Y, \mathbf B_0\T\bm\Gamma_0\T\Y, \X)\mid\mathbf B_0\T\bm\Gamma_0\T\Y, \X\}f_3(\mathbf B_0\T\bm\Gamma_0\T\Y)] = \bm 0
	  \end{aligned}
	  \end{equation*} The last equation equals to 0 because $\E(f_2\mid\mathbf B_0\T\bm\Gamma_0\T\Y, \X) = 0$. Hence, $\Lambda_i$, $i = 1,\ldots,4$ are orthogonal to each other. Therefore, $$\Lambda =\Lambda_1\oplus\Lambda_2\oplus\Lambda_3\oplus\Lambda_4.$$
	\end{proof}
	\begin{lemma}\label{prop: orthogonal tangent space}
		The orthogonal nuisance tangent space	$\Lambda^\perp = \Lambda_1^\perp\cap\Lambda_2^\perp\cap\Lambda_3^\perp\cap\Lambda_4^\perp$ where 
		\begin{equation*}
		\begin{aligned}
		&\Lambda_{1}^\perp = \{\mathbf f(\Y,\X): \E(\mathbf f\mid \bm\Gamma\T\Y,\X) \text{ is a function of }\X \},\\
		&\Lambda_{2}^\perp = \{\mathbf f(\Y,\X): \E(\mathbf f\mid\mathbf B\T\bm\Gamma_0\T\Y, \mathbf B_0\T\bm\Gamma_0\T\Y, \X)\text{ is a function of } \mathbf B_0\T\bm\Gamma_0\T\Y, \X\},\\
		&\Lambda_{3}^\perp = \{\mathbf f(\Y,\X): \E(\mathbf f\mid \mathbf B_0\T\bm\Gamma_0\T\Y) = \bm 0 \},\\
		&\Lambda_{4}^\perp = \{\mathbf f(\Y,\X): \E(\mathbf f\mid\X) = \bm 0\}.
		\end{aligned}
		\end{equation*}
	\end{lemma}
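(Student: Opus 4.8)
The plan is to reduce the claim to a separate computation of each $\Lambda_i^\perp$ and then to identify those complements by a double-conditioning argument. Throughout, the ambient Hilbert space is the space $\mathcal H$ of mean-zero, $\mathbb R^q$-valued, square-integrable functions of $(\Y,\X)$ with inner product $\langle \mathbf f_1,\mathbf f_2\rangle = \E(\mathbf f_1\T\mathbf f_2)$, and $\Lambda^\perp$ denotes the complement in $\mathcal H$. Since Lemma~\ref{Prop: structure tangent space} gives $\Lambda = \Lambda_1\oplus\Lambda_2\oplus\Lambda_3\oplus\Lambda_4$, the elementary Hilbert-space identity $(\Lambda_1+\Lambda_2+\Lambda_3+\Lambda_4)^\perp = \Lambda_1^\perp\cap\Lambda_2^\perp\cap\Lambda_3^\perp\cap\Lambda_4^\perp$ immediately yields $\Lambda^\perp = \Lambda_1^\perp\cap\Lambda_2^\perp\cap\Lambda_3^\perp\cap\Lambda_4^\perp$, so only the four characterizations remain.

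The key step I would use is the observation that each $\Lambda_i$ consists of functions measurable with respect to a sub-$\sigma$-field $\mathcal G_i$ --- namely $\mathcal G_1=\sigma(\bm\Gamma\T\Y,\X)$, $\mathcal G_2=\sigma(\bm\Gamma_0\T\Y,\X)$ (since $(\mathbf B\T\bm\Gamma_0\T\Y,\mathbf B_0\T\bm\Gamma_0\T\Y)$ jointly recover $\bm\Gamma_0\T\Y$ because $[\mathbf B\ \mathbf B_0]$ is orthogonal), $\mathcal G_3=\sigma(\mathbf B_0\T\bm\Gamma_0\T\Y)$, and $\mathcal G_4=\sigma(\X)$ --- and, inside $L_2(\mathcal G_i)$, $\Lambda_i$ is exactly the subspace of functions whose conditional mean given a coarser field $\mathcal F_i\subseteq\mathcal G_i$ vanishes, where $\mathcal F_1=\sigma(\X)$, $\mathcal F_2=\sigma(\mathbf B_0\T\bm\Gamma_0\T\Y,\X)$, and $\mathcal F_3=\mathcal F_4$ is the trivial field. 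Consequently, for $\mathbf h\in\mathcal H$ and $\mathbf g\in\Lambda_i$, conditioning on $\mathcal G_i$ gives $\E(\mathbf h\T\mathbf g)=\E\{\E(\mathbf h\mid\mathcal G_i)\T\mathbf g\}$, so $\mathbf h\perp\Lambda_i$ iff $\mathbf m:=\E(\mathbf h\mid\mathcal G_i)$ is orthogonal to $\Lambda_i$ inside $L_2(\mathcal G_i)$; and writing $\mathbf m=\E(\mathbf m\mid\mathcal F_i)+\{\mathbf m-\E(\mathbf m\mid\mathcal F_i)\}$ and testing against the element $\mathbf m-\E(\mathbf m\mid\mathcal F_i)\in\Lambda_i$ forces $\|\mathbf m-\E(\mathbf m\mid\mathcal F_i)\|^2=0$, i.e.\ $\mathbf m$ is $\mathcal F_i$-measurable. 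Translating back: $\mathbf h\in\Lambda_1^\perp$ iff $\E(\mathbf h\mid\bm\Gamma\T\Y,\X)$ is a function of $\X$; $\mathbf h\in\Lambda_2^\perp$ iff $\E(\mathbf h\mid\mathbf B\T\bm\Gamma_0\T\Y,\mathbf B_0\T\bm\Gamma_0\T\Y,\X)$ is a function of $(\mathbf B_0\T\bm\Gamma_0\T\Y,\X)$; and $\mathbf h\in\Lambda_3^\perp$ (resp.\ $\Lambda_4^\perp$) iff $\E(\mathbf h\mid\mathbf B_0\T\bm\Gamma_0\T\Y)$ (resp.\ $\E(\mathbf h\mid\X)$) is constant, which equals $\bm 0$ because $\E(\mathbf h)=\bm 0$ in $\mathcal H$. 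These are exactly the four displayed sets.

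For each $i$ I would spell out both inclusions explicitly. The inclusion "$\supseteq$" is a one-line conditioning identity --- for instance, if $\E(\mathbf h\mid\bm\Gamma\T\Y,\X)=\mathbf a(\X)$ then $\E(\mathbf h\T\mathbf g)=\E\{\mathbf a(\X)\T\E(\mathbf g\mid\X)\}=\bm 0$ for every $\mathbf g\in\Lambda_1$ --- and analogously for $i=2,3,4$. The inclusion "$\subseteq$" is the projection/norm argument above, testing $\mathbf h$ against $\mathbf m-\E(\mathbf m\mid\mathcal F_i)$. I expect the only real obstacle to be bookkeeping rather than substance: one must be careful to work in the correct ambient space (mean-zero, $\mathbb R^q$-valued $L_2$), to verify that $\Lambda_i$ is the \emph{entire} conditional-mean-zero subspace of $L_2(\mathcal G_i)$ rather than a proper subspace of it, and to keep straight which conditioning $\sigma$-field is coarse and which is fine in each of the four cases. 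It is also worth noting that, unlike Lemma~\ref{Prop: structure tangent space}, this lemma does not invoke Conditions~\ref{eq: semi_inner1}--\ref{eq: semi_inner2} directly; they enter only insofar as they were needed to write $\Lambda$ as the (orthogonal) sum of the $\Lambda_i$.
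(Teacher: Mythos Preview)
Your proposal is correct and essentially mirrors the paper's proof. The paper also establishes each $\Lambda_i^\perp$ separately by (i) a one-line conditioning argument for the inclusion $\supseteq$ and (ii) testing $\mathbf f$ against the residual $g=\E(\mathbf f\mid\mathcal G_i)-\E(\mathbf f\mid\mathcal F_i)\in\Lambda_i$ to force $\E(g\T g)=0$ for the inclusion $\subseteq$; your $\sigma$-field framing $(\mathcal G_i,\mathcal F_i)$ packages all four cases uniformly, but the substance is identical.
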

	\begin{proof}[Proof of Lemma \ref{prop: orthogonal tangent space}]
		From the structure of $\Lambda$, we have  Our conjecture for the orthogonal complements for $\Lambda_i$, $i = 1,\ldots,4$ are 
	  \begin{equation*}
	  \begin{aligned}
	  &\Lambda_{1,\text{conj}}^\perp = \{f(\Y,\X): \E(f\mid \bm\Gamma\T\Y,\X) \text{ is a function of }\X \},\\
	  &\Lambda_{2,\text{conj}}^\perp = \{f(\Y,\X): \E(f\mid\mathbf B\T\bm\Gamma_0\T\Y, \mathbf B_0\T\bm\Gamma_0\T\Y, \X)\text{ is a function of } \mathbf B_0\T\bm\Gamma_0\T\Y, \X\},\\
	  &\Lambda_{3,\text{conj}}^\perp = \{f(\Y,\X): \E(f\mid \mathbf B_0\T\bm\Gamma_0\T\Y)=\bm 0 \}\\
	  &\Lambda_{4,\text{conj}}^\perp = \{f(\Y,\X): \E(f\mid\X) = \bm 0\}.
	  \end{aligned}
	  \end{equation*}
	  We only prove our conjecture is true for $\Lambda_1^\perp$. 
	
	  For any $f_1^\perp \in \Lambda_{1,\text{conj}}^\perp$ and $f_1 \in \Lambda_1$,
	\begin{equation*}
	\begin{aligned}
	\E\{f_1\T f_1^\perp\} &= \E[\E\{f_1\T f_1^\perp\mid \bm\Gamma\T\Y,\X \}]\\
	&= \E[f_1\T\E\{f_1^\perp\mid \bm\Gamma\T\Y,\X\}]\\
	&=\E\{f_1\T a(\X)\}\\
	&= \E\{\E(f_1\T\mid\X)a(\X)\} = 0.
	\end{aligned}
	\end{equation*}
	Hence, we have $\Lambda_{1,\text{conj}}^\perp \perp \Lambda_1$. Then, we show that for any $f\in \Lambda_1^\perp$, $f$ satisfies $\E(f\mid\bm\Gamma\T\Y,\X) = a(\X)$ for some function $a(\X)$.
	
	Firstly, define $g = \E(f\mid\bm\Gamma\T\Y,\X) - \E(f\mid \X)$. Clearly, $g\in\Lambda_1$. Thus, 
	\begin{equation*}
	\begin{aligned}
	0 = \E(g\T f) &= \E\{g\T\E(f\mid \bm\Gamma\T\Y,\X)\}\\
	&= \E(g\T g) + \E\{g\T\E(f\mid \X)\} = \E(g\T g).
	\end{aligned}
	\end{equation*}
	Therefore, $g = \bm 0$, which implies $\E(f\mid\bm\Gamma\T\Y,\X) = \E(f\mid \X) = a(\X)$. We proved that $\Lambda_{1,\text{conj}} = \Lambda_1$.
	
	Validity of the other three follows the same technique.
	\end{proof}
	 \begin{customthm}{3}\label{prop: GMM}
		The functions $\mathbf f_1(\bm\B_0\T\bm\Gamma_0\T\Y, \X)$ and $\mathbf f_2(\bm\Gamma\T\Y, \bm\B_0\T\bm\Gamma_0\T\Y)$ defined above belongs to the orthogonal nuisance tangent space $\Lambda^\perp$. 
	\end{customthm}
	\begin{proof}[Proof of Lemma \ref{prop: GMM}]
		Recall that 
	 \begin{equation*}
		f_1(\bm\B_0\T\bm\Gamma_0\T\Y, \X) = \{g_1(\bm\B_0\T\bm\Gamma_0\T\Y) - \E g_1\}\{h_1(\X) - \E h_1\} = \bm 0,
	\end{equation*}
	\begin{equation*}
		f_2(\bm\Gamma\T\Y, \bm\B_0\T\bm\Gamma_0\T\Y) = \{g_2(\bm\Gamma\T\Y) - \E g_2\}\{h_2(\bm\B_0\T\bm\Gamma_0\T\Y) - \E h_2\} = \bm 0.
	\end{equation*}
	We want to show that $f_1(\bm\B_0\T\bm\Gamma_0\T\Y, \X)\in\Lambda^\perp$ and $f_2(\bm\Gamma\T\Y, \bm\B_0\T\bm\Gamma_0\T\Y)\in\Lambda^\perp$.
	\begin{equation*}
		\begin{aligned}
			\E(f_1\mid \bm\Gamma\T\Y, \X) &= \E[\{g_1(\bm\B_0\T\bm\Gamma_0\T\Y) - \E g_1\}\{h_1(\X) - \E h_1\}\mid \bm\Gamma\T\Y, \X]\\
			&= \E\{g_1(\bm\B_0\T\bm\Gamma_0\T\Y) - \E g_1\mid \bm\Gamma\T\Y, \X\}\{h_1(\X) - \E h_1\}\\
			&=\E\{g_1(\bm\B_0\T\bm\Gamma_0\T\Y) - \E g_1\}\{h_1(\X) - \E h_1\}\\
			&= \bm0.
		\end{aligned}
	\end{equation*}
	
	Hence, $f_1\in\Lambda_1^\perp$. The second to the third equation is because $\bm\B_0\T\bm\Gamma_0\T\Y \indep (\bm\Gamma\T\Y, \X)$.
	\begin{equation*}
		\begin{aligned}
			\E(f_1\mid \bm\Gamma_0\T\Y, \X) &= \E[\{g_1(\bm\B_0\T\bm\Gamma_0\T\Y) - \E g_1\}\{h_1(\X) - \E h_1\}\mid \bm\Gamma_0\T\Y, \X]\\
			&= \{g_1(\bm\B_0\T\bm\Gamma_0\T\Y) - \E g_1\}\{h_1(\X) - \E h_1\}
		\end{aligned}
	\end{equation*}
	is a function of $\bm\B_0\T\bm\Gamma_0\T\Y$ and $\X$. Therefore, $f_1\in\Lambda_2^\perp$. Notice that we also have 
	\begin{equation*}
		\begin{aligned}
			\E(f_1\mid \B_0\T\bm\Gamma_0\T\Y) &= \E[\{g_1(\bm\B_0\T\bm\Gamma_0\T\Y) - \E g_1\}\{h_1(\X) - \E h_1\}\mid \B_0\T\bm\Gamma_0\T\Y]\\
			&= \{g_1(\bm\B_0\T\bm\Gamma_0\T\Y) - \E g_1\}\E\{h_1(\X) - \E h_1\mid \B_0\T\bm\Gamma_0\T\Y\}\\
			&= \{g_1(\bm\B_0\T\bm\Gamma_0\T\Y) - \E g_1\}\E\{h_1(\X) - \E h_1\}\\
			&= \bm 0,
		\end{aligned}
	\end{equation*}
	and
	\begin{equation*}
		\begin{aligned}
			\E(f_1\mid \X) &= \E[\{g_1(\bm\B_0\T\bm\Gamma_0\T\Y) - \E g_1\}\{h_1(\X) - \E h_1\}\mid \X]\\
			&= \E[\{g_1(\bm\B_0\T\bm\Gamma_0\T\Y) - \E g_1\mid \X\}\{h_1(\X) - \E h_1\}\\
			&= \E\{g_1(\bm\B_0\T\bm\Gamma_0\T\Y) - \E g_1\}\{h_1(\X) - \E h_1\}\\
			&= \bm 0.
		\end{aligned}
	\end{equation*}
	Therefore, $f_1\in\Lambda_1^\perp\cap\Lambda_2^\perp\cap\Lambda_3^\perp\cap\Lambda_4^\perp = \Lambda^\perp$.
	Then we prove that $f_2(\bm\Gamma\T\Y, \B_0\T\bm\Gamma_0\T\Y)\in\Lambda^\perp$.
	\begin{equation*}
		\begin{aligned}
			\E(f_2\mid\bm\Gamma\T\Y, \X) &= \E[\{g_2(\bm\Gamma\T\Y) - \E g_2\}\{h_2(\bm\B_0\T\bm\Gamma_0\T\Y) - \E h_2\}\mid\bm\Gamma\T\Y, \X]\\
			&= \{g_2(\bm\Gamma\T\Y) - \E g_2\}\E\{h_2(\bm\B_0\T\bm\Gamma_0\T\Y) - \E h_2\mid\bm\Gamma\T\Y, \X\}\\
			&= \{g_2(\bm\Gamma\T\Y) - \E g_2\}\E\{h_2(\bm\B_0\T\bm\Gamma_0\T\Y) - \E h_2\}\\
			&= \bm 0.
		\end{aligned}
	\end{equation*}
	\begin{equation*}
		\begin{aligned}
			\E(f_2\mid\bm\Gamma_0\T\Y, \X) &= \E[\{g_2(\bm\Gamma\T\Y) - \E g_2\}\{h_2(\bm\B_0\T\bm\Gamma_0\T\Y) - \E h_2\}\mid\bm\Gamma_0\T\Y, \X]\\
			&= \E\{g_2(\bm\Gamma\T\Y) - \E g_2\mid\bm\Gamma_0\T\Y, \X\}\{h_2(\bm\B_0\T\bm\Gamma_0\T\Y) - \E h_2\}\\
			&= \E\{g_2(\bm\Gamma\T\Y) - \E g_2\mid\X\}\{h_2(\bm\B_0\T\bm\Gamma_0\T\Y) - \E h_2\}\\
		\end{aligned}
	\end{equation*}
	is a function of $(\bm\B_0\T\bm\Gamma_0\T\Y, \X)$.
	\begin{equation*}
		\begin{aligned}
			\E(f_2\mid\B_0\T\bm\Gamma_0\T\Y) &= \E[\{g_2(\bm\Gamma\T\Y) - \E g_2\}\{h_2(\bm\B_0\T\bm\Gamma_0\T\Y) - \E h_2\}\mid\B_0\T\bm\Gamma_0\T\Y]\\
			&= \E\{g_2(\bm\Gamma\T\Y) - \E g_2\mid\B_0\T\bm\Gamma_0\T\Y\}\{h_2(\bm\B_0\T\bm\Gamma_0\T\Y) - \E h_2\}\\
			&= \E\{g_2(\bm\Gamma\T\Y) - \E g_2\}\{h_2(\bm\B_0\T\bm\Gamma_0\T\Y) - \E h_2\}\\
			&= \bm 0.
		\end{aligned}
	\end{equation*}
	\begin{equation*}
		\begin{aligned}
			\E(f_2\mid\X) &= \E[\{g_2(\bm\Gamma\T\Y) - \E g_2\}\{h_2(\bm\B_0\T\bm\Gamma_0\T\Y) - \E h_2\}\mid\X]\\
			&= \E\{g_2(\bm\Gamma\T\Y) - \E g_2\mid \X\}\E\{h_2(\bm\B_0\T\bm\Gamma_0\T\Y) - \E h_2\mid\X\}\\
			&= \E\{g_2(\bm\Gamma\T\Y) - \E g_2\mid \X\}\E\{h_2(\bm\B_0\T\bm\Gamma_0\T\Y) - \E h_2\}\\
			&=\bm 0.
		\end{aligned}
	\end{equation*}
	The first to the second equation is because $\bm\Gamma\T\Y \indep \bm\B_0\T\bm\Gamma_0\T\Y \mid \X$. Hence, we also have $f_2\in\Lambda^\perp$.
	\end{proof}

	\begin{customthm}{4}\label{prop: score}
		The tangent space generated by the score vector with respect to the parameter of interest $\bm\theta$ is $\mathscr T_{\bm\theta} = \{\mathbf MS_{\bm\theta} \text{ for all } \mathbf M\in\mathbb{R}^{q\times q}\}$. The score vector $S_{\bm\theta} = (S_{\bm\gamma}\T, S_{\mathbf b}\T)\T$,
		\begin{equation*}
			\begin{aligned}
			S_{\bm\gamma} =& \text{vec}\T\bigg\{\Y\dfrac{\partial \log\eta_1}{\partial (\bm\Gamma\T\Y)\T}\bigg\}\dfrac{\partial \text{vec}(\bm\Gamma)}{\partial\bm\gamma\T} + \text{vec}\T\bigg\{\Y\dfrac{\partial \log\eta_3}{\partial (\mathbf B_0\T\bm\Gamma_0\T\Y)\T}\B_0\T\bigg\}\dfrac{\partial \text{vec}(\bm\Gamma_0)}{\partial\bm\gamma\T} \\
			&\quad+ \text{vec}\T\bigg\{\Y\dfrac{\partial \log\eta_2}{\partial (\mathbf B\T\bm\Gamma_0\T\Y)\T}\B\T + \Y\dfrac{\partial \log\eta_2}{\partial (\mathbf B_0\T\bm\Gamma_0\T\Y)\T}\B\T_0\bigg\}\dfrac{\partial \text{vec}(\bm\Gamma_0)}{\partial\bm\gamma\T}
			\end{aligned} 
		\end{equation*}
		and 
		\begin{align*}
			S_{\mathbf b} =& \text{vec}\T\bigg\{\bm\Gamma_0\T\Y\dfrac{\partial \log\eta_2}{\partial (\mathbf B\T\bm\Gamma_0\T\Y)\T}\bigg\}\dfrac{\partial \text{vec}(\B)}{\partial \mathbf b\T} +  \text{vec}\T\bigg\{\bm\Gamma_0\T\Y\dfrac{\partial \log\eta_2}{\partial (\mathbf B_0\T\bm\Gamma_0\T\Y)\T}\bigg\}\dfrac{\partial \text{vec}(\B_0)}{\partial \mathbf b\T} \\ &\quad + \text{vec}\T\bigg\{\bm\Gamma_0\T\Y\dfrac{\partial \log\eta_3}{\partial (\mathbf B_0\T\bm\Gamma_0\T\Y)\T}\bigg\}\dfrac{\partial \text{vec}(\B_0)}{\partial \mathbf b\T}.
		\end{align*}
		
	\end{customthm}
	
	\begin{proof}[Proof of Lemma \ref{prop: score}]
		Define $l(\bm\theta\mid\X, \Y)$ as the log-likelihood of $(\X,\Y)$, i.e., $$l(\bm\theta\mid\X, \Y) = \log\eta_1(\bm\Gamma\T\Y, \X) + \log\eta_2(\mathbf{B}\T\bm\Gamma_0\T\Y, \mathbf{B}_0\T\bm\Gamma_0\T\Y, \X) + \log\eta_3(\mathbf{B}_0\T\bm\Gamma_0\T\Y) + \log \eta_4(\X).$$ 
	Then, the score with respect to $\bm\gamma$ is as follows:
	\begin{equation*}
	S_{\bm\gamma} = \dfrac{\partial l(\bm\theta\mid\X,\Y)}{\partial\bm\gamma\T} = \dfrac{\partial \log\eta_1}{\partial\bm\gamma\T} + \dfrac{\partial \log\eta_2}{\partial\bm\gamma\T} + \dfrac{\partial \log\eta_3}{\partial\bm\gamma\T}.
	\end{equation*}
	Firstly, 
	\begin{equation*}
	\dfrac{\partial \log\eta_1}{\partial\bm\gamma\T} =  \dfrac{\partial \log\eta_1}{\partial (\bm\Gamma\T\Y)\T}\dfrac{\partial \bm\Gamma\T\Y}{\partial\text{vec}(\bm\Gamma)\T}\dfrac{\partial\text{vec}(\bm\Gamma)}{\partial\bm\gamma\T}.
	\end{equation*}
	It's easy to verify that $$\dfrac{\partial \bm\Gamma\T\Y}{\partial\text{vec}(\bm\Gamma)\T} = \mathbf I_u\otimes \Y\T.$$ Hence, 
	\begin{equation*}
	\dfrac{\partial \log\eta_1}{\partial\bm\gamma\T} = \dfrac{\partial \log\eta_1}{\partial (\bm\Gamma\T\Y)\T}(\mathbf I_u\otimes \Y\T)\dfrac{\partial\text{vec}(\bm\Gamma)}{\partial\bm\gamma\T} = \text{vec}\T\bigg\{\Y\dfrac{\partial \log\eta_1}{\partial (\bm\Gamma\T\Y)\T}\bigg\}\dfrac{\partial \text{vec}(\bm\Gamma)}{\partial\bm\gamma\T}.
	\end{equation*}
	Secondly,  
	\begin{align*}
	\dfrac{\partial \log\eta_2}{\partial\bm\gamma\T} =& \dfrac{\partial \log\eta_2}{\partial (\mathbf B\T\bm\Gamma_0\T\Y)\T}\dfrac{\partial \mathbf B\T\bm\Gamma_0\T\Y}{\partial\bm\gamma\T} + \dfrac{\partial \log\eta_2}{\partial (\mathbf B_0\T\bm\Gamma_0\T\Y)\T}\dfrac{\partial \mathbf B_0\T\bm\Gamma_0\T\Y}{\partial\bm\gamma\T}.
	\end{align*}
	We know that $\mathbf B\T\bm\Gamma_0\T\Y = \text{vec}(\mathbf B\T\bm\Gamma_0\T\Y) = \text{vec}(\Y\T\bm\Gamma_0\B) = (\B\T\otimes\Y\T)\text{vec}(\bm\Gamma_0)$. Thus,
	\begin{equation*}
	\dfrac{\partial \mathbf B\T\bm\Gamma_0\T\Y}{\partial\bm\gamma\T} = (\mathbf B\T\otimes\Y\T) \dfrac{\partial \text{vec}(\bm\Gamma_0)}{\partial\bm\gamma\T}. 
	\end{equation*}
	Therefore,
	\begin{align*}
		\dfrac{\partial \log\eta_2}{\partial\bm\gamma\T} =& \left\{\dfrac{\partial \log\eta_2}{\partial (\mathbf B\T\bm\Gamma_0\T\Y)\T} (\mathbf B\T\otimes\Y\T) + \dfrac{\partial \log\eta_2}{\partial (\mathbf B_0\T\bm\Gamma_0\T\Y)\T}(\mathbf B_0\T\otimes\Y\T)\right\}\dfrac{\partial \text{vec}(\bm\Gamma_0)}{\partial\bm\gamma\T}\\ 
	=&\text{vec}\T\bigg\{\Y\dfrac{\partial \log\eta_2}{\partial (\mathbf B\T\bm\Gamma_0\T\Y)\T}\B\T + \Y\dfrac{\partial \log\eta_2}{\partial (\mathbf B_0\T\bm\Gamma_0\T\Y)\T}\B\T_0\bigg\}\dfrac{\partial \text{vec}(\bm\Gamma_0)}{\partial\bm\gamma\T}
		\end{align*}
	Similarly, we can show that 
	\begin{equation*}
		\dfrac{\partial \log\eta_3}{\partial\bm\gamma\T} = \dfrac{\partial \log\eta_3}{\partial (\mathbf B_0\T\bm\Gamma_0\T\Y)\T}(\mathbf B_0\T\otimes\Y\T)\dfrac{\partial \text{vec}(\bm\Gamma_0)}{\partial\bm\gamma\T} = \text{vec}\T\bigg\{\Y\dfrac{\partial \log\eta_3}{\partial (\mathbf B_0\T\bm\Gamma_0\T\Y)\T}\B_0\T\bigg\}\dfrac{\partial \text{vec}(\bm\Gamma_0)}{\partial\bm\gamma\T} .
	\end{equation*}
	Consequently, the score for $\bm\gamma$ is
	\begin{equation*}
			\begin{aligned}
			S_{\bm\gamma} =& \text{vec}\T\bigg\{\Y\dfrac{\partial \log\eta_1}{\partial (\bm\Gamma\T\Y)\T}\bigg\}\dfrac{\partial \text{vec}(\bm\Gamma)}{\partial\bm\gamma\T} + \text{vec}\T\bigg\{\Y\dfrac{\partial \log\eta_3}{\partial (\mathbf B_0\T\bm\Gamma_0\T\Y)\T}\B_0\T\bigg\}\dfrac{\partial \text{vec}(\bm\Gamma_0)}{\partial\bm\gamma\T} \\
			&\quad+ \text{vec}\T\bigg\{\Y\dfrac{\partial \log\eta_2}{\partial (\mathbf B\T\bm\Gamma_0\T\Y)\T}\B\T + \Y\dfrac{\partial \log\eta_2}{\partial (\mathbf B_0\T\bm\Gamma_0\T\Y)\T}\B\T_0\bigg\}\dfrac{\partial \text{vec}(\bm\Gamma_0)}{\partial\bm\gamma\T}
			\end{aligned} 
		\end{equation*}
	Next, we calculate the score with respect to $\mathbf b$.
	\begin{equation*}
	\dfrac{\partial l(\bm\gamma\mid\X,\Y)}{\partial \mathbf b\T} = \dfrac{\partial \log\eta_2}{\partial \mathbf b\T} + \dfrac{\partial \log\eta_3}{\partial \mathbf b\T}.
	\end{equation*}
	By the chain rule, we have 
	\begin{equation*}
	\dfrac{\partial \log\eta_2}{\partial \mathbf b\T} = \dfrac{\partial \log\eta_2}{\partial (\mathbf B\T\bm\Gamma_0\T\Y)\T}\dfrac{\partial \mathbf B\T\bm\Gamma_0\T\Y}{\partial \mathbf b\T} + \dfrac{\partial \log\eta_2}{\partial (\mathbf B_0\T\bm\Gamma_0\T\Y)\T}\dfrac{\partial \mathbf B_0\T\bm\Gamma_0\T\Y}{\partial  \mathbf b\T}.
	\end{equation*}
	Recall that $\mathbf B\in \mathbb R^{(r-u)\times d}$ and $\mathbf B_0\in \mathbb R^{(r-u)\times (r-u-d)}$, where $d$ is the dimension for $\mathcal S_2$. By the same ``vec" trick, we have 
	\begin{equation*}
		\B\T\bm\Gamma_0\T\Y = \text{vec}(\B\T\bm\Gamma_0\T\Y) = \text{vec}(\Y\T\bm\Gamma_0\B\mathbf I_d) = (\mathbf I_d\otimes\Y\T\bm\Gamma_0)\text{vec}(\B).
	\end{equation*}
	Hence, 
	\begin{align*}
		\dfrac{\partial \log\eta_2}{\partial \mathbf b\T} =& \dfrac{\partial \log\eta_2}{\partial (\mathbf B\T\bm\Gamma_0\T\Y)\T}(\mathbf I_d\otimes\Y\T\bm\Gamma_0)\dfrac{\partial \text{vec}(\B)}{\partial \mathbf b\T} + \dfrac{\partial \log\eta_2}{\partial (\mathbf B_0\T\bm\Gamma_0\T\Y)\T}(\mathbf I_{r-u-d}\otimes\Y\T\bm\Gamma_0) \dfrac{\partial \text{vec}(\B_0)}{\partial \mathbf b\T}\\
		=&\text{vec}\T\bigg\{\bm\Gamma_0\T\Y\dfrac{\partial \log\eta_2}{\partial (\mathbf B\T\bm\Gamma_0\T\Y)\T}\bigg\}\dfrac{\partial \text{vec}(\B)}{\partial \mathbf b\T} +  \text{vec}\T\bigg\{\bm\Gamma_0\T\Y\dfrac{\partial \log\eta_2}{\partial (\mathbf B_0\T\bm\Gamma_0\T\Y)\T}\bigg\}\dfrac{\partial \text{vec}(\B_0)}{\partial \mathbf b\T}.
	\end{align*}
	Similarly, 
	\begin{align*}
		\dfrac{\partial \log\eta_3}{\partial \mathbf b\T} =& \dfrac{\partial \log\eta_3}{\partial (\mathbf B_0\T\bm\Gamma_0\T\Y)\T}(\mathbf I_{r-u-d}\otimes\Y\T\bm\Gamma_0) \dfrac{\partial \text{vec}(\B_0)}{\partial \mathbf b\T}\\
		=&\text{vec}\T\bigg\{\bm\Gamma_0\T\Y\dfrac{\partial \log\eta_3}{\partial (\mathbf B_0\T\bm\Gamma_0\T\Y)\T}\bigg\}\dfrac{\partial \text{vec}(\B_0)}{\partial \mathbf b\T}.
	\end{align*}
	Therefore, 
	\begin{align*}
			S_{\mathbf b} =& \text{vec}\T\bigg\{\bm\Gamma_0\T\Y\dfrac{\partial \log\eta_2}{\partial (\mathbf B\T\bm\Gamma_0\T\Y)\T}\bigg\}\dfrac{\partial \text{vec}(\B)}{\partial \mathbf b\T} +  \text{vec}\T\bigg\{\bm\Gamma_0\T\Y\dfrac{\partial \log\eta_2}{\partial (\mathbf B_0\T\bm\Gamma_0\T\Y)\T}\bigg\}\dfrac{\partial \text{vec}(\B_0)}{\partial \mathbf b\T} \\ &\quad + \text{vec}\T\bigg\{\bm\Gamma_0\T\Y\dfrac{\partial \log\eta_3}{\partial (\mathbf B_0\T\bm\Gamma_0\T\Y)\T}\bigg\}\dfrac{\partial \text{vec}(\B_0)}{\partial \mathbf b\T}.
		\end{align*}
	The score vector can be formulated as $S_{\bm\theta} = (S_{\bm\gamma}, S_{\mathbf b})$.
	
	\end{proof}
	\begin{proof}[Regularity Conditions and Proof of Theorem \ref{thm: GMM}]
		Let $\bm\Theta$ denote the parameter space that contains $\bm\theta$, and let $\bm\theta_0$ denote the true parameter value. 
		Firstly, we state the regularity conditions needed for Theorem \ref{thm: GMM}: 
		
		\noindent(S1) $\E\{\mathbf f(\Y,\X;\bm\theta)\}\neq \bm0$ for all $\bm\theta\in\bm\Theta$ such that $\bm\theta\neq\bm\theta_0$. Also, $\bm\theta_0$ is an interior point in $\bm\Theta$.
		
		\noindent (S2) The parameter space $\bm\Theta$ is a compact set.
		
		\noindent (S3) $\E\{\sup_{\bm\theta\in\bm\Theta}\|\mathbf f(\Y,\X;\bm\theta)\|_2\}<\infty.$
		
		\noindent(S4)
		$\mathbf C_1$ is continuous on some neighborhood $\bm\Theta_\epsilon$ of $\bm\theta_0$, where $\mathbf C_1 = \E \{\partial\rm{vec}(\mathbf f)/\partial\bm\theta\T\}$.
		
		\noindent (S5) The matrix $\mathbf C_1$ is of full column rank.
		
		\noindent (S6) $\sup_{\bm\theta\in\bm\Theta_\epsilon}\|n^{-1}\sum_{i=1}^N \partial\text{vec}(\mathbf f)/\partial\bm\theta\T - \mathbf C_1\|\overset{p}{\rightarrow}\bm 0$.

		Regularity conditions (S1)--(S5) are standard conditions for the GMM estimators. Based on these equations, by Theorem 3.2 in \cite{hall2004generalized}, we have $$\sqrt n (\hat{\bm\theta} - \bm\theta)\xrightarrow{d} \mathcal N\big\{\bm 0, (\mathbf C_1\T\mathbf C_1)^{-1}\mathbf C_1\T\mathbf D_1\mathbf C_1(\mathbf C_1\T\mathbf C_1)^{-1}\big\},$$
		where $\mathbf D_1 = {\rm Var}\{ \mathbf f(\Y_i, \X_i;\bm\theta) \}$.
	\end{proof}
	\begin{proof}[Derivation of the efficient score $S_{eff}$]
		Since $S_{\bm\theta}$ satisfies $\E(S_{\bm\theta}) = 0$, we only need to consider the projection of any mean zero function $h(\Y, \X)$. Let $\Pi(h\mid \Lambda^\perp)$ denote the projection of $h$ onto $\Lambda^\perp$. Then, by Lemma \ref{Prop: structure tangent space}, 
	\begin{equation*}
	\Pi(h\mid \Lambda^\perp) = h - \Pi(h\mid \Lambda) = h - \Pi(h\mid \Lambda_1)
	- \Pi(h\mid \Lambda_2) - \Pi(h\mid \Lambda_3) - \Pi(h\mid \Lambda_4).
	\end{equation*}
	Notice that $ \Pi(h\mid \Lambda_1) = \E(h\mid \bm\Gamma\T\Y,\X) - \E(h\mid\X)$, $ \Pi(h\mid \Lambda_2) = \E(h\mid \bm\Gamma_0\T\Y,\X) - \E(h\mid \mathbf B_0\T\bm\Gamma_0\T\Y, \X)$, $\Pi(h\mid \Lambda_3) = \E(h\mid \mathbf B_0\T\bm\Gamma_0\T\Y)$ and $\Pi(h\mid \Lambda_4) = \E(h\mid\X)$. Therefore, 
	$$S_{eff} = S_{\bm\theta} - \E(S_{\bm\theta}\mid \bm\Gamma\T\Y,\X) - \E(S_{\bm\theta}\mid \bm\Gamma_0\T\Y,\X) + \E(S_{\bm\theta}\mid \mathbf B_0\T\bm\Gamma_0\T\Y, \X) -  \E(S_{\bm\theta}\mid \mathbf B_0\T\bm\Gamma_0\T\Y).$$
	Notice that $$S_{\bm\theta} = \dfrac{\partial \log\eta_1}{\partial \bm\theta} + \dfrac{\partial \log\eta_2}{\partial \bm\theta}+ \dfrac{\partial \log\eta_3}{\partial \bm\theta}.$$
	Firstly, since $\eta_1(\cdot)$ is the conditional distribution of $\bm\Gamma\T\Y\mid\X$, and the model assumption $\bm\Gamma^T\Y\perp\bm\Gamma_0^T\Y\mid\X$ holds, we must have 
	$$\E\left(\dfrac{\partial \log\eta_1}{\partial \bm\theta}\bmid \bm\Gamma_0\T\Y,\X\right) = \bm 0.$$
	As a consequence, $$\E\left(\dfrac{\partial \log\eta_1}{\partial \bm\theta}\bmid \mathbf B_0\T\bm\Gamma_0\T\Y,\X\right) = \bm 0,$$
	and 
	$$\E\left(\dfrac{\partial \log\eta_1}{\partial \bm\theta}\bmid \mathbf B_0\T\bm\Gamma_0\T\Y\right) = \bm 0.$$
	Similarly, for $\eta_2(\cdot)$, we have 
	$$\E\left(\dfrac{\partial \log\eta_2}{\partial \bm\theta}\bmid \bm\Gamma\T\Y, \X\right) = \E\left(\dfrac{\partial \log\eta_2}{\partial \bm\theta}\bmid \mathbf B_0\T\bm\Gamma_0\T\Y, \X\right) = \E\left(\dfrac{\partial \log\eta_2}{\partial \bm\theta}\bmid \mathbf B_0\T\bm\Gamma_0\T\Y\right) = \bm 0.$$
	Also, because $\B_0\T\bm\Gamma_0\T\Y\indep(\bm\Gamma\T\Y, \X)$, 
	$$\E\left(\dfrac{\partial \log\eta_3}{\partial \bm\theta}\bmid \bm\Gamma\T\Y, \X\right) = \bm 0.$$
	Hence, 
	\begin{equation*}
		\begin{aligned}
			S_{eff,\bm\gamma} &= \dfrac{\partial \log\eta_1}{\partial \bm\gamma\T} + \dfrac{\partial \log\eta_2}{\partial \bm\gamma\T} + \dfrac{\partial \log\eta_3}{\partial \bm\gamma\T} - \E\left(\dfrac{\partial \log\eta_1}{\partial\bm\gamma}\bmid\bm\Gamma\T\Y,\X\right) - \E\left(\dfrac{\partial \log\eta_2}{\partial\bm\gamma}\bmid\bm\Gamma_0\T\Y,\X\right)\\
			&\quad - \E\left(\dfrac{\partial \log\eta_3}{\partial\bm\gamma}\bmid\bm\Gamma_0\T\Y,\X\right) + \E\left(\dfrac{\partial \log\eta_3}{\partial\bm\gamma}\bmid\B_0\T\bm\Gamma_0\T\Y,\X\right) - \E\left(\dfrac{\partial \log\eta_3}{\partial\bm\gamma}\bmid\B_0\T\bm\Gamma_0\T\Y\right).
		\end{aligned}
	\end{equation*}
	Notably, 
	\begin{align*}
		&\dfrac{\partial \log\eta_1}{\partial \bm\gamma\T} - \E\left(\dfrac{\partial \log\eta_1}{\partial\bm\gamma}\bmid\bm\Gamma\T\Y,\X\right)\\
		=&\text{vec}\T\left[\{\Y - \E(\Y\mid\bm\Gamma\T\Y,\X)\}\dfrac{\partial \log\eta_1}{\partial (\bm\Gamma\T\Y)\T}\right]\dfrac{\partial \text{vec}(\bm\Gamma)}{\partial\bm\gamma\T},\\
		& \dfrac{\partial \log\eta_2}{\partial \bm\gamma\T} - \E\left(\dfrac{\partial \log\eta_2}{\partial\bm\gamma}\bmid\bm\Gamma_0\T\Y,\X\right)\\
		=&\text{vec}\T\bigg[\left\{\Y - \E(\Y\mid\bm\Gamma_0\T\Y,\X)\right\}\bigg\{\dfrac{\partial \log\eta_2}{\partial (\mathbf B\T\bm\Gamma_0\T\Y)\T}\B\T + \dfrac{\partial \log\eta_2}{\partial (\mathbf B_0\T\bm\Gamma_0\T\Y)\T}\B_0\T\bigg\}\bigg]\dfrac{\partial \text{vec}(\bm\Gamma_0)}{\partial\bm\gamma\T},\\
		& \dfrac{\partial \log\eta_3}{\partial \bm\gamma\T} - \E\left(\dfrac{\partial \log\eta_3}{\partial\bm\gamma}\bmid\bm\Gamma_0\T\Y,\X\right) + \E\left(\dfrac{\partial \log\eta_3}{\partial\bm\gamma}\bmid\B_0\T\bm\Gamma_0\T\Y,\X\right) - \E\left(\dfrac{\partial \log\eta_3}{\partial\bm\gamma}\bmid\B_0\T\bm\Gamma_0\T\Y\right)\\
		=& \text{vec}\T\bigg\{\big\{\Y - \E(\Y\mid\bm\Gamma_0\T\Y,\X) + \E(\Y\mid\B_0\T\bm\Gamma_0\T\Y,\X) - \E(\Y\mid\B_0\T\bm\Gamma_0\T\Y)\big\}\dfrac{\partial \log\eta_3}{\partial (\mathbf B_0\T\bm\Gamma_0\T\Y)\T}\B_0\T\bigg\}\dfrac{\partial \text{vec}(\bm\Gamma_0)}{\partial\bm\gamma\T}
	\end{align*}
	In addition, \begin{align*}
		\Y - \E(\Y\mid\bm\Gamma\T\Y,\X) &= \mathbf P_{\bm\Gamma}\Y + \mathbf Q_{\bm\Gamma} - \E(\mathbf P_{\bm\Gamma}\Y\mid\bm\Gamma\T\Y,\X) - \E(\mathbf Q_{\bm\Gamma}\Y\mid\bm\Gamma\T\Y,\X)\\
		&=\mathbf Q_{\bm\Gamma} - \E(\mathbf Q_{\bm\Gamma}\Y\mid\bm\Gamma\T\Y,\X)\\
		&=\mathbf Q_{\bm\Gamma} - \E(\mathbf Q_{\bm\Gamma}\Y\mid\X),
	\end{align*}
	where the last equation is because $\bm\Gamma\T\Y\indep\bm\Gamma_0\T\Y\mid\X$.
	\begin{align*}
		\Y - \E(\Y\mid\bm\Gamma_0\T\Y,\X) &= \mathbf P_{\bm\Gamma}\Y + \mathbf Q_{\bm\Gamma} - \E(\mathbf P_{\bm\Gamma}\Y\mid\bm\Gamma_0\T\Y,\X) - \E(\mathbf Q_{\bm\Gamma}\Y\mid\bm\Gamma_0\T\Y,\X)\\
		&=\mathbf P_{\bm\Gamma} - \E(\mathbf P_{\bm\Gamma}\Y\mid\bm\Gamma_0\T\Y,\X)\\
		&=\mathbf P_{\bm\Gamma} - \E(\mathbf P_{\bm\Gamma}\Y\mid\X).
	\end{align*}
	\begin{align*}
		&\Y - \E(\Y\mid\bm\Gamma_0\T\Y,\X) + \E(\Y\mid\B_0\T\bm\Gamma_0\T\Y,\X) - \E(\Y\mid\B_0\T\bm\Gamma_0\T\Y)\\
		=&\mathbf P_{\bm\Gamma} - \E(\mathbf P_{\bm\Gamma}\Y\mid\X) + \E(\mathbf P_{\bm\Gamma}\Y\mid\B_0\T\bm\Gamma_0\T\Y,\X) + \E(\mathbf P_{\bm\Gamma_0\B}\Y\mid\B_0\T\bm\Gamma_0\T\Y,\X)+\E(\mathbf P_{\bm\Gamma_0\B_0}\Y\mid\B_0\T\bm\Gamma_0\T\Y,\X)\\ & -\E(\mathbf P_{\bm\Gamma}\Y\mid\B_0\T\bm\Gamma_0\T\Y) -\E(\mathbf P_{\bm\Gamma_0\B}\Y\mid\B_0\T\bm\Gamma_0\T\Y) -\E(\mathbf P_{\bm\Gamma_0\B_0}\Y\mid\B_0\T\bm\Gamma_0\T\Y)\\
		=&\mathbf P_{\bm\Gamma} + \E(\mathbf P_{\bm\Gamma_0\B}\Y\mid\B_0\T\bm\Gamma_0\T\Y,\X) - \E(\mathbf P_{\bm\Gamma_0\B}\Y\mid\B_0\T\bm\Gamma_0\T\Y).
	\end{align*}
	The last equation is because 
	\begin{align*}
		&\E(\mathbf P_{\bm\Gamma}\Y\mid\B_0\T\bm\Gamma_0\T\Y,\X) = \E(\mathbf P_{\bm\Gamma}\Y\mid\X),\\
		& \E(\mathbf P_{\bm\Gamma_0\B_0}\Y\mid\B_0\T\bm\Gamma_0\T\Y,\X)-\E(\mathbf P_{\bm\Gamma_0\B_0}\Y\mid\B_0\T\bm\Gamma_0\T\Y) = \mathbf P_{\bm\Gamma_0\B_0}\Y - \mathbf P_{\bm\Gamma_0\B_0}\Y = \bm 0,\\
		& \E(\mathbf P_{\bm\Gamma}\Y\mid\B_0\T\bm\Gamma_0\T\Y) = \E(\mathbf P_{\bm\Gamma}\Y) = \mathbf P_{\bm\Gamma}\E(\Y) =  \bm 0.
	\end{align*}
	Therefore, 
	\begin{align*}
		S_{\rm eff,\bm\gamma} &= \text{vec}\T\left\{\mathbf Q_{\bm\Gamma}\bm\Delta_1\dfrac{\partial \log\eta_1}{\partial (\bm\Gamma\T\Y)\T}\right\}\dfrac{\partial \text{vec}(\bm\Gamma)}{\partial\bm\gamma\T} + \text{vec}\T\bigg\{(\mathbf P_{\bm\Gamma}\Y + \bm\Delta_2)\dfrac{\partial \log\eta_3}{\partial (\mathbf B_0\T\bm\Gamma_0\T\Y)\T}\B_0\T\bigg\}\dfrac{\partial \text{vec}(\bm\Gamma_0)}{\partial\bm\gamma\T}\\
			&\quad+ \text{vec}\T\bigg[\mathbf P_{\bm\Gamma}\bm\Delta_1\bigg\{\dfrac{\partial \log\eta_2}{\partial (\mathbf B\T\bm\Gamma_0\T\Y)\T}\B\T + \dfrac{\partial \log\eta_2}{\partial (\mathbf B_0\T\bm\Gamma_0\T\Y)\T}\B_0\T\bigg\}\bigg]\dfrac{\partial \text{vec}(\bm\Gamma_0)}{\partial\bm\gamma\T},
	\end{align*}
	where $\bm\Delta_1 = \Y - \E(\Y\mid\X)$, and $\bm\Delta_2 = \E(\mathbf P_{\bm\Gamma_0\B}\Y\mid \B_0\T\bm\Gamma_0\T\Y,\X) - \E(\mathbf P_{\bm\Gamma_0\B}\Y\mid \B_0\T\bm\Gamma_0\T\Y)$.
	Similarly, 
	\begin{equation*}
		\begin{aligned}
			S_{\text{eff},\mathbf b} &= S_{\mathbf b} - \E\left(\dfrac{\partial \log\eta_2}{\partial\mathbf b}\bmid\bm\Gamma_0\T\Y,\X\right) - \E\left(\dfrac{\partial \log\eta_3}{\partial\mathbf b}\bmid\bm\Gamma_0\T\Y,\X\right)\\
			&\quad + \E\left(\dfrac{\partial \log\eta_3}{\partial\mathbf b}\bmid\B_0\T\bm\Gamma_0\T\Y,\X\right) - \E\left(\dfrac{\partial \log\eta_3}{\partial\mathbf b}\bmid\B_0\T\bm\Gamma_0\T\Y\right)\\
			&= \text{vec}\T\bigg\{\bm\Gamma_0\T\PP_{\bm\Gamma}\bm\Delta_1\dfrac{\partial \log\eta_2}{\partial (\mathbf B\T\bm\Gamma_0\T\Y)\T}\bigg\}\dfrac{\partial \text{vec}(\B)}{\partial \mathbf b\T} + \text{vec}\T\bigg\{\bm\Gamma_0\T\PP_{\bm\Gamma}\bm\Delta_1\dfrac{\partial \log\eta_2}{\partial (\mathbf B_0\T\bm\Gamma_0\T\Y)\T}\bigg\}\dfrac{\partial \text{vec}(\B_0)}{\partial \mathbf b\T}\\ 
			&\quad + \text{vec}\T\bigg\{\bm\Gamma_0\T\bm\Delta_2\dfrac{\partial \log\eta_3}{\partial (\mathbf B_0\T\bm\Gamma_0\T\Y)\T}\bigg\}\dfrac{\partial \text{vec}(\B_0)}{\partial \mathbf b\T}.
		\end{aligned}
	\end{equation*}
	Since $\bm\Gamma_0\T\PP_{\bm\Gamma} = \bm\Gamma_0\T\bm\Gamma\bm\Gamma\T = 0$, $S_{\text{eff},\mathbf b}$ can be further simplified as 
	$$S_{\text{eff},\mathbf b} = \text{vec}\T\bigg\{\bm\Gamma_0\T\bm\Delta_2\dfrac{\partial \log\eta_3}{\partial (\mathbf B_0\T\bm\Gamma_0\T\Y)\T}\bigg\}\dfrac{\partial \text{vec}(\B_0)}{\partial \mathbf b\T}$$
	\end{proof}
	Before proving the following theorem, we first prove a lemma.

\begin{customthm}{5}
    \label{lemma: supp3}
	Under regularity conditions (A1)--(A5), we have 
	\begin{equation*}
		\begin{aligned}
			&\dfrac{1}{\sqrt n}\sum_{i=1}^n\bigg[{\bm\Delta}_{i1}\bigg\{\dfrac{\partial \log\hat\eta_1}{\partial ({\bm\Gamma}\T\Y_i)\T} - \dfrac{\partial \log\eta_1}{\partial ({\bm\Gamma}\T\Y_i)\T}\bigg\}\bigg] = O_p(h^2 + n^{1/2}h^4 + n^{-1/2}h^{-(u+p+1)}\log n),\\
			&\dfrac{1}{\sqrt n}\sum_{i=1}^n\bigg[{\bm\Delta}_{i1}\bigg\{\dfrac{\partial \log\hat\eta_2}{\partial (\B\T{\bm\Gamma}_0\T\Y_i)\T} - \dfrac{\partial \log\eta_2}{\partial (\B\T{\bm\Gamma}_0\T\Y_i)\T}\bigg\}\bigg] = O_p(h^2 + n^{1/2}h^4 + n^{-1/2}h^{-(r+p-u+1)}\log n),\\
			&\text{and }\quad  \dfrac{1}{\sqrt n}\sum_{i=1}^n\bigg[{\bm\Delta}_{i2}({\bm\theta_0})\bigg\{\dfrac{\partial \log\hat\eta_3}{\partial ({\mathbf B}_0\T{\bm\Gamma}_0\T\Y_i)\T} - \dfrac{\partial \log\eta_3}{\partial ({\mathbf B}_0\T{\bm\Gamma}_0\T\Y_i)\T}\bigg\}\bigg] = O_p(h^2 + n^{-1/2}h^4 + n^{-1/2}h^{-(r-u-d+1)}\log n).
		\end{aligned}
	\end{equation*}
	Under regularity conditions (B1)--(B4) and (A4)--(A5), we have 
	\begin{align*}
	&\dfrac{1}{\sqrt n}\sum_{i=1}^{n}\text{vec}\T\bigg[{\bm\Gamma}_0\T\left\{\hat{\bm\Delta}_{i2}({\bm\theta_0}) - {\bm\Delta}_{i2}({\bm\theta_0})\right\}\bigg\{\dfrac{\partial \log\eta_3^*}{\partial (\B_0\T{\bm\Gamma}_0\T\Y_i)\T} - \E\bigg(\dfrac{\partial \log\eta_3^*}{\partial (\B_0\T{\bm\Gamma}_0\T\Y_i)\T}\bigg) \bigg\}\bigg]\\
	=&O_p(h^2 + n^{-1/2}h^{p + r-u-d}\log n + n^{1/2}h^4),\end{align*}
	\begin{align*}
	    &\dfrac{1}{\sqrt n}\sum_{i=1}^{n}\mathbf P_{{\bm\Gamma}}\{\hat{\bm\Delta}_{i1}-{\bm\Delta}_{i1}\}\bigg\{\dfrac{\partial \log\eta_2^*}{\partial ({\mathbf B}\T{\bm\Gamma}_0\T\Y_i)\T}-\E\bigg(\dfrac{\partial \log\eta_2^*}{\partial ({\mathbf B}\T{\bm\Gamma}_0\T\Y_i)\T}\bmid \B_0\T{\bm\Gamma}_0\T\Y_i,\X_i\bigg) \bigg\}\\
	    =&O_p(h^2+n^{-1/2}h^p\log n + n^{1/2}h^4),
	\end{align*}
	where
\begin{align*}
	&\hat{\bm\Delta}_{i1} = \Y_i - \hat{\mathrm E}(\Y_i\mid\X_i)\\
	&\hat{\bm\Delta}_{i2}({\bm\theta_0}) = \mathbf P_{{\bm\Gamma}_0\B}\{\hat{\mathrm E}(\Y_i\mid {\B}_0{\bm\Gamma}_0\T\Y_i,\X_i) - \hat{\mathrm E}(\Y_i\mid {\B}_0{\bm\Gamma}_0\T\Y_i)\}.
\end{align*}
\end{customthm}
\begin{proof}[Proof of Lemma \ref{lemma: supp3}]
	Since the equalities and their proofs are similar, we only show the proof of the first one. Recall the kernel density estimation of ${\partial \eta_1}/{\partial ({\bm\Gamma}\T\Y)\T}$ has the form 
	$$\dfrac{\partial \hat\eta_1(\bm\Gamma\T\Y,\X)}{\partial ({\bm\Gamma}\T\Y)\T} = \dfrac{\sum_{i=1}^nK_h'(\bm\Gamma\T\Y - {\bm\Gamma}\T\Y_i)K_h(\X-\X_i)}{\sum_{i=1}^nK_h(\X-\X_i)}.$$
	Hence, 
	$$\dfrac{\partial \log\hat\eta_1(\bm\Gamma\T\Y,\X)}{\partial ({\bm\Gamma}\T\Y)\T} = \dfrac{\sum_{i=1}^nK_h'(\bm\Gamma\T\Y - {\bm\Gamma}\T\Y_i)K_h(\X-\X_i)}{\sum_{i=1}^nK_h(\bm\Gamma\T\Y - {\bm\Gamma}\T\Y_i)K_h(\X-\X_i)}.$$
	Let $\hat f(\bm\Gamma\T\Y_i, \X_i) = n^{-1}\sum_{j=1}^nK_h(\bm\Gamma\T\Y_i - \bm\Gamma\T\Y_j)K_h(\X_i - \X_j)$ and $\hat{\mathbf r}_1(\bm\Gamma\T\Y_i,\X_i) = n^{-1}\sum_{j=1}^nK_h'(\bm\Gamma\T\Y_i - \bm\Gamma\T\Y_j)K_h(\X_i - \X_j)$. Also, let ${\mathbf r}_1(\bm\Gamma\T\Y_i,\X_i) = {\partial \eta_1(\bm\Gamma\T\Y,\X)}/{\partial ({\bm\Gamma}\T\Y)\T}$. Hence, we have
	\begin{equation*}
		\begin{aligned}
			&\dfrac{1}{n}\sum_{i=1}^n{\bm\Delta}_{i1}\bigg\{\dfrac{\partial \log\hat\eta_1}{\partial ({\bm\Gamma}\T\Y_i)\T} - \dfrac{\partial \log\eta_1}{\partial ({\bm\Gamma}\T\Y_i)\T}\bigg\}\\
			=\quad&\dfrac{1}{n}\sum_{i=1}^n{\bm\Delta}_{i1}\left[\dfrac{\hat{\mathbf r}_1(\bm\Gamma\T\Y_i,\X_i)}{\hat f(\bm\Gamma\T\Y_i, \X_i)} - \dfrac{{\mathbf r}_1(\bm\Gamma\T\Y_i,\X_i)}{f(\bm\Gamma\T\Y_i, \X_i)}\right]\\
			=\quad&\dfrac{1}{n}\sum_{i=1}^n{\bm\Delta}_{i1}\left[\dfrac{\hat{\mathbf r}_1(\bm\Gamma\T\Y_i,\X_i) - {\mathbf r}_1(\bm\Gamma\T\Y_i,\X_i)}{f(\bm\Gamma\T\Y_i, \X_i)}\right] - \dfrac{1}{n}\sum_{i=1}^n{\bm\Delta}_{i1}\left[\dfrac{{\mathbf r}_1(\bm\Gamma\T\Y_i,\X_i)\{\hat f(\bm\Gamma\T\Y_i, \X_i) - f(\bm\Gamma\T\Y_i, \X_i)\}}{f^2(\bm\Gamma\T\Y_i, \X_i)}\right]+o_p(1)\\
			&\quad -\dfrac{1}{n}\sum_{i=1}^n\bm{\bm\Delta}_{i1}\left[\dfrac{\{\hat{\mathbf r}_1(\bm\Gamma\T\Y_i,\X_i) - {\mathbf r}_1(\bm\Gamma\T\Y_i,\X_i)\}\{\hat f(\bm\Gamma\T\Y_i, \X_i) - f(\bm\Gamma\T\Y_i, \X_i)\}}{f(\bm\Gamma\T\Y_i, \X_i)\hat f(\bm\Gamma\T\Y_i, \X_i)}\right]\\
			&\quad +\dfrac{1}{n}\sum_{i=1}^n{\bm\Delta}_{i1}\left[\dfrac{{\mathbf r}_1(\bm\Gamma\T\Y_i,\X_i)\{\hat f(\bm\Gamma\T\Y_i, \X_i) - f(\bm\Gamma\T\Y_i, \X_i)\}^2}{f^2(\bm\Gamma\T\Y_i, \X_i)\hat f(\bm\Gamma\T\Y_i, \X_i)}\right].
		\end{aligned}
	\end{equation*}
	The second equation is because $\|\hat{\bm\Gamma} - \bm\Gamma\|\leq Cn^{-1/2}$.
	By the uniform convergence of nonparametric regression \cite{li2007nonparametric}, we have 
	$$\sup_{\X,\Y}|\hat f(\bm\Gamma\T\Y, \X) - f(\bm\Gamma\T\Y, \X)| = O_p(\sqrt{n^{-1}h^{-(p+u)}\log n} + h^2)$$ and $$\sup_{\X,\Y}|\hat {\mathbf r}_1(\bm\Gamma\T\Y, \X) - \mathbf r_1(\bm\Gamma\T\Y, \X)| = O_p(\sqrt{n^{-1}h^{-(p+u+2)}\log n} + h^2).$$ Therefore, the third quantity can be bounded by
	\begin{equation*}
		\begin{aligned}
			&\bmid\dfrac{1}{n}\sum_{i=1}^n{\bm\Delta}_{i1}\left[\dfrac{\{\hat{\mathbf r}_1(\bm\Gamma\T\Y_i,\X_i) - {\mathbf r}_1(\bm\Gamma\T\Y_i,\X_i)\}\{\hat f(\bm\Gamma\T\Y_i, \X_i) - f(\bm\Gamma\T\Y_i, \X_i)\}}{f(\bm\Gamma\T\Y_i, \X_i)\hat f(\bm\Gamma\T\Y_i, \X_i)}\right]\bmid\\
			\leq\quad&\dfrac{1}{n}\sum_{i=1}^n|{\bm\Delta}_{i1}|\left[\dfrac{\sup_{\X,\Y}|\hat {\mathbf r}_1(\bm\Gamma\T\Y, \X) - \mathbf r_1(\bm\Gamma\T\Y, \X)|\sup_{\X,\Y}|\hat f(\bm\Gamma\T\Y, \X) - f(\bm\Gamma\T\Y, \X)|}{|f(\bm\Gamma\T\Y_i, \X_i)\hat f(\bm\Gamma\T\Y_i, \X_i)|}\right]\\
			=\quad&\dfrac{1}{n}\sum_{i=1}^n|{\bm\Delta}_{i1}|\left[\dfrac{O_p(n^{-1}h^{-(p+u+1)}\log n + h^4)}{|f(\bm\Gamma\T\Y_i, \X_i)\hat f(\bm\Gamma\T\Y_i, \X_i)|}\right]\\
			=\quad&O_p(n^{-1}h^{-(p+u+1)}\log n + h^4)\cdot\dfrac{1}{n}\sum_{i=1}^n|{\bm\Delta}_{i1}|\\
			=\quad&O_p(n^{-1}h^{-(p+u+1)}\log n + h^4).
		\end{aligned}
	\end{equation*}
	The second to the third equation is because
	$$|f(\bm\Gamma\T\Y_i, \X_i)\hat f(\bm\Gamma\T\Y_i, \X_i)|^{-1} = |f(\bm\Gamma\T\Y_i, \X_i)\{f(\bm\Gamma\T\Y_i, \X_i)+o_p(1)\}|^{-1} = O_p(1).$$
	Using the exact same technique, the fourth quantity in the above decomposition is of order $O_p(n^{-1}h^{-(p+u)}\log n + h^4)$.

	Notice that the first two quantities also share the same structure, in the sequel we only deal with the first quantity. 
	\begin{equation*}
		\begin{aligned}
			&\dfrac{1}{n}\sum_{i=1}^n{\bm\Delta}_{i1}\left[\dfrac{\hat{\mathbf r}_1(\bm\Gamma\T\Y_i,\X_i) - {\mathbf r}_1(\bm\Gamma\T\Y_i,\X_i)}{f(\bm\Gamma\T\Y_i, \X_i)}\right]\\
			=\quad&O_p(1)\cdot\dfrac{1}{n}\sum_{i=1}^n{\bm\Delta}_{i1}\big[\hat{\mathbf r}_1(\bm\Gamma\T\Y_i,\X_i) - \E\{\hat{\mathbf r}_1(\bm\Gamma\T\Y_i,\X_i)\mid\bm\Gamma\T\Y_i,\X_i\}\big] \\
			&\quad +O_p(1)\cdot\dfrac{1}{n}\sum_{i=1}^n{\bm\Delta}_{i1}\big[\E\{\hat{\mathbf r}_1(\bm\Gamma\T\Y_i,\X_i)\mid\bm\Gamma\T\Y_i,\X_i\} - {\mathbf r}_1(\bm\Gamma\T\Y_i,\X_i)\big].
		\end{aligned}
	\end{equation*}
	We can write 
	$$\dfrac{1}{n}\sum_{i=1}^n{\bm\Delta}_{i1}\hat{\mathbf r}_1(\bm\Gamma\T\Y_i,\X_i) = \dfrac{1}{n(n-1)}\sum_{i\neq j}K_h'(\bm\Gamma\T\Y_i - \bm\Gamma\T\Y_j)K_h(\X_i - \X_j)\{{\bm\Delta}_{i1} + \bm\Delta_{j1}\}+O_p(n^{-1})$$
	as a second order U-statistic with kernel function $K_h'(\bm\Gamma\T\Y_i - \bm\Gamma\T\Y_j)K_h(\X_i - \X_j)\{{\bm\Delta}_{i1} + \bm\Delta_{j1}\}.$
	Hence, by Lemma 5.2.1.A of \cite{serfling2009approximation}, the degenerated U-statistic has the rate
	$$\dfrac{1}{n}\sum_{i=1}^n{\bm\Delta}_{i1}\big[\hat{\mathbf r}_1(\bm\Gamma\T\Y_i,\X_i) - \E\{\hat{\mathbf r}_1(\bm\Gamma\T\Y_i,\X_i)\mid\bm\Gamma\T\Y_i,\X_i\}\big] = O_p(n^{-1}h^{-(p+u+1)}).$$
	
	Notice that $\E\{\hat{\mathbf r}_1(\bm\Gamma\T\Y_i,\X_i)\mid\bm\Gamma\T\Y_i,\X_i\} - {\mathbf r}_1(\bm\Gamma\T\Y_i,\X_i)$ is the bias term in nonparametric regression, hence 
	$$\sup_{\X,\Y}|\E\{\hat{\mathbf r}_1(\bm\Gamma\T\Y,\X)\mid\bm\Gamma\T\Y,\X\} - {\mathbf r}_1(\bm\Gamma\T\Y,\X)| = O_p(h^2).$$
	Therefore 
	\begin{equation*}
		\begin{aligned}
			\dfrac{1}{n}\sum_{i=1}^n{\bm\Delta}_{i1}\big[\E\{\hat{\mathbf r}_1(\bm\Gamma\T\Y_i,\X_i)\mid\bm\Gamma\T\Y_i,\X_i\} - {\mathbf r}_1(\bm\Gamma\T\Y_i,\X_i)\big] = O_p(n^{-1/2}h^2).
		\end{aligned}
	\end{equation*}
	Combine the above results, we have the desired order $O_p(n^{-1/2}h^2+h^{4}+n^{-1}h^{-(p+u+1)}\log n)$.
\end{proof}
\begin{proof}[Proof of Theorem \ref{thm: global}]
	Similar to the proof of Theorem 3, we use Lemma 6 and 7 to prove the consistency and asymptotic normality. We only check condition (iii) in Lemma 7. All other conditions can be proved using the same way as the proof of Theorem 2. Consider the functions $\hat Q_n(\bm\theta)$ and $Q_0(\bm\theta)$ as
	
	$$Q_0(\bm\theta) = \dfrac{1}{2}\left\{\frac{1}{n}\sum_{i = 1}^n {S}_{\rm eff}(\Y_i, \X_i, \eta_{1,2,3}, {\bm \Delta}_1, {\bm \Delta}_2; {\bm\theta})\right\}^2$$
	$$\hat Q_n(\bm\theta) = \dfrac{1}{2}\left\{\frac{1}{n}\sum_{i = 1}^n \hat{S}_{\rm eff}(\Y_i, \X_i, \hat\eta_{1,2,3}, \hat{\bm \Delta}_1, \hat{\bm \Delta}_2; {\bm\theta})\right\}^2.$$
	We want to show that $\sqrt n{\partial \hat Q_n(\bm\theta_0)}/{\partial \bm\theta}$ is asymptotically normal.
	
	\begin{align*}
		&\sqrt n\frac{\partial \hat Q_n(\bm\theta_0)}{\partial \bm\theta} = \frac{1}{\sqrt n}\sum_{i = 1}^n \hat{S}_{\rm eff}(\Y_i, \X_i, \hat\eta_{1,2,3}, \hat{\bm \Delta}_1, \hat{\bm \Delta}_2; {\bm\theta_0})\cdot \frac{1}{n}\sum_{i=1}^n\dfrac{\partial}{\partial\bm\theta}\hat S_{\rm eff}(\Y_i,\X_i, \hat\eta_{1,2,3}, \hat{\bm \Delta}_1, \hat{\bm \Delta}_2;\bm\theta_0),
	\end{align*}
	and 
	$$\frac{1}{n}\sum_{i=1}^n\dfrac{\partial}{\partial\bm\theta}\hat S_{\rm eff}(\Y_i,\X_i, \hat\eta_{1,2,3}, \hat{\bm \Delta}_1, \hat{\bm \Delta}_2;\bm\theta_0)\overset{p}{\rightarrow}\E\left\{\dfrac{\partial}{\partial\bm\theta}{S}_{\rm eff}(\Y_i, \X_i, \eta_{1,2,3}, {\bm \Delta}_1, {\bm \Delta}_2; {\bm\theta})\right\}.$$
	By regularity condition (B1), the order of differentiation and integration can be exchanged. Hence, we have $$\E\left\{\dfrac{\partial}{\partial\bm\theta}{S}_{\rm eff}(\Y_i, \X_i, \eta_{1,2,3}, {\bm \Delta}_1, {\bm \Delta}_2; {\bm\theta})\right\} = -\E\{{S}_{\rm eff}(\Y_i, \X_i, \eta_{1,2,3}, {\bm \Delta}_1, {\bm \Delta}_2; {\bm\theta})^{\otimes 2}\}.$$
	By Slutsky's Theorem, we only need to show $n^{-1/2}\sum_{i = 1}^n \hat{S}_{\rm eff}(\Y_i, \X_i, \hat\eta_{1,2,3}, \hat{\bm \Delta}_1, \hat{\bm \Delta}_2; {\bm\theta_0})$ converges to a normal distribution. Also, because 
	$$\frac{1}{\sqrt n}\sum_{i = 1}^n {S}_{\rm eff}(\Y_i, \X_i, \eta_{1,2,3}, {\bm \Delta}_1, {\bm \Delta}_2; {\bm\theta_0}) \overset{d}{\rightarrow}\mathcal N\bigg[\bm 0, \E\big\{S_{\rm eff}(\Y_i,\X_i, \eta_1, \eta_2, \eta_3;\bm\theta_0)^{\otimes2}\big\}\bigg],$$
    we only need to show 
	{\small$$\frac{1}{\sqrt n}\sum_{i = 1}^n \hat{S}_{\rm eff}(\Y_i, \X_i, \hat\eta_{1,2,3}, \hat{\bm \Delta}_1, \hat{\bm \Delta}_2; {\bm\theta_0})-\frac{1}{\sqrt n}\sum_{i = 1}^n {S}_{\rm eff}(\Y_i, \X_i, \eta_{1,2,3}, {\bm \Delta}_1, {\bm \Delta}_2; {\bm\theta_0})\overset{p}{\rightarrow}0.$$}
	Since $S_{\text{eff}}^* = (S_{\text{eff}, \bm\gamma}^*, S_{\text{eff}, \mathbf b}^*)$, and the proof for each component are similar, we only prove the convergence in $S_{\text{eff}, \mathbf b}^*$.
	
	Let $\bm\Gamma$ and $\mathbf B$ denote the orthogonal basis derived from $\bm\theta_0$. Then,
	
	\begin{align*}
	    &\frac{1}{\sqrt n}\sum_{i = 1}^n \hat{S}_{\rm{eff},\mathbf b}(\Y_i, \X_i, \hat\eta_{1,2,3}, \hat{\bm \Delta}_1, \hat{\bm \Delta}_2; {\bm\theta_0}) - \frac{1}{\sqrt n}\sum_{i = 1}^n {S}_{\rm{eff},\mathbf b}(\Y_i, \X_i, \eta_{1,2,3}, {\bm \Delta}_1, {\bm \Delta}_2; {\bm\theta_0})\\
	    =& n^{-{1}/{2}}\sum_{i=1}^n\bigg[\big\{\hat{\bm\Delta}_{i2}({\bm\theta_0})-{\bm\Delta}_{i2}({\bm\theta_0})\big\}\dfrac{\partial \log\eta_3}{\partial ({\mathbf B}_0\T{\bm\Gamma}_0\T\Y_i)\T}\bigg]\\
	&\quad + n^{-{1}/{2}}\sum_{i=1}^n\bigg[\big\{\hat{\bm\Delta}_{i2}({\bm\theta_0})-{\bm\Delta}_{i2}({\bm\theta_0})\big\}\bigg\{\dfrac{\partial \log\hat\eta_3}{\partial ({\mathbf B}_0\T{\bm\Gamma}_0\T\Y_i)\T} - \dfrac{\partial \log\eta_3}{\partial ({\mathbf B}_0\T{\bm\Gamma}_0\T\Y_i)\T}\bigg\}\bigg]\\
	&\quad + n^{-{1}/{2}}\sum_{i=1}^n\bigg[{\bm\Delta}_{i2}({\bm\theta_0})\bigg\{\dfrac{\partial \log\hat\eta_3}{\partial ({\mathbf B}_0\T{\bm\Gamma}_0\T\Y_i)\T} - \dfrac{\partial \log\eta_3}{\partial ({\mathbf B}_0\T{\bm\Gamma}_0\T\Y_i)\T}\bigg\}\bigg].
	\end{align*}

By Lemma 5, the first term is $O_p(h^2 + n^{1/2}h^4 + n^{-1/2}h^{-(r+p-u-d)}\log n)$ and the third term is $O_p(h^2 + n^{1/2}h^4 + n^{-1/2}h^{-(r-u-d+1)}\log n)$. By the uniform convergence theorem in nonparametric regression (Theorem 2.6 in \cite{li2007nonparametric} and Theorem 6 in \cite{hansen2008uniform}), we have
$$\sup_{\X_i,\Y_i}|\hat{\bm\Delta}_{i2}({\bm\theta_0})- {\bm\Delta}_{i2}({\bm\theta_0})| = O_p\bigg\{\left(\frac{\log n}{nh^{(r+p-u-d)}}\right)^{1/2}+h^2\bigg\},$$
and 
$$\sup_{\X_i,\Y_i}\bmid\dfrac{\partial \log\hat\eta_3}{\partial ({\mathbf B}_0\T{\bm\Gamma}_0\T\Y_i)\T} - \dfrac{\partial \log\eta_3}{\partial ({\mathbf B}_0\T{\bm\Gamma}_0\T\Y_i)\T}\bmid = O_p\bigg\{\left(\frac{\log n}{nh^{r-u-d+2}}\right)^{1/2}+h^2\bigg\}.$$
Hence, the second term can be bounded by $O_p\{n^{-1/2}h^{r-u-d+1+p/2}\log n + n^{1/2}h^4\} = o_p(1)$. Therefore, 
$$\frac{1}{\sqrt n}\sum_{i = 1}^n \hat{S}_{\rm{eff},\mathbf b}(\Y_i, \X_i, \hat\eta_{1,2,3}, \hat{\bm \Delta}_1, \hat{\bm \Delta}_2; {\bm\theta_0})-\frac{1}{\sqrt n}\sum_{i = 1}^n {S}_{\rm{eff},\mathbf b}(\Y_i, \X_i, \eta_{1,2,3}, {\bm \Delta}_1, {\bm \Delta}_2; {\bm\theta_0}) \overset{p}{\rightarrow}0 .$$
Hence, by Slutsky's Theorem,
$$\sqrt n\frac{\partial \hat Q_n(\bm\theta_0)}{\partial \bm\theta}\overset{d}{\rightarrow}\mathcal N(0, \mathcal V_{\bm\theta}^{-3})$$
where $\mathcal V_{\bm\theta} = \E\{{S}_{\rm eff}(\Y_i, \X_i, \eta_{1,2,3}, {\bm \Delta}_1, {\bm \Delta}_2; {\bm\theta})^{\otimes 2}\}^{-1}$.

Similarly, we can prove that 
$$\nabla_{\bm\theta\bm\theta}\hat{Q}_n(\bm\theta_0) 
	   \overset{p}{\rightarrow}\mathcal V_{\bm\theta}^{-2} = \mathbf H.$$
Therefore, by Lemma 7, we have 
$$\sqrt n (\hat{\bm\theta} - \bm\theta_0)\overset{d}{\rightarrow}\mathcal N(\bm 0,\mathcal V_{\bm\theta}),$$
which achieves the semiparametric efficiency bound.
\end{proof}
\begin{customthm}{6}
	(Theorem 2.1 in \cite{newey1994large}) Suppose there is a function $Q_0(\bm\theta)$ such that (i) $Q_0(\bm\theta)$ is uniquely minimized at $\bm\theta_0$, (ii) $\bm\Theta$ is compact, (iii) $Q_0(\bm\theta)$ is continuous, (iv) $\hat Q_n(\bm\theta)$ converges uniformly in probability to $Q_0(\bm\theta)$, then $\hat Q_n(\bm\theta)$, $\hat{\bm\theta}\overset{p}{\rightarrow}\bm\theta_0$.
	\end{customthm}
	\begin{customthm}{7}
	(Theorem 3.1 in \cite{newey1994large}) Suppose that $\hat{\bm\theta}$ is a minimizer of $\hat Q_n(\bm\theta)$, $\hat{\bm\theta}\overset{p}{\rightarrow}\bm\theta_0$ and (i) $\bm\theta_0\in\text{interior}(\bm\Theta)$, (ii) $\hat Q_n(\bm\theta)$ is twice continuously differentiable in a neighborhood $\mathcal N_\epsilon$ of $\bm\theta_0$, (iii) $\sqrt n \nabla_{\bm\theta}\hat Q_n(\bm\theta_0)\overset{d}{\rightarrow}\mathcal N(\bm0,\bm\Sigma)$, (iv) there is $H(\bm\theta)$ that is continuous at $\bm\theta_0$ and $\sup_{\bm\theta\in\mathcal N_\epsilon}\|\nabla_{\bm\theta\bm\theta}\hat Q_n(\bm\theta) - H(\bm\theta)\|\overset{p}{\rightarrow}\bm 0$, (v) $\mathbf H = H(\bm\theta_0)$ is nonsingular. Then, $\sqrt n (\hat{\bm\theta} - \bm\theta_0)\overset{d}{\rightarrow}\mathcal N(\bm0, \mathbf H^{-1}\bm\Sigma\mathbf H^{-1})$.
	\end{customthm}
	\section{Additional Materials}
	\subsection{MLE of the regression parameter under the inner envelope model}
	In this part, we present the MLE of the regression parameter $\bm\beta$ under the inner envelope model. Here, we assume the inner envelope spaces $\hat{\bm\Gamma}$, $\hat{\bm\Gamma}_0$, $\hat{\mathbf B}$ and $\hat{\mathbf B}_0$ are already calculated. 
	
	Let $\mathbf S_{\text{fit}}$ and $\mathbf S_{\text{res}}$ denote the sample covariance matrices of the fitted and residual vectors from the OLS fit of $\Y$ on $\X$. Let $\tilde\lambda_i(\mathbf G_0)$ denote the ordered, descending eigenvalues of $(\mathbf G_0\T\mathbf S_{\text{res}}\mathbf G_0)^{-1/2}(\mathbf G_0\T\mathbf S_{\text{fit}}\mathbf G_0)(\mathbf G_0\T\mathbf S_{\text{res}}\mathbf G_0)^{-1/2}$, where $\mathbf G_0\in\mathbb R^{r\times (r-u)}$ is a semi-orthogonal matrix. Also, we denote the matrices of ordered eigenvectors and eigenvalues as $\tilde{\mathbf V}(\mathbf G_0)$ and $\tilde{\bm\Lambda}(\mathbf G_0)=\text{diag}\{\tilde\lambda_1(\mathbf G_0),\ldots, \tilde\lambda_{r-u}(\mathbf G_0)\}$, and let $\tilde{\mathbf \K}(\mathbf G_0)=\text{diag}\{0,\ldots,0,\tilde\lambda_{p-u+1}(\mathbf G_0),\ldots, \tilde\lambda_{r-u}(\mathbf G_0)\}$. Let $\mathbf F$ denote the $n\times p$ matrix with $i$th row $\X_i\T$, let $\mathbf U$ be the $n\times r$ matrix with $i$th row $(\Y_i-\bar\Y)\T$, and let $\hat{\bm\beta}_{\text{ols}}$ denote the MLE of $\bm\beta$ under the standard model \eqref{eq: lm}. Then, 
	\begin{align*}
		&\hat{\bm\zeta}_1\T = \hat{\bm\Gamma}\T\hat{\bm\beta}_{\text{ols}},\\
		&\hat{\bm\Omega}_1 = (\mathbf U\hat{\bm\Gamma} - \mathbf F \hat{\bm\zeta}_1)\T(\mathbf U\hat{\bm\Gamma} - \mathbf F \hat{\bm\zeta}_1)/n,\\
		&\hat{\bm\Omega}_0 = \hat{\bm\Gamma}_0\mathbf S_{\text{res}}\hat{\bm\Gamma}_0 + (\hat{\bm\Gamma}_0\mathbf S_{\text{res}}\hat{\bm\Gamma}_0)^{1/2}\tilde{\mathbf V}(\hat{\bm\Gamma}_0)\tilde{\mathbf K}(\hat{\bm\Gamma}_0)\tilde{\mathbf V}(\hat{\bm\Gamma}_0)
		(\hat{\bm\Gamma}_0\mathbf S_{\text{res}}\hat{\bm\Gamma}_0)^{1/2}\\
		&\text{span}(\hat\B) = \hat{\bm\Omega}_0\mathcal S_{p-d}(\hat{\bm\Omega}_0, \hat{\bm\Gamma}_0\T)\\
		&\hat{\bm\zeta}_2\T = (\hat{\B}\T\hat{\bm\Omega}^{-1}_0\hat{\B})^{-1}\hat{\B}\T\hat{\bm\Omega}^{-1}_0\hat{\bm\Gamma}_0\T\hat{\bm\beta}_{\text{ols}},\\
		&\hat{\bm\beta} = \hat{\bm\Gamma}\hat{\bm\zeta}_1\T + \hat{\bm\Gamma}_0\hat{\bm\Omega}_0\hat{\bm\zeta}_2\T,\\
		&\hat{\bm\Sigma} = \hat{\bm\Gamma}\hat{\bm\Omega}_1\hat{\bm\Gamma}\T + \hat{\bm\Gamma}_0\hat{\bm\Omega}_0\hat{\bm\Gamma}_0\T,
	\end{align*}
	where $\mathcal S_k(\mathbf A, \mathbf B)$ is the span of $\mathbf A^{-1/2}$ times the first $k$ eigenvectors of $\mathbf A^{-1/2}\mathbf B\mathbf A^{-1/2}$.
	Detailed derivations are carried out by Su and Cook \cite{su2012inner}.
	\subsection{Nonparametric density estimation and nonparametric regression} The derivative of the log densities are estimated by 
\begin{align}
	&\frac{\partial \log\hat\eta_1}{\partial (\bm\Gamma\T\y)\T} =  \dfrac{\sum_{i=1}^nK_h'(\bm\Gamma^{\mathsf T}\Y_i-\bm\Gamma\T\y)K_h(\X_i-\x)}{\sum_{i=1}^nK_h(\bm\Gamma^{\mathsf T}\Y_i-\bm\Gamma\T\y)K_h(\X_i-\x)},\\
	&\frac{\partial \log\hat\eta_2}{\partial (\B\T\bm\Gamma_0\T\y)\T} =  \dfrac{\sum_{i=1}^nK_h'\{\B\T\bm\Gamma_0^{\mathsf T}(\Y_i-\y)\}K_h\{\B_0\T\bm\Gamma_0^{\mathsf T}(\Y_i-\y)\}K_h(\X_i-\x)}{\sum_{i=1}^nK_h\{\B\T\bm\Gamma_0^{\mathsf T}(\Y_i-\y)\}K_h\{\B_0\T\bm\Gamma_0^{\mathsf T}(\Y_i-\y)\}K_h(\X_i-\x)},\\
	&\frac{\partial \log\hat\eta_3}{\partial (\B_0\T\bm\Gamma_0\T\y)\T} =  \dfrac{\sum_{i=1}^nK_h'\{\B_0\T\bm\Gamma_0^{\mathsf T}(\Y_i-\y)\}}{\sum_{i=1}^nK_h\{\B_0\T\bm\Gamma_0^{\mathsf T}(\Y_i-\y)\}}.
	\end{align}
The nonparametric regression $\bm\Delta_1$ and $\bm\Delta_2$ are estimated by 
\begin{align}
	\hat{\bm\Delta}_1 =& \y - \frac{\sum_{i=1}^N \Y_iK_{h}(\X_i-\x)}{\sum_{i=1}^NK_{h}(\X_i-\x)},\nonumber\\
	\hat{\bm\Delta}_2 =&\dfrac{\sum_{i=1}^{N}\mathbf P_{\bm\Gamma_0\B}\Y_iK_h\{\B_0\T\bm\Gamma_0\T(\Y_i -\y) \}K_h(\X_i-\x)}{\sum_{i=1}^{N}K_h\{\B_0\T\bm\Gamma_0\T(\Y_i -\y) \}K_h(\X_i-\x)}\\ &\quad - \dfrac{\sum_{i=1}^{N}\mathbf P_{\bm\Gamma_0\B}\Y_iK_h\{\B_0\T\bm\Gamma_0\T(\Y_i -\y) \}}{\sum_{i=1}^{N}K_h\{\B_0\T\bm\Gamma_0\T(\Y_i -\y) \}}\nonumber.
\end{align}
The nonparametric regression for the conditional expectations of $\eta_i^*$ are estimated by
\begin{align}
    &\hat{\E}\left(\dfrac{\partial \log\eta_1^*}{\partial (\bm\Gamma^{\mathsf T}\Y_i)\T}\bmid\x\right) = \frac{\sum_{i=1}^N {\partial \log\eta_1^*}/{\partial (\bm\Gamma^{\mathsf T}\Y_i)\T} K_{h}(\X_i-\x)}{\sum_{i=1}^NK_{h}(\X_i-\x)},\\
    &\hat{\E}\left(\dfrac{\partial \log\eta_2^*}{\partial (\B\T\bm\Gamma_0^{\mathsf T}\Y_i)\T}\bmid\B_0\T\bm\Gamma_0^{\mathsf T}\y,\x\right) = \frac{\sum_{i=1}^N {\partial \log\eta_2^*}/{\partial (\B\T\bm\Gamma_0^{\mathsf T}\Y_i)\T} K_{h}\{\B_0\T\bm\Gamma_0^{\mathsf T}(\y_i-\y)\}K_{h}(\X_i-\x)}{\sum_{i=1}^NK_{h}\{\B_0\T\bm\Gamma_0^{\mathsf T}(\y_i-\y)\}K_{h}(\X_i-\x)},\\
    &\hat{\E}\left(\dfrac{\partial \log\eta_2^*}{\partial (\B_0\T\bm\Gamma_0^{\mathsf T}\Y_i)\T}\bmid\B_0\T\bm\Gamma_0^{\mathsf T}\y,\x\right) = \frac{\sum_{i=1}^N {\partial \log\eta_2^*}/{\partial (\B_0\T\bm\Gamma_0^{\mathsf T}\Y_i)\T} K_{h}\{\B_0\T\bm\Gamma_0^{\mathsf T}(\y_i-\y)\}K_{h}(\X_i-\x)}{\sum_{i=1}^NK_{h}\{\B_0\T\bm\Gamma_0^{\mathsf T}(\y_i-\y)\}K_{h}(\X_i-\x)},\\
    &\hat{\E}\left(\dfrac{\partial \log\eta_3^*}{\partial (\B_0\T\bm\Gamma_0^{\mathsf T}\Y_i)\T}\right) = \dfrac{1}{n}\sum_{i=1}^N\dfrac{\partial \log\eta_3^*}{\partial (\B_0\T\bm\Gamma_0^{\mathsf T}\Y_i)\T}.
\end{align}
\subsection{Simulation: linear model with additive, normal errors}
For the out-of-sample predictive RMSE results in Section 6.1, the oracle, InnEnv, local, global, GMM, OLS, Env and PLS methods have prediction RMSE equal to 1.82, 1.82, 1.83, 1.83, 1.83, 1.91, 1.97, 1.96 respectively.

\subsection{Real data: synthetic dataset from iris data}
The estimated inner envelope spaces obtained from the globally efficient algorithm is 
\begin{equation*}
	\hat{\bm\Gamma} = \begin{pmatrix}
		-0.01  \\
-0.02  \\
0.75  \\
-0.46 \\
0.35  \\
0.33
	\end{pmatrix},\hspace{5mm} \hat{\bm\Gamma}_0\hat\B = \begin{pmatrix}
		0  \\
0  \\
-0.50  \\
0.08 \\
0.66  \\
0.55
	\end{pmatrix},\hspace{5mm} \hat{\bm\Gamma}_0\hat\B_0 = \begin{pmatrix}
		-0.98&  0.08&  -0.14 & 0  \\
		0 & -0.86 & -0.50& 0.01  \\
		0&  0& -0.04& -0.43  \\
		0&  0&  0& -0.88 \\
		0.11& 0.32& -0.56& -0.12  \\
		-0.13&  -0.38&  0.64& -0.12
	\end{pmatrix}.
\end{equation*}

Because $\mathcal S_0$ and $\mathcal{S}_3$ are of different dimension, we consider the spaces $\mathbf P_{\mathcal S_3}\mathcal S_0$ and $\mathcal S_3^{\perp}$. The space $\mathbf P_{\mathcal S_3}\mathcal S_0$ should be close to $\mathcal S_0$, and $\mathcal S_3^{\perp}$ should be perpendicular to $\mathcal S_0$. It turns out that ${\rm dist}(\mathbf P_{\mathcal S_3}\mathcal S_0, \mathcal S_0) = 0.029$ and ${\rm dist}(\mathcal S_3^\perp, \mathcal S_0)=1.999$. Since the upper bound for ${\rm dist}(\mathcal S_3^\perp, \mathcal S_0)$ is 2, it indicates that $\mathcal S_0$ is contained in $\mathcal S_3$. That is, our method successfully identified the artificial noise space $\mathcal S_0$. 

\section{Tables and Figures} \label{sec: numerical_results}

\begin{table}[!htb]
	\centering
	\begin{tabular}{||c | c c c c c||} 
	\hline
	n & & InnEnv & Local & Global & GMM\\ [0.5ex] 
	\hline\hline
	100 & $dist(\mathcal S_1,\hat {\mathcal S}_1)$ &0.111& 0.134 & 0.166 & 0.224\\ 
	\hline
	& $dist(\mathcal S_3,\hat {\mathcal S}_3)$ & 0.089 & 0.121 & 0.152 & 0.201\\
	\hline
	300 & $dist(\mathcal S_1,\hat {\mathcal S}_1)$ &0.085& 0.101 & 0.125 & 0.161\\ 
	\hline
	& $dist(\mathcal S_3,\hat {\mathcal S}_3)$ & 0.062 & 0.078 & 0.102 & 0.119\\
	\hline
	500 & $dist(\mathcal S_1,\hat {\mathcal S}_1)$ &0.070& 0.089 & 0.114 & 0.141\\ 
	\hline
	& $dist(\mathcal S_3,\hat {\mathcal S}_3)$ & 0.050 & 0.064 & 0.087 & 0.101\\
	\hline
	750 & $dist(\mathcal S_1,\hat {\mathcal S}_1)$ &0.055& 0.071 & 0.095 & 0.112\\ 
	\hline
	& $dist(\mathcal S_3,\hat {\mathcal S}_3)$ & 0.041 & 0.057 & 0.072 & 0.084\\
	\hline
	1000 & $dist(\mathcal S_1,\hat {\mathcal S}_1)$ &0.045& 0.060 & 0.084 & 0.099\\ 
	\hline
	& $dist(\mathcal S_3,\hat {\mathcal S}_3)$ & 0.035 & 0.044 & 0.065 & 0.073\\
	\hline
   \end{tabular}
   \caption{Mean distance of the estimated space and the true space (linear case)}
   \label{tb: dist1}
   \end{table}

   \begin{table}[!htb]
	\centering
	\begin{tabular}{||c |c c c c c c c c||} 
	\hline
	Size&  Oracle & InnEnv &  Local & Global & GMM & OLS & Env &  PLS\\ [0.5ex] 
	\hline\hline
	100 & 0.136 & 0.162 & 0.190 & 0.212 & 0.244 & 0.669 & 0.705 & 0.751\\ 
	\hline
	300 & 0.077 & 0.088 & 0.097 & 0.115 & 0.135 & 0.378 & 0.403 & 0.460\\ 
	\hline
	500 & 0.060 & 0.065 & 0.071 & 0.084 & 0.102 & 0.287 & 0.386 & 0.317\\ 
	\hline
	750 & 0.049 & 0.054 & 0.063 & 0.074 & 0.085 & 0.240 & 0.346 & 0.278 \\
	\hline
	1000  &0.042 & 0.049 & 0.056 & 0.068 & 0.077 & 0.213 & 0.310 & 0.236\\ 
	\hline
   \end{tabular}
   \caption{Mean and standard errors of $\|\hat{\bm\beta} - \bm\beta\|_F^2$ in Scenario 1 with different sample sizes.}
   \label{tb: simu2}
   \end{table}
   
   \begin{table}[!htb]
	\centering
	\begin{tabular}{||c|c c c c c||} 
	\hline
	Size & &InnEnv & Local & Global & GMM\\ [0.5ex] 
	\hline\hline
	100 &$dist(\mathcal S_1,\hat {\mathcal S}_1)$ &1.175& 0.542 & 0.360 & 0.745\\ 
	\hline
	 &$dist(\mathcal S_3,\hat {\mathcal S}_3)$ & 1.028 & 0.382 & 0.274 & 0.507\\
	\hline
	300 &$dist(\mathcal S_1,\hat {\mathcal S}_1)$ &1.192& 0.429 & 0.277 & 0.589\\ 
	\hline
	 &$dist(\mathcal S_3,\hat {\mathcal S}_3)$ & 1.052 & 0.259 & 0.186 & 0.344\\
	\hline
	500 &$dist(\mathcal S_1,\hat {\mathcal S}_1)$ &1.213& 0.360 & 0.236 & 0.502\\ 
	\hline
	 &$dist(\mathcal S_3,\hat {\mathcal S}_3)$ & 1.068 & 0.209 & 0.142 & 0.277\\
	\hline
	750 &$dist(\mathcal S_1,\hat {\mathcal S}_1)$ &1.201& 0.297 & 0.192 & 0.412\\ 
	\hline
	 &$dist(\mathcal S_3,\hat {\mathcal S}_3)$ & 1.089 & 0.177 & 0.117 & 0.234\\
	\hline
	1000 &$dist(\mathcal S_1,\hat {\mathcal S}_1)$ &1.189& 0.265 & 0.168 & 0.368\\ 
	\hline
	 &$dist(\mathcal S_3,\hat {\mathcal S}_3)$ & 1.090 & 0.154 & 0.101 & 0.203\\
	\hline
   \end{tabular}
   \caption{Mean distance of the estimated space and the true space (nonlinear case)}
   \label{tb: dist2}
\end{table}

\begin{table}[!htb]
	\centering
	\begin{tabular}{||c|c c c c c||} 
	\hline
	Size & &InnEnv & Local & Global & GMM\\ [0.5ex] 
	\hline\hline
	100 &$dist(\mathcal S_1,\hat {\mathcal S}_1)$ &0.992& 0.385 & 0.320 & 0.425\\ 
	\hline
	 &$dist(\mathcal S_3,\hat {\mathcal S}_3)$ & 0.998 & 0.498 & 0.401 & 0.552\\
	\hline
	300 &$dist(\mathcal S_1,\hat {\mathcal S}_1)$ &0.984& 0.225 & 0.192 & 0.268\\ 
	\hline
	 &$dist(\mathcal S_3,\hat {\mathcal S}_3)$ & 0.920 & 0.288 & 0.238 & 0.340\\
	\hline
	500 &$dist(\mathcal S_1,\hat {\mathcal S}_1)$ &0.917& 0.175 & 0.147 & 0.208\\ 
	\hline
	 &$dist(\mathcal S_3,\hat {\mathcal S}_3)$ & 0.903 & 0.223 & 0.182 & 0.263\\
	\hline
	750 &$dist(\mathcal S_1,\hat {\mathcal S}_1)$ &0.980& 0.140 & 0.117 & 0.169\\ 
	\hline
	 &$dist(\mathcal S_3,\hat {\mathcal S}_3)$ & 0.913 & 0.182 & 0.150 & 0.215\\
	\hline
	1000 &$dist(\mathcal S_1,\hat {\mathcal S}_1)$ &0.932& 0.132 & 0.105 & 0.147\\ 
	\hline
	 &$dist(\mathcal S_3,\hat {\mathcal S}_3)$ & 0.919 & 0.157 & 0.128 & 0.186\\
	\hline
   \end{tabular}
   \caption{Mean distance of the estimated space and the true space (exponential case)}
   \label{tb: dist3}
\end{table}

\begin{table}[!htb]
	\centering
	\caption{The point estimates, bootstrap standard errors and $p$-values for the regression parameter for the \textit{iris} dataset}
		\begin{tabular}{c|ccc|ccc}
			\hline
			&\multicolumn{3}{c}{Our Method}&\multicolumn{3}{c}{Standard } \\
			\hline
			Corresponding to $X_1$& $\bm{\hat\beta}$ & $\hat {\mathrm{SE}}$   & $p$-value & $\bm{\hat\beta}$ & $\hat {\mathrm{SE}}$  &  $p$-value \\
			\hline
			Noise$_1$ & 0.08 & 0.10 & 0.40 & 0.07 & 0.12  &  0.54 \\ 
			Noise$_2$ & 0.05 & 0.13 & 0.68 & 0.07 & 0.14 &  0.60 \\  
      Sepal length & -1.01 & 0.06 & <0.01 & -1.01 & 0.06  &  <0.01 \\ 
			Sepal width & 0.85 & 0.12 & <0.01 & 0.85 & 0.12  &  <0.01 \\ 
			Pedal length & -1.30 & 0.02 & <0.01 & -1.30 & 0.02 & <0.01 \\ 
			Pedal width & -1.25 & 0.02 & <0.01 & -1.25 & 0.02 & <0.01 \\ 
			\hline	
			Corresponding to $X_2$& $\bm{\hat\beta}$ & $\hat {\mathrm{SE}}$   & $p$-value & $\bm{\hat\beta}$ & $\hat {\mathrm{SE}}$  &  $p$-value \\
			\hline
			Noise$_1$ & 0.03 & 0.03 & 0.27 & 0.07 & 0.13  &  0.59 \\ 
			Noise$_2$ & -0.01 & 0.03 & 0.73 & -0.02 & 0.16 &  0.91 \\  
      Sepal length & 0.10 & 0.07 & 0.12 & 0.14 & 0.10  &  0.17 \\ 
			Sepal width & -0.65 & 0.10 & <0.01 & -0.65 & 0.10  &  <0.01 \\ 
			Pedal length & 0.28 & 0.04 & <0.01 & 0.29 & 0.004 & <0.01 \\ 
			Pedal width & 0.22 & 0.04 & <0.01 & 0.17 & 0.04 & <0.01 \\ 
			\hline	
		\end{tabular}
	\label{tb: iris}
\end{table}
\begin{table}[!htb]
	\centering
	\begin{tabular}{||c c c c||} 
	\hline
	$p$-values & setosa:versicolor & setosa:virginica & versicolor:virginica \\ [0.5ex] 
	\hline\hline
	$\mathcal S_{31}$ &0.396& 0.717 & 0.272 \\ 
	\hline
	$\mathcal S_{32}$ & 0.869 & 0.717 & 0.549 \\
	\hline
	$\mathcal S_{33}$ & 0.869 & 0.396 & 0.869 \\
	\hline
	$\mathcal S_{34}$ & 0.869 & 0.998 & 0.717 \\
	\hline

   \end{tabular}
   \caption{$p$-values for testing whether the distributions of each species is different on $\mathcal S_3$.}
   \label{tb: KS_test}
  \end{table}

\begin{figure}[!h]
	\begin{subfigure}{.32\textwidth}
	  \centering
	  \includegraphics[width=\linewidth]{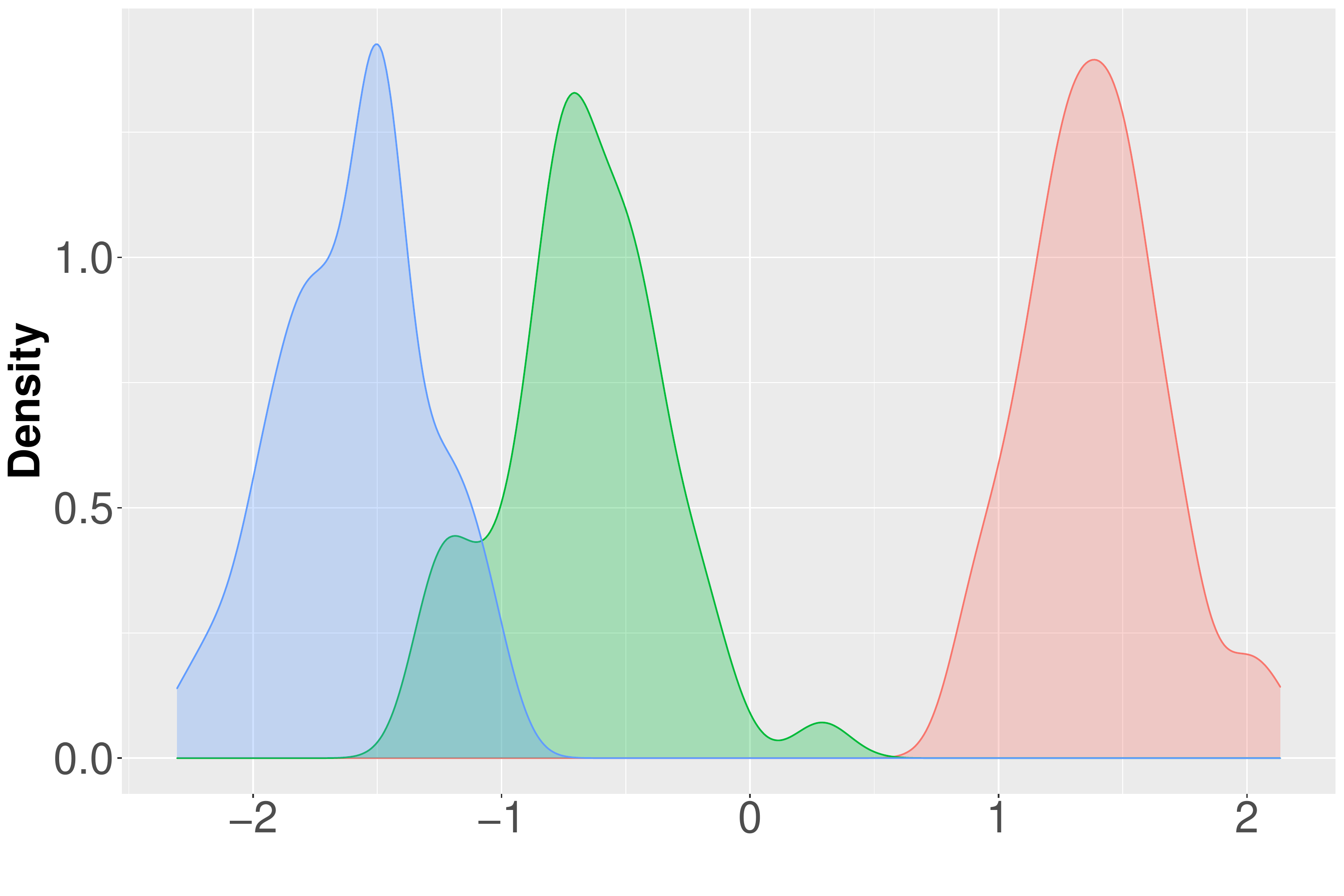}  
	  \caption{${\mathcal S}_1$}
	\end{subfigure}
	\begin{subfigure}{.32\textwidth}
	  \centering
	  \includegraphics[width=\linewidth]{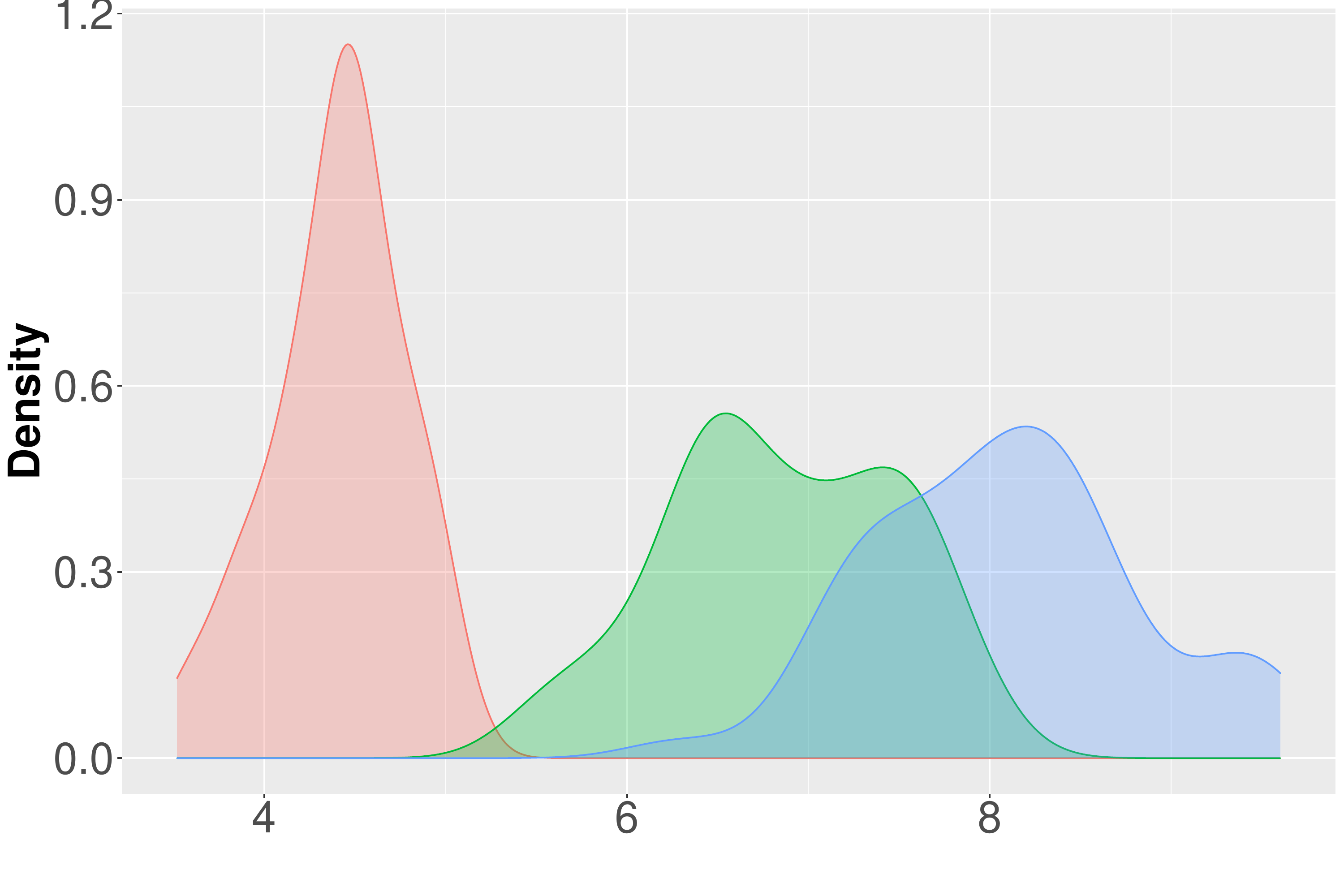}  
	  \caption{${\mathcal S}_2$}
	\end{subfigure}
	\begin{subfigure}{.32\textwidth}
		\centering
		\includegraphics[width=\linewidth]{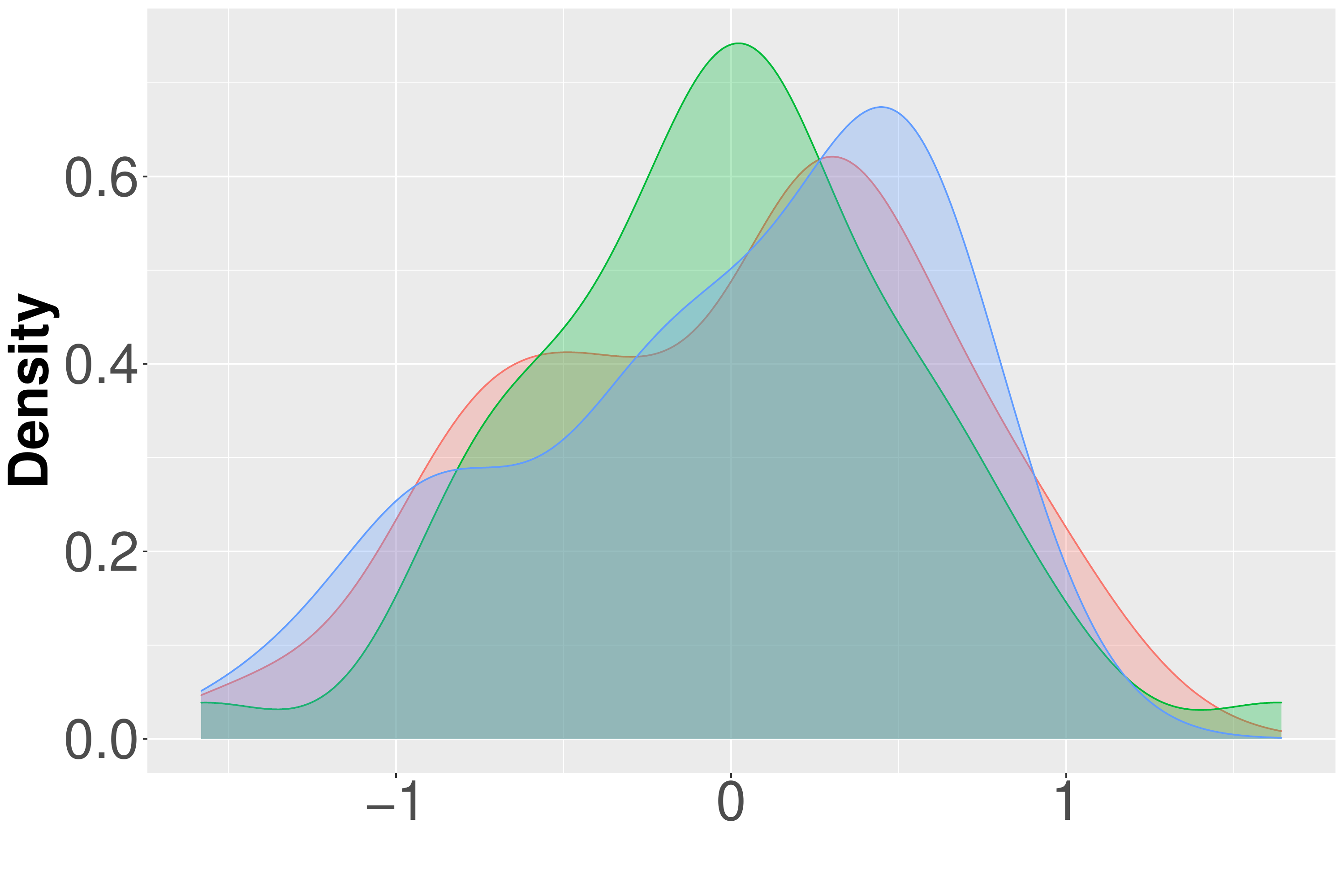}  
		\caption{$\mathcal S_{31}$}
	  \end{subfigure}
	  \begin{subfigure}{.32\textwidth}
		\centering
		\includegraphics[width=\linewidth]{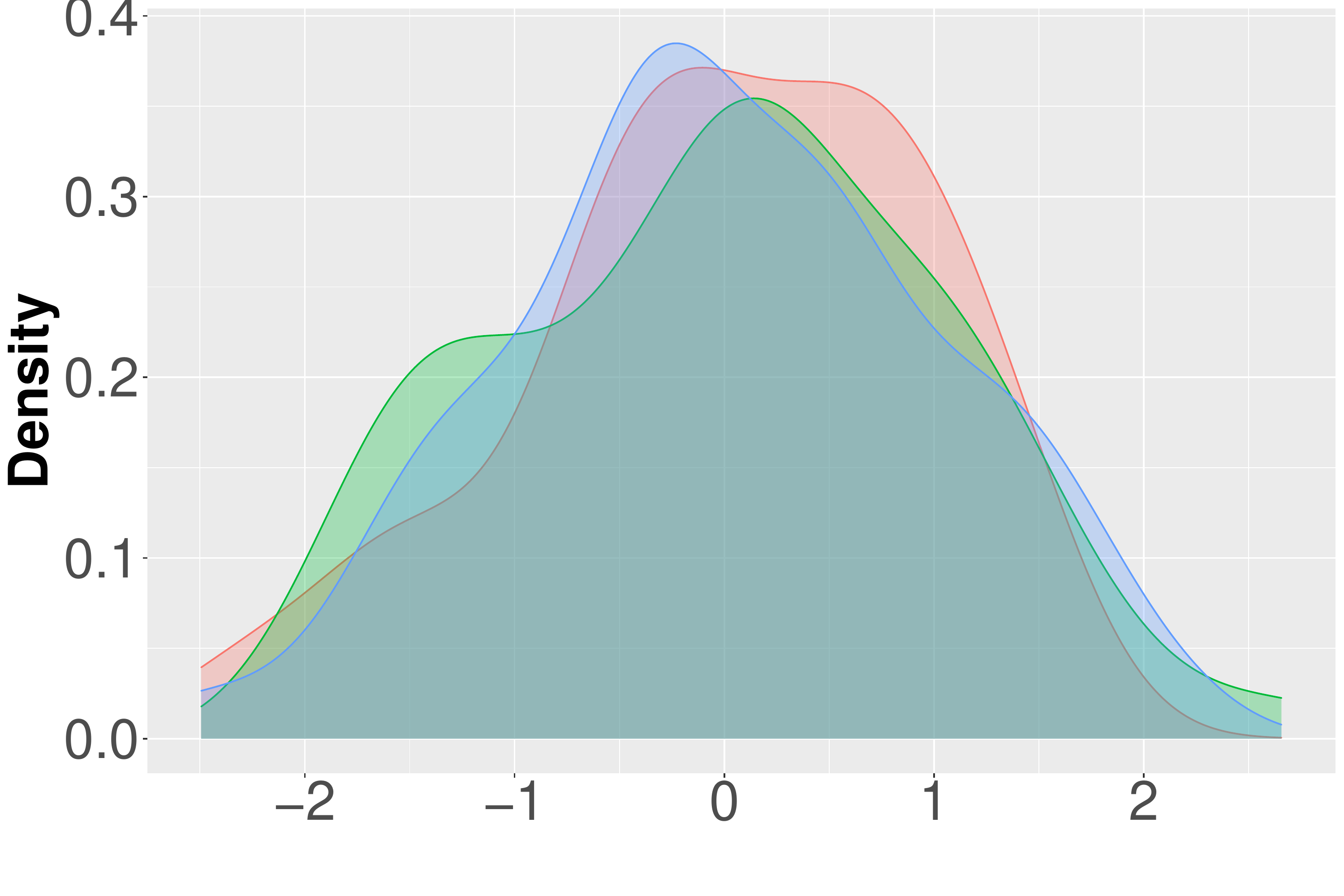}  
		\caption{$\mathcal S_{32}$}
	  \end{subfigure}
	  \begin{subfigure}{.32\textwidth}
		\centering
		\includegraphics[width=\linewidth]{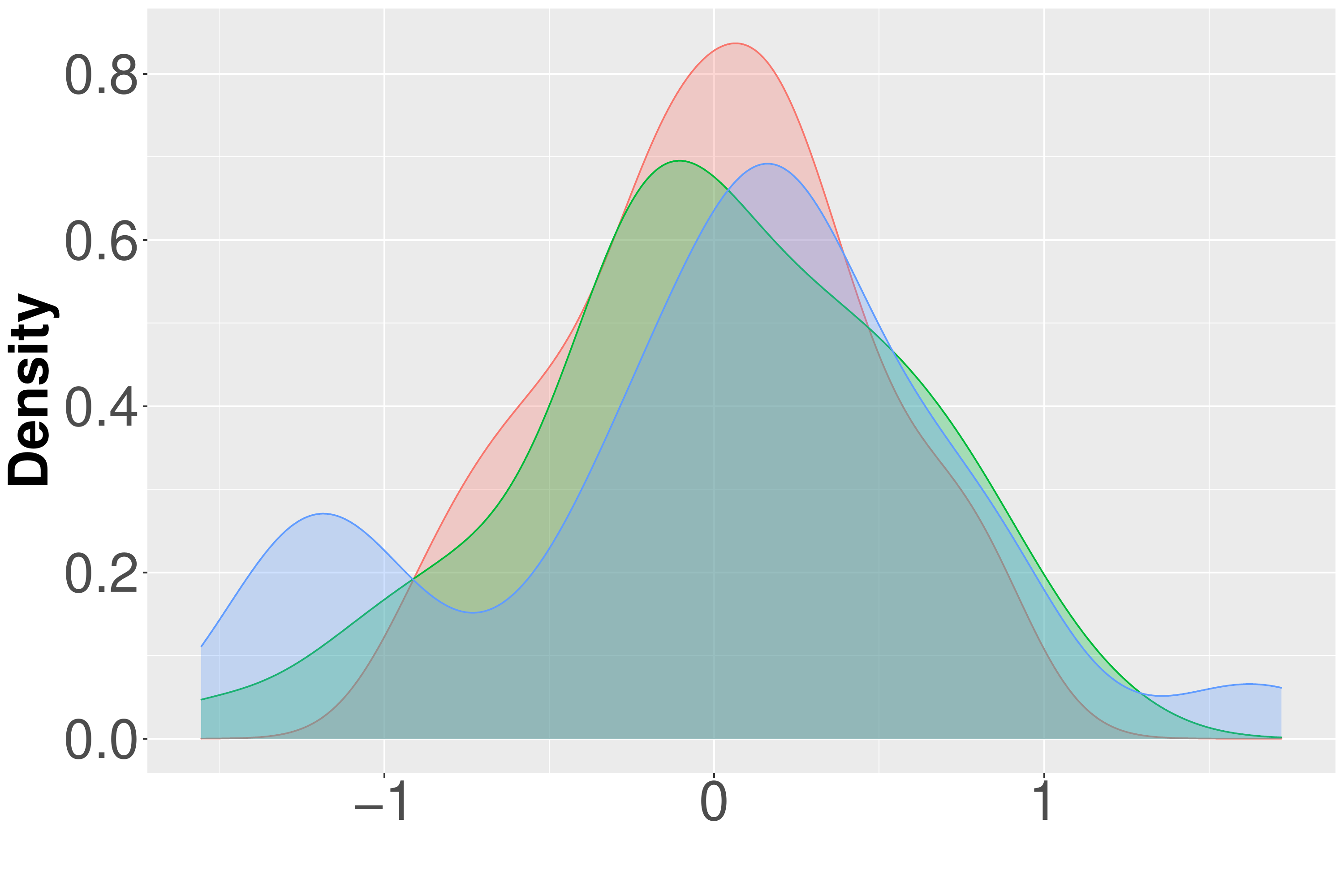}  
		\caption{$\mathcal S_{33}$}
	  \end{subfigure}
	  \begin{subfigure}{.32\textwidth}
		\centering
		\includegraphics[width=\linewidth]{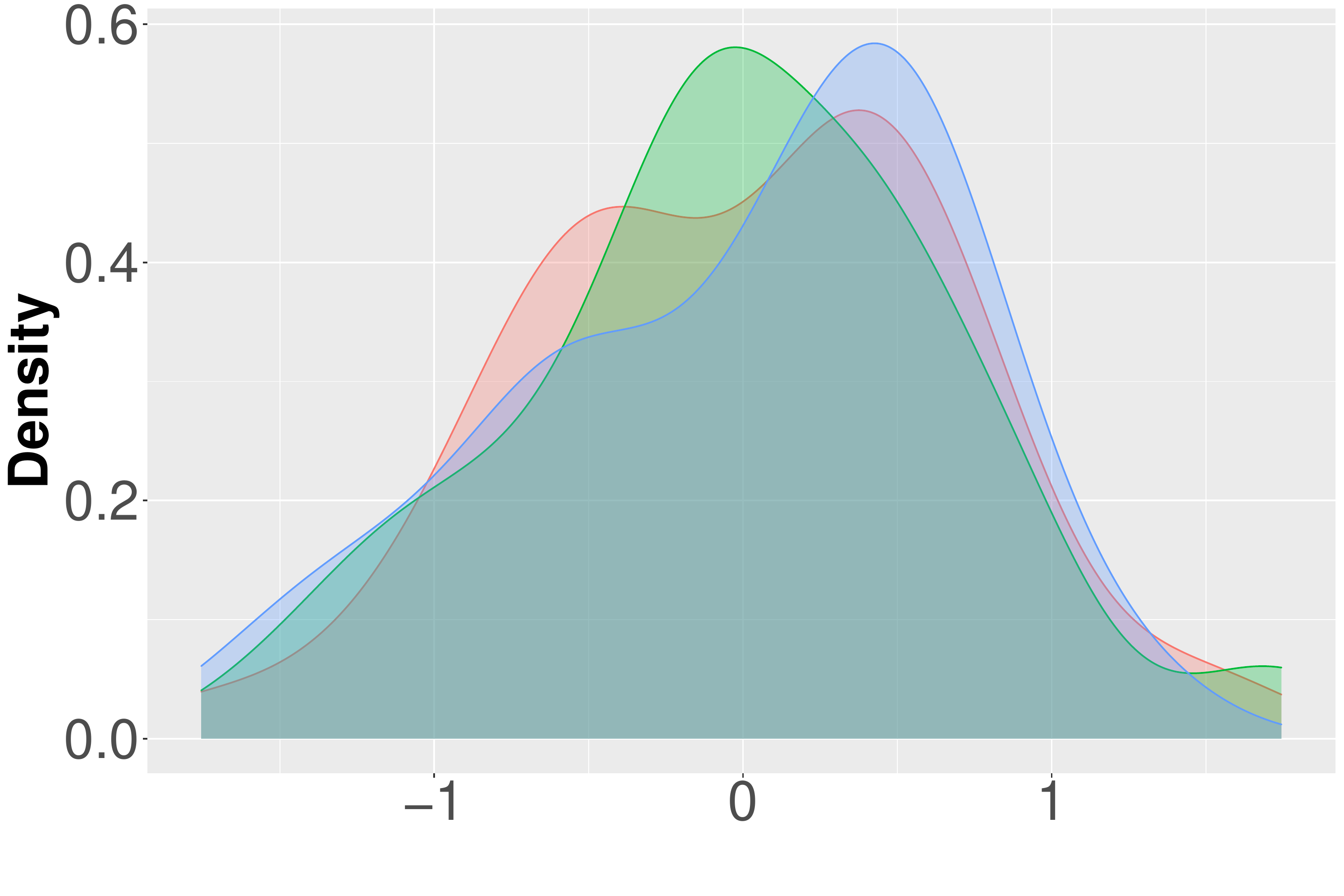}  
		\caption{$\mathcal S_{34}$}
	  \end{subfigure}
	\caption{Density plots for $(\hat{\bm\Gamma}\T\Y, \hat{\bm\Gamma}_0\T\hat\B\T\Y,\hat{\bm\Gamma}_0\T\hat\B_0\T\Y)$ with different groups, where red, blue, green stands for setosa, versicolor and virginica.}
	\label{fig: density}
\end{figure}

\begin{figure}[!h]
	\centering
	\begin{subfigure}{.4\textwidth}
	  \centering
	  \includegraphics[width=\linewidth]{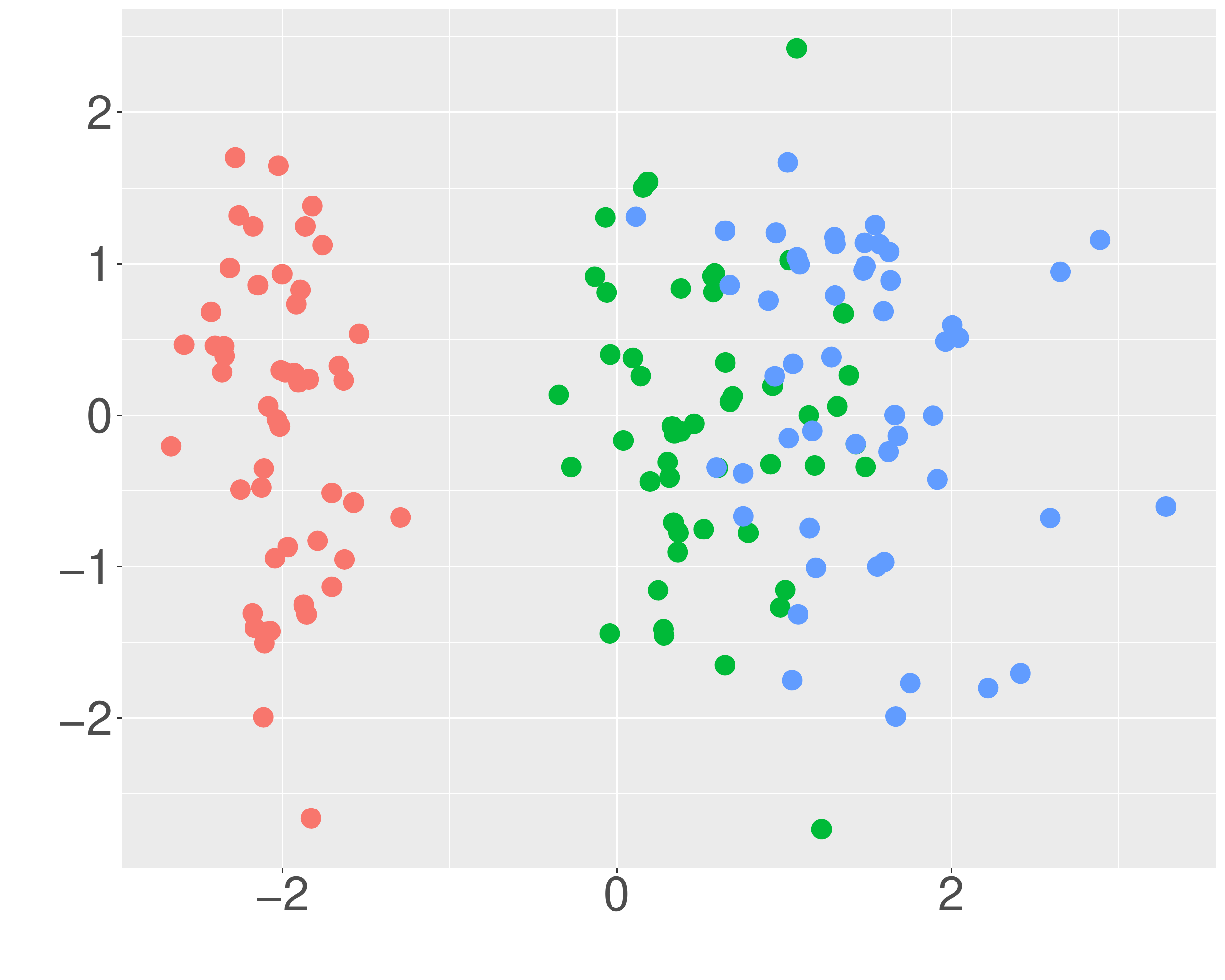}  
	\end{subfigure}
	\begin{subfigure}{.4\textwidth}
	  \centering
	  \includegraphics[width=\linewidth]{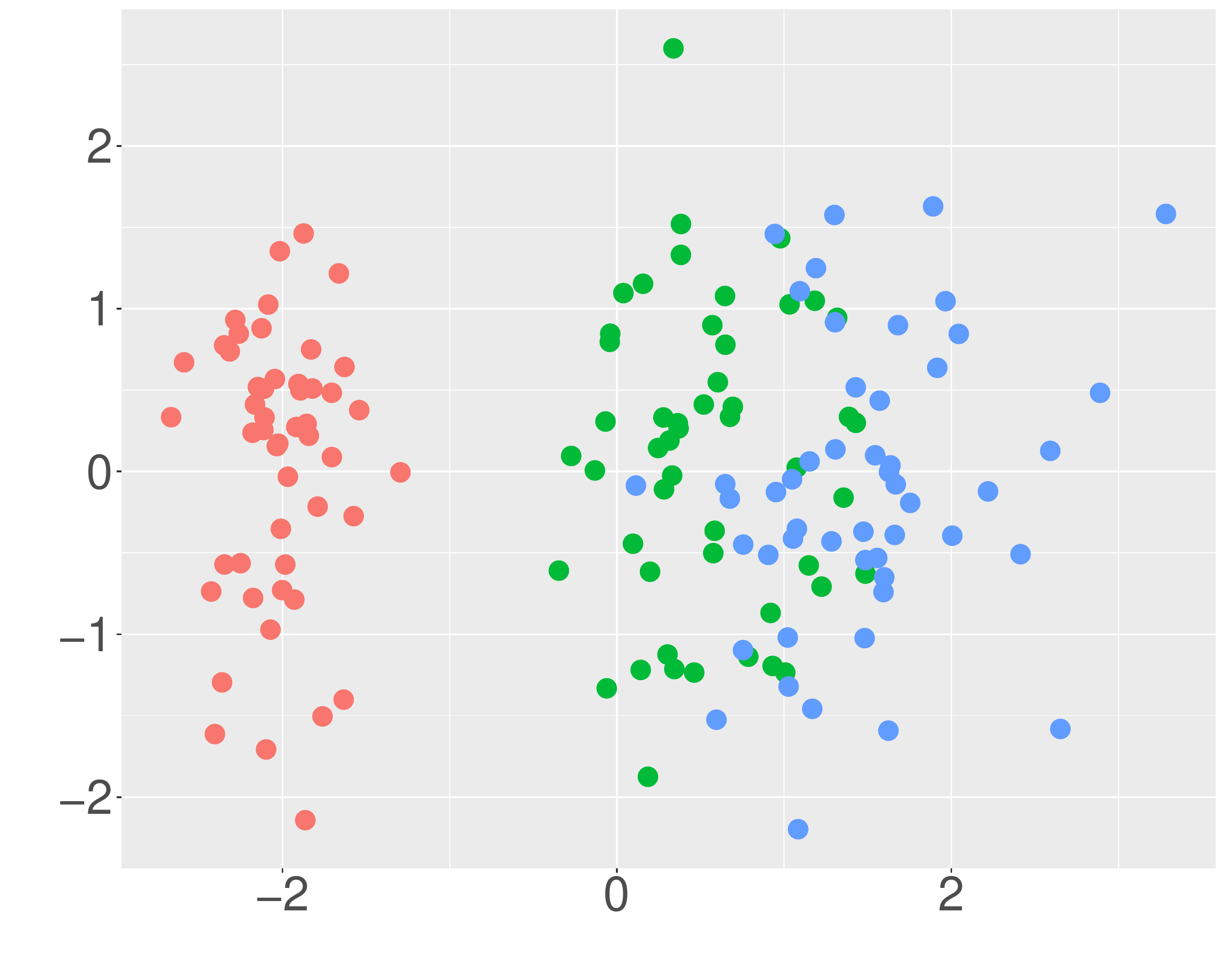}  
	\end{subfigure}

	\begin{subfigure}{.4\textwidth}
		\centering
		\includegraphics[width=\linewidth]{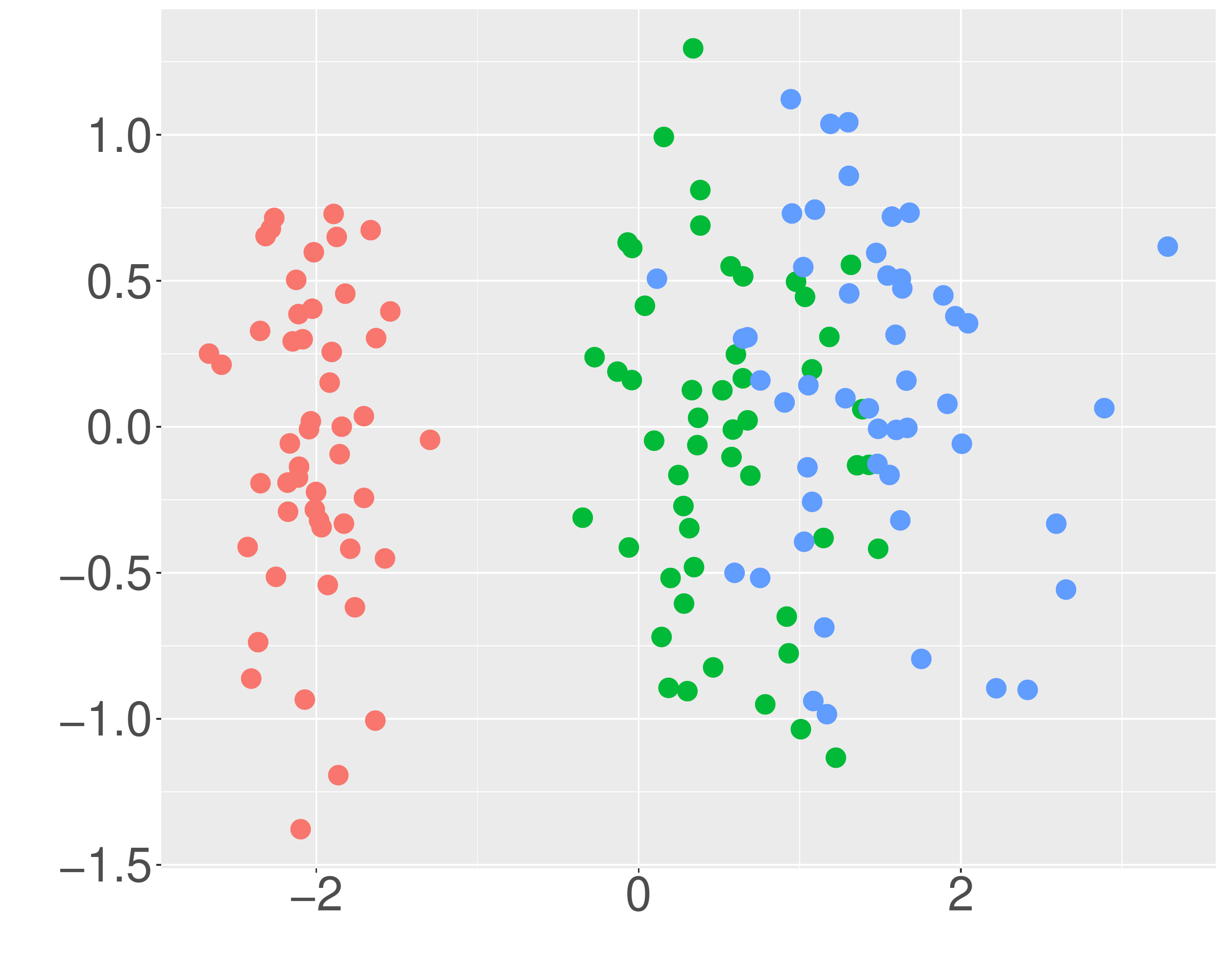}  
	  \end{subfigure} 
	  \begin{subfigure}{.4\textwidth}
		\centering
		\includegraphics[width=\linewidth]{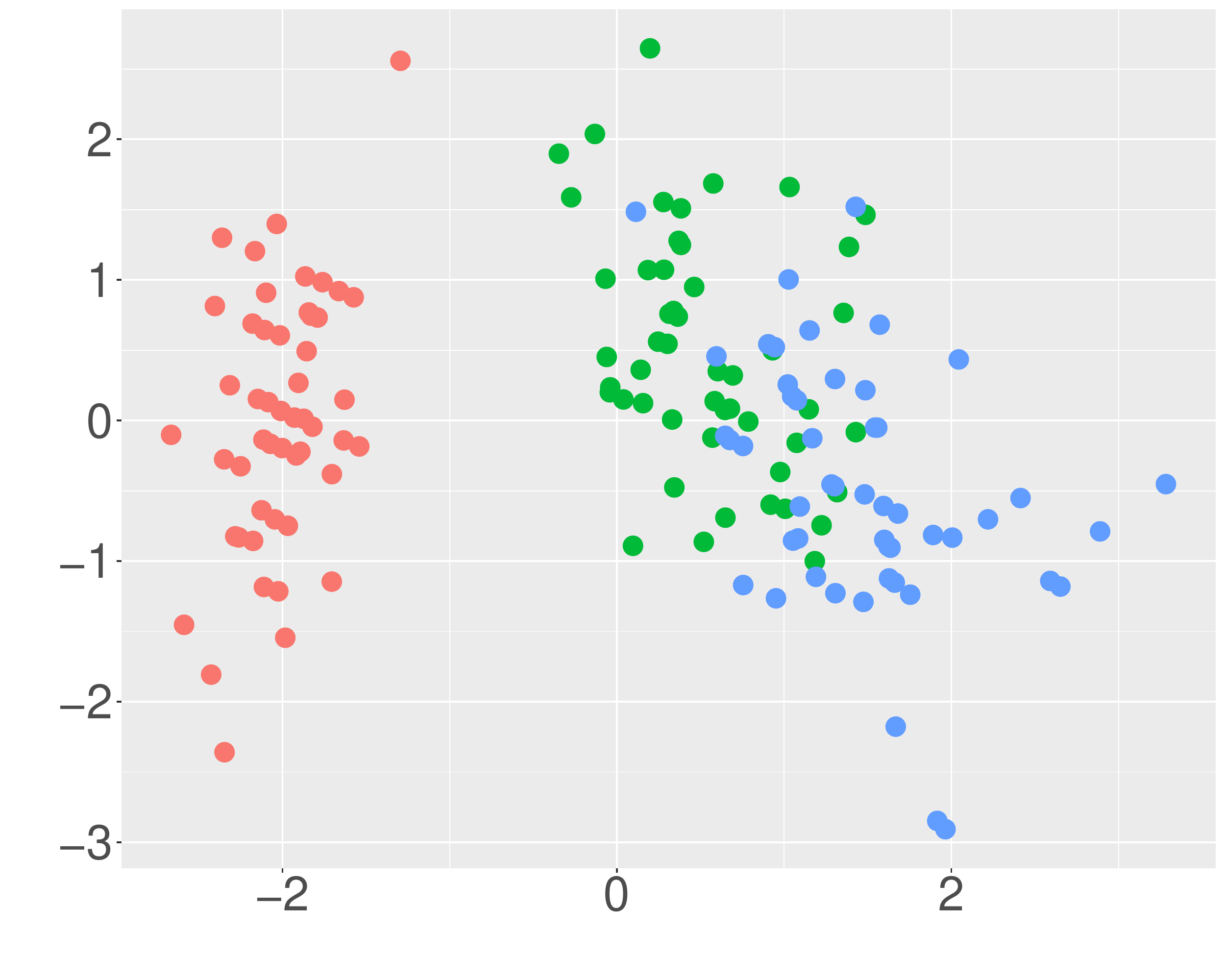}  
	  \end{subfigure}
	\caption{Scatter plots for $\mathbf P_{\mathcal S_1}\Y$ against different  $\mathbf P_{\mathcal S_3}\Y$ with different groups, where red, blue, green stands for setosa, versicolor and virginica.}
	\label{fig: scatter}
\end{figure}
\begin{figure}[!h]
	\centering
	\includegraphics[width=.4\linewidth]{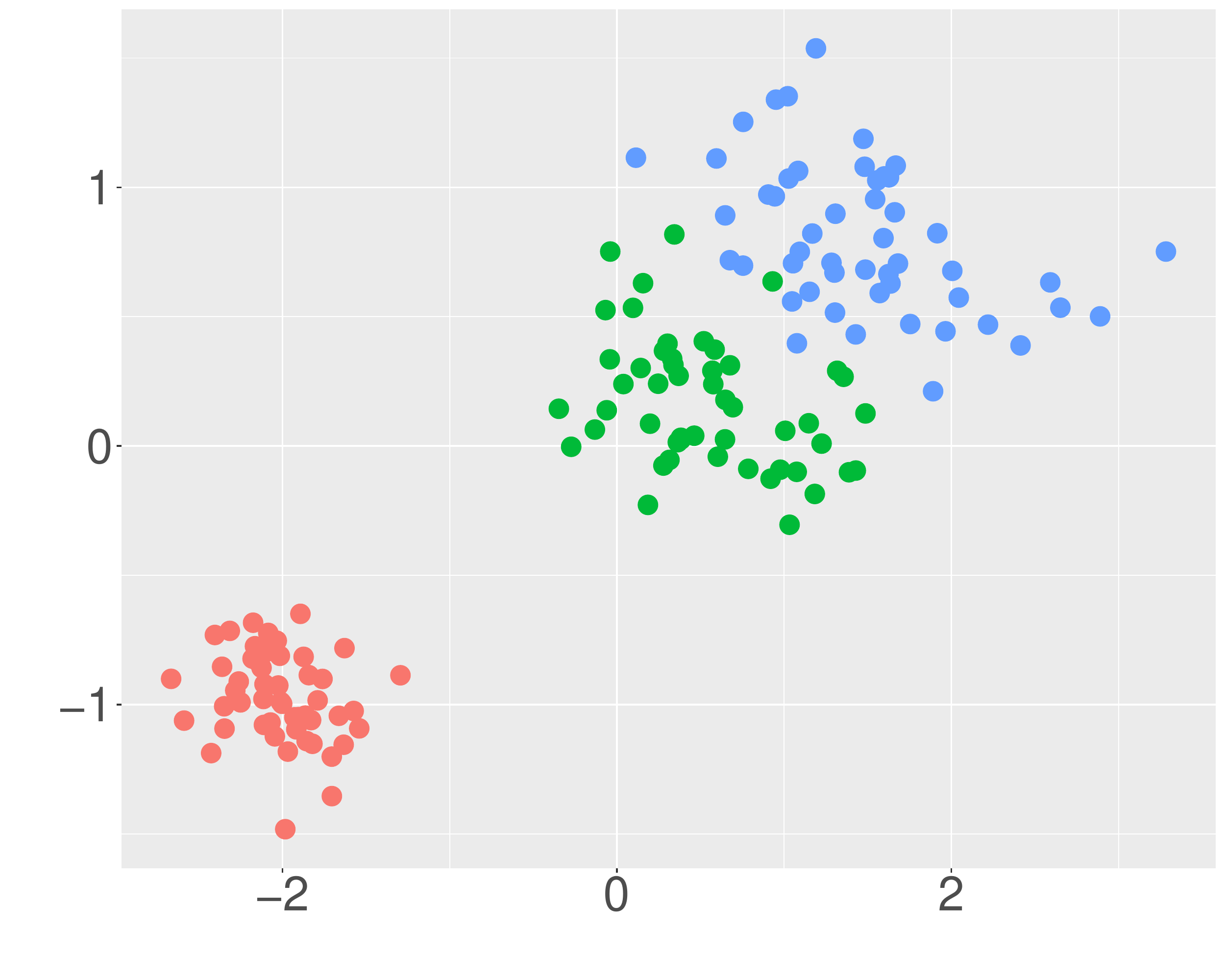}  
	\caption{Scatter plots for $\mathbf P_{\mathcal S_1}\Y$ against different  $\mathbf P_{\mathcal S_2}\Y$ with different groups, where red, blue, green stands for setosa, versicolor and virginica.}
	\label{fig: scatter2}
\end{figure}
\end{appendix}
\bibliographystyle{imsart-number} 
\bibliography{reference.bib}      
\end{document}